\chardef\mathlig@atcode\count255
\def\actively#1#2{\begingroup\uccode`\~=`#2\relax\uppercase{\endgroup#1~}}
\def\mathlig@gobble{\afterassignment\mathlig@next@cmd\let\mathlig@next= }
\def\mathlig@delim{\mathlig@delim}
\def\mathlig@defcs#1{\expandafter\def\csname#1\endcsname}
\def\mathlig@let@cs#1#2{\expandafter\let\expandafter#1\csname#2\endcsname}
\def\mathlig@appendcs#1#2{\expandafter\edef\csname#1\endcsname{\csname#1\endcsname#2}}
\def\mathlig#1#2{\mathlig@checklig#1\mathlig@end\mathlig@defcs{mathlig@back@#1}{#2}\ignorespaces}
\def\mathlig@checklig#1#2\mathlig@end{%
 \expandafter\ifx\csname mathlig@forw@#1\endcsname\relax
 \expandafter\mathchardef\csname mathlig@back@#1\endcsname=\mathcode`#1%
 \mathcode`#1"8000\actively\def#1{\csname mathlig@look@#1\endcsname}%
 \mathlig@dolig#1\mathlig@delim
\fi
\mathlig@checksuffix#1#2\mathlig@end
}
\def\mathlig@checksuffix#1#2\mathlig@end{%
\ifx\mathlig@delim#2\mathlig@delim\relax\else\mathlig@checksuffix@{#1}#2\mathlig@end\fi
}
\def\mathlig@checksuffix@#1#2#3\mathlig@end{%
\expandafter\ifx\csname mathlig@forw@#1#2\endcsname\relax\mathlig@dosuffix{#1}{#2}\fi
\mathlig@checksuffix{#1#2}#3\mathlig@end
}
\def\mathlig@dosuffix#1#2{%
\mathlig@appendcs{mathlig@toks@#1}{#2}%
\mathlig@dolig{#1}{#2}\mathlig@delim
}
\def\mathlig@dolig#1#2\mathlig@delim{%
 \mathlig@defcs{mathlig@look@#1#2}{%
 \mathlig@let@cs\mathlig@next{mathlig@forw@#1#2}\futurelet\mathlig@next@tok\mathlig@next}%
 \mathlig@defcs{mathlig@forw@#1#2}{%
  \mathlig@let@cs\mathlig@next{mathlig@back@#1#2}%
  \mathlig@let@cs\checker{mathlig@chck@#1#2}%
  \mathlig@let@cs\mathligtoks{mathlig@toks@#1#2}%
  \expandafter\ifx\expandafter\mathlig@delim\mathligtoks\mathlig@delim\relax\else
  \expandafter\checker\mathligtoks\mathlig@delim\fi
  \mathlig@next
 }%
 \mathlig@defcs{mathlig@toks@#1#2}{}%
 \mathlig@defcs{mathlig@chck@#1#2}##1##2\mathlig@delim{%
  \ifx\mathlig@next@tok##1%
   \mathlig@let@cs\mathlig@next@cmd{mathlig@look@#1#2##1}\let\mathlig@next\mathlig@gobble
  \fi 
  \ifx\mathlig@delim##2\mathlig@delim\relax\else
   \csname mathlig@chck@#1#2\endcsname##2\mathlig@delim
  \fi
 }%
%
 \ifx\mathlig@delim#2\mathlig@delim\else
  \mathlig@defcs{mathlig@back@#1#2}{\csname mathlig@back@#1\endcsname #2}%
 \fi
}%
\newcommand{\muspace}{\mspace{1mu}}
\DeclareRobustCommand{\scond}{\mathchoice{\muspace\vert\muspace}{\vert}{\vert}{\vert}}
\DeclareRobustCommand{\discint}{\mathchoice{\mspace{-1.5mu}:\mspace{-1.5mu}}{\mspace{-1.5mu}:\mspace{-1.5mu}}{:}{:}}
\newcommand{\suchthat}{\mathchoice{\colon}{\colon}{:\mspace{1mu}}{:}}
\newcommand{\Ac}{\mathcal{A}}
\newcommand{\Wv}{{\bf W}}
\newcommand{\kt}{{\tilde{k}}}
\def\d{\delta}
\def\e{\epsilon}
\def\eps{\epsilon}
\DeclareMathOperator\E{\mathsf{E}}
\let\P\relax
\DeclareMathOperator\P{\mathsf{P}}
\newcommand{\Bern}{\mathrm{Bern}}
\def\textiid{i.i.d.\@\xspace}
\newcommand\iid{\ifmmode\text{ i.i.d. } \else \textiid \fi}
\newcommand*{\defeq}{\mathrel{\vcenter{\baselineskip0.5ex \lineskiplimit0pt\hbox{\scriptsize.}\hbox{\scriptsize.}}}%
                     =}
\def\mathllap{\mathpalette\mathllapinternal}
\def\mathllapinternal#1#2{%
  \llap{$\mathsurround=0pt#1{#2}$}}
\def\clap#1{\hbox to 0pt{\hss#1\hss}}
\def\mathclap{\mathpalette\mathclapinternal}
\def\mathclapinternal#1#2{%
  \clap{$\mathsurround=0pt#1{#2}$}}
\let\oldstackrel\stackrel
\renewcommand{\stackrel}[2]{\oldstackrel{\mathclap{#1}}{#2}}
\DeclarePairedDelimiterX{\divx}[2]{(}{)}{%
  #1\delimsize\|#2%
}
\def\dtv{\delta}
\renewcommand{\hbar}{h\mathllap{\overline{\vphantom{h}\hphantom{\rule{4.6pt}{0pt}}}\mspace{0.77mu}}}
\newcommand{\urltilde}{\kern -.06em\lower -.06em\hbox{~}\kern .02em}
\newtheorem{thm}{Theorem}
\newtheorem{prop}{Proposition}
\newtheorem{lem}{Lemma}
\def\erdos{Erd\H os}
\def\poi{\mathrm{Poi}}
\def\renyi{R\'enyi }
\def\dtv{d_\mathrm{TV}}
\theoremstyle{definition}
\newcommand{\indi}{\mathbbm{1}}
\newtheorem{rem}{Remark}
\newcommand{\BB}{{\bf B}}
\newcommand{\Qv}{{\bf Q}}
\newcommand{\pv}{{\bf p}}
\newcommand{\Thetav}{{\bf \Theta}}
\newcommand{\Lap}{\mathrm{L}}
\newcommand{\bern}{\mathrm{Bern}}
\def\Gnm{{\mathcal{G}_{n,m}}}
\def\Qm{Q_\mathrm{max}}
\def\E{{\mathsf E}}
\def\P{{\mathsf P}}
\def\Gnm{{\mathcal{G}_{n,m}}}
\def\lowerBC{{\underline{\Sigma}}}
\def\upperBC{{\overline{\Sigma}}}
\def\PQP{\mathbf{p}^T\mathbf{Q}\mathbf{p}}
\def\PsQP{\mathbf{p}^T\mathbf{Q^*}\mathbf{p}}
\def\PhQP{\mathbf{p}^Th(f(n)\mathbf{Q})\mathbf{p}}
\DeclarePairedDelimiter\ceil{\lceil}{\rceil}
\DeclarePairedDelimiter\floor{\lfloor}{\rfloor}
\title{Universal Graph Compression:\\
Stochastic Block Models}
\author{Alankrita~Bhatt$^*$,
        Ziao~Wang$^*$,
	Chi~Wang,
        and~Lele~Wang
\thanks{This work was supported in part by the NSF Center for Science of Information postdoctoral fellowship, in part by the NSERC Discovery Grant RGPIN-2019-05448, and in part by the NSERC Discovery Launch Supplement DGECR-2019-00447. This work was presented in part at the 2021 IEEE International Symposium on Information Theory.}
\thanks{$*$ Alankrita Bhatt and Ziao Wang contributed equally to this work.}
\thanks{Alankrita Bhatt is with the California Institute of Technology, Pasadena, CA, 91125, USA (email: abhatt@caltech.edu).}
\thanks{Ziao Wang is with the Department of Electrical and Computer Engineering, University of British Columbia, Vancouver, BC V6T1Z4, Canada (email: ziaow@ece.ubc.ca).}
\thanks{Chi Wang is with Microsoft Research, Redmond, WA 98052, USA (wang.chi@microsoft.com).}
\thanks{Lele Wang is with the Department of Electrical and Computer Engineering, University of British Columbia, Vancouver, BC V6T1Z4, Canada (email: lelewang@ece.ubc.ca).}}
\begin{document}
\captionsetup[figure]{labelfont={rm},labelformat={default},labelsep=period,name={Fig.}}
\captionsetup[table]{labelformat={default},font={sc},name={TABLE},labelsep={none}}

\maketitle
\begin{abstract}
Motivated by the prevalent data science applications of processing large-scale graph data such as social networks and biological networks, this paper investigates lossless compression of data in the form of a labeled graph. Particularly, we consider a widely used random graph model, stochastic block model (SBM), which captures the clustering effects in social networks. An information-theoretic universal compression framework is applied, in which one aims to design a \emph{single} compressor that achieves the asymptotically optimal compression rate, for every SBM distribution, without knowing the parameters of the SBM. Such a graph compressor is proposed in this paper, which universally achieves the optimal compression rate with polynomial time complexity for a wide class of SBMs.
Existing universal compression techniques are developed mostly for \emph{stationary ergodic} one-dimensional sequences. However, the adjacency matrix of SBM has complex two-dimensional correlations. The challenge is alleviated through a carefully designed transform that converts two-dimensional correlated data into \emph{almost} i.i.d. submatrices. The sequence of submatrices is then compressed by a Krichevsky--Trofimov compressor, whose length analysis is generalized to identically distributed but arbitrarily correlated sequences. In four benchmark graph datasets, the compressed files from competing algorithms take 2.4 to 27 times the space needed by the proposed scheme.
\end{abstract}
\begin{IEEEkeywords}
Compression algorithms, source coding, graph theory.
\end{IEEEkeywords}

\section{Introduction}
\IEEEPARstart{I}{n} many data science applications, data appears in the form of large-scale graphs. For example, in social networks, vertices represent users and an edge between vertices represents friendship; in the World Wide Web, vertices are websites and edges indicate the hyperlinks from one site to the other; in biological systems, vertices can be proteins and edges illustrate protein-to-protein interaction. Such graphs may contain billions of vertices. In addition, edges tend to be correlated with each other since, for example, two people sharing many common friends are likely to be friends as well. How to efficiently compress such large-scale structural information to reduce the I/O and communication costs in storing and transmitting such data is a persisting  challenge in the era of big data.

The problem of graph compression has been studied in different communities, following various different methodologies. Several methods exploited combinatorial properties such as cliques and cuts in the graph~\cite{Rossi--Zhou2018,Lim--Kang--Faloutsos2014}. Many works targeted at domain-specific graphs such as web graphs~\cite{Boldi--Vigna2004}, biological networks~\cite{Conway--Thomas--Bromage--Andrew2011,Hayashida--Akutsu2010}, and social networks~\cite{Chierichetti--Kumar--Lattanzi--Mitzenmacher--Panconesi--Raghavan2009}. Various representations of graphs were proposed, such as the text-based method, where the neighbour list of each vertex is treated as a ``word''~\cite{Navarro2007,Sadakane2003}, and the $k^2$-tree method, where the adjacency matrix is recursively partitioned into $k^2$ equal-size submatrices~\cite{Brisaboa--Ladra--Navarro2009}.  \emph{Succinct} graph representations that enable certain types of fast computation, such as adjacency query or vertex degree query, were also widely studied~\cite{Farzan--Munro2013}. We refer the readers to~\cite{Besta--Hoefler2018} for a survey on lossless graph compression and space-efficient graph representations.

In this paper, we take an information theoretic approach to study lossless compression of graphs with vertex labels.
We assume the graph is generated by some random graph model and investigate lossless compression schemes that achieve the theoretical limit, i.e., the entropy of the graph, asymptotically as the number of vertices goes to infinity. When the underlying distribution/statistics of the random graph model is known, optimal lossless compression can be achieved by methods like Huffman coding. However, in most real-world applications, the exact distribution is usually hard to obtain and the data we are given is a single realization of this distribution. This motivates us to consider the framework of \emph{universal compression}, in which we assume the underlying distribution belongs to a known family of distributions and require that the encoder and the decoder should not be a function of the underlying distribution. The goal of universal compression is to design a single compression scheme that universally achieves the optimal theoretical limit, for every distribution in the family, without knowing which distribution generates the data. For this paper, we focus on the family of \emph{stochastic block models}, which are widely used random graph models that capture the clustering effect in social networks. Our goal is to develop a universal graph compression scheme for a family of stochastic block models with as wide a range of parameters as possible.

How to design a computationally efficient universal compression scheme is a fundamental question in information theory. In the past several decades, a large number of universal compressors were proposed for one-dimensional sequences with fixed alphabet size, whose entropy is linear in the number of variables. Prominent results include the Laplace  and Krichevsky--Trofimov (KT) compressors for i.i.d.\! processes~\cite{Laplace1995, Krichevsky--Trofimov1981}, Lempel--Ziv compressor~\cite{Lempel--Ziv77,Lempel--Ziv78} and Burrows--Wheeler transform\cite{effros2002universal} for stationary ergodic processes, and context tree weighting~\cite{willems1995context} for finite memory processes. Many of these have been adopted in standard data compression applications such as \texttt{compress}, gzip, GIF, TIFF, and bzip2. Despite these exciting developments, existing universal compression techniques fall short of establishing optimality results for graph data due to the following challenges. Firstly, graph data generated from a stochastic block model has non-stationary two-dimensional correlations, so existing techniques do not immediately apply here. Secondly, in many practical applications, where the graph is sparse, the entropy of the graph may be sublinear in the number of entries in the adjacency matrix.

For the first challenge, a natural question arising is: can we convert the two-dimensional adjacency matrix of the graph into a one-dimensional sequence in some order and apply a universal compressor for the sequence? For some simple graph models such as \erdos--\renyi graph, where each edge is generated i.i.d.\! with probability $p$, this would indeed work. For more complex graph models including stochastic block models, it is unclear whether there is an ordering of the entries that results in a stationary process. We will show in Section~\ref{sec:C1discussion}~\footnote{See also Section~\ref{sec:C1discussion} for a discussion on why several natural encoding schemes such as first performing community detection and then compressing, or run-length based coding schemes also do not work in the current setting.} several orders including row-by-row, column-by-column, and diagonal-by-diagonal fail to produce a stationary process.

We alleviate these challenges by designing a decomposition of the adjacency matrix into submatrices. We then show in Proposition~\ref{prop:blkIndep} that with a carefully chosen parameter, the submatrix decomposition converts two-dimensional correlated entries into a sequence of \emph{almost} i.i.d.\! submatrices with slowly growing alphabet size. To address the second challenge, we introduce a new definition of universality that accommodates data with an unknown leading order in its entropy expression. The new universality normalizes the compression length by the entropy of data, requiring it to go to one, as compared to normalizing by the number of variables in the standard definition. Normalizing by the number of variables, implicitly assumes that the entropy is linear in $n$.

{\bf Notation.} 
For an integer $n$, let $[n] = \{1,2,\ldots,n\}$. Let $\log(\cdot) = \log_2(\cdot)$. We follow the standard order notation: $f(n) = O(g(n))$ if $\lim_{n \to \infty} \frac{|f(n)|}{g(n)} < \infty$; $f(n) = \Omega(g(n))$ if $\lim_{n \to \infty}\frac{f(n)}{g(n)} >0$; $f(n) = \Theta(g(n))$ if $f(n) = O(g(n))$ and $f(n) = \Omega(g(n))$; $f(n) = o(g(n))$ if $\lim_{n \to \infty}\frac{f(n)}{g(n)} = 0$; $f(n) = \omega(g(n))$ if $\lim_{n \to \infty}\frac{|f(n)|}{|g(n)|} = \infty$; and $f(n) \sim g(n)$ if $\lim_{n \to \infty}\frac{f(n)}{g(n)} = 1$. For $0 \le p \le 1$, let $h(p) = -p\log(p) - (1-p) \log(1-p)$ denote the binary entropy function. For a matrix $W$ with entries in $[0,1]$, let $h(W)$ be a matrix of the same dimension whose $(i,j)$ entry is $h(W_{ij})$.

\subsection{Problem Setup}\label{sec:setup}
{\bf Universal graph compressor.} For simplicity, we focus on simple (undirected, unweighted, no self-loop) graphs with labeled vertices in this paper. But our compression scheme and the corresponding analysis can be extended to more general graphs. Let $\Ac_n$ be the set of all labeled simple graphs on $n$ vertices. Let $\{0,1\}^i$ be the set of binary sequences of length $i$, and set $\{0,1\}^* = \cup_{i=0}^\infty\{0,1\}^i$. A lossless graph compressor $C\suchthat \Ac_n \to \{0,1\}^*$ is a one-to-one function that maps a graph to a binary sequence. Let $\ell(C(A_n))$ denote the length of the output sequence. When $A_n$ is generated from a distribution, it is known that the entropy $H(A_n)$  is a fundamental lower bound on the expected length of any lossless compressor~\cite[Theorem 10.5]{Polyanskiy--Wu14}
\begin{align}\label{eq:losslessLB}
    H(A_n)-\log(e(H(A_n)+1)) \le \E[\ell(C(A_n))],
\end{align} 
and therefore 
\[
 \liminf_{n \to \infty}\frac{\E[\ell(C(A_n))]}{H(A_n)} \ge 1.
\]
Thus, a graph compressor is said to be \emph{universal} for the family of distributions $\mathscr{P}$ if for all distribution $\P \in \mathscr{P}$ and $A_n \sim \P$, we have
\begin{equation}
 \limsup_{n \to \infty}\frac{\E[\ell(C(A_n))]}{H(A_n)} = 1.
\end{equation}

{\bf Stochastic block model.} A stochastic block model $\mathrm{SBM}(n,L,\pv,\Wv)$ defines a probability distribution over $\Ac_n$. Here $n$ is the number of vertices, $L$ is the number of communities. Each vertex $i \in [n]$ is associated with a community assignment $X_i \in [L]$. The length-$L$ column vector $\pv=(p_1,p_2,\ldots,p_L)^T$ is a probability distribution over $[L]$, where $p_i$ indicates the probability that any vertex is assigned community $i$. $\Wv$ is an $L\times L$ symmetric matrix, where $W_{ij}$ represents the probability of having an edge between a vertex with community assignment $i$ and a vertex with community assignment $j$. We say $A_n \sim \mathrm{SBM}(n,L,\pv,\Wv)$ if the community assignments $X_1,X_2,\ldots,X_n$ are generated i.i.d.\! according to $\pv$ and for every pair $1 \le  i<j \le n$, an edge is generated between vertex $i$ and vertex $j$ with probability $W_{X_i,X_j}$. In other words, in the adjacency matrix $A_n$ of the graph, $A_{ij} \sim \Bern(W_{X_i,X_j})$ for $i <j$; the diagonal entries $A_{ii} = 0$ for all $i \in [n]$; and $A_{ij} = A_{ji}$ for $i > j$. We write $\Wv = f(n)\Qv$, where $\Qv$ is an $L\times L$ symmetric matrix with $\max_{i,j}Q_{i,j}=\Theta(1)$. We assume all entries in $\pv$ are non-zero constants, and $L = \Theta(1)$. Furthermore, to make sure that the model is non-trivial, we assume that all entries in $W$ are at most $1$, and at least one entry is strictly less than $1$. Also, because we can always scale the function $f(n)$, we assume without generality that $\max_{i,j}Q_{i,j}<1$. We will consider two families of stochastic block models: For $0 < \e < 1$,
\begin{align}
 \mathscr{P}_1(\eps)\suchthat &\mathrm{SBM}(n,L,\pv,\Wv)\nonumber\\ &\text{ with } f(n) = O(1), f(n)= \Omega\left(\frac{1}{n^{2-\eps}}\right),\label{eq:P1family}\\
 \mathscr{P}_2(\eps)\suchthat &\mathrm{SBM}(n,L,\pv,\Wv)\nonumber \\ &\text{ with } f(n) = o(1), f(n)= \Omega\left(\frac{1}{n^{2-\eps}}\right).\label{eq:P2family}
\end{align}
Note that the edge probability $\frac{1}{n^2}$ is the threshold for a random graph to contain an edge with high probability~\cite{Frieze--Karonski2015}. Thus, the family $\mathscr{P}_1(\e)$ covers most non-trivial SBM graphs. Clearly, $\mathscr{P}_2(\eps)$ is a strict subset of $\mathscr{P}_1(\eps)$, as it does not contain the constant regime $f(n) = 1$.

{\bf Minimax redundancy.} Besides universality, which only concerns the first order terms in the expected length of the compressor and the entropy of the graph, the redundancy of a universal compressor is another important metric we wish to optimize in the design of universal graph compressors. Define the redundancy of a lossless compressor $C$ for $A_n \sim \P$ as
\begin{equation}
R(C,A_n) = \E[\ell(C(A_n))] - H(A_n).
\end{equation}
Then the minimax redundancy over a distribution family $\mathscr{P}$ is defined as
\begin{equation}
    R^*_n(\mathscr{P}) = \inf_{C} \sup_{\P \in \mathscr{P}} R(C,A_n).
\end{equation}
The minimax redundancy represents the lowest achievable redundancy of any lossless compressor for the family $\mathscr{P}$. 
Define the family of SBM in regime $f(n)$ with maximum entry $Q_\mathrm{max}$ in $\Qv$ as
\begin{align}
    \mathscr{P}_3(f(n),Q_\mathrm{max}): &\mathrm{SBM}(n,L,\pv,f(n)\Qv)\nonumber\\  &\text{ with } Q_{i,j} \le Q_\mathrm{max} \text{ for all } i,j \in [n].
\end{align}
We comment that if $f(n)$ satisfies $f(n)=o(1)$ and $f(n)=\Omega(\frac{1}{n^{2-\epsilon}})$ for some $0<\epsilon<1$, then $\mathscr{P}_3(f(n),Q_\mathrm{max})$ is a subset of $\mathscr{P}_2(\epsilon)$ with fixed regime $f(n)$ and entries in $\Qv$ upper bounded by $Q_\mathrm{max}$.  Hence, it is also a subset of $\mathscr{P}_1(\epsilon)$. In this work, we will study the minimax redundancy over the family $\mathscr{P}_3(f(n),Q_\mathrm{max})$ instead of $\mathscr{P}_1(\epsilon)$ or $\mathscr{P}_2(\epsilon)$. The reason is that both the expected length of the compressor and the entropy of the graph scale with $f(n)$ and $Q_\mathrm{max}$. By fixing $f(n)$ and $Q_\mathrm{max}$ instead of taking supremum over them, we can better understand the minimax redundancy of SBMs in each regime.

\subsection{Literature in the information-theoretic framework}
Under the information-theoretic framework, lossless graph compression has been studied in~\cite{Turan1984,Naor1990,Lempel--Ziv86,Choi--Szpankowski2012,Abbe16,Asadi--Abbe--Verdu2017,Kieffer--Yang--Szpankowski2009,Golundefinedbiewski--Magner--Szpankowski2018,Magner--Turowski--Szpankowski2018,Luczak--Magner--Szpankowski2019,Turowski--Magner--Szpankowski2018,Kontoyiannis--Lim--Papakonstantinopoulou--Szpankowski2021,Zhang--Yang--Kieffer2014,Ganardi--Hucke--Lohrey--Seelbach-Benkner2019,Delgosha--Anatharam2019tit,Delgosha--Anatharam2020,Delgosha--Anatharam2021, Peixoto2013,Peixoto2014,Peixoto2015,Peixoto2017} and lossy compression of directed graphs was studied in~\cite{Bustin--Shayevitz21}.  In~\cite{Choi--Szpankowski2012,Kontoyiannis--Lim--Papakonstantinopoulou--Szpankowski2021}, universal compression of \emph{unlabeled} graphs (isomorphism classes) is investigated. For lossless compression of labeled graphs, \cite{Abbe16,Asadi--Abbe--Verdu2017} studied the compression of graphs generated from SBMs. However, it does not consider universal compression schemes. In~\cite{Lempel--Ziv86}, a universal compression algorithm is proposed for two-dimensional arrays of data by first converting the two-dimensional data into a one-dimensional sequence using a Hilbert--Peano curve and then applying the Lempel--Ziv algorithm~\cite{Lempel--Ziv78} for sequences.  In~\cite{Kieffer--Yang--Szpankowski2009,Zhang--Yang--Kieffer2014,Magner--Turowski--Szpankowski2018,Golundefinedbiewski--Magner--Szpankowski2018,Ganardi--Hucke--Lohrey--Seelbach-Benkner2019}, universal compression of tree graphs were studied.
In~\cite{Peixoto2013,Peixoto2014,Peixoto2015,Peixoto2017}, a minimum description length based method was proposed to infer network structures in stochastic block models. Recently, universal compression of graphs with marked edges and vertices was studied by Delgosha and Anantharam~\cite{Delgosha--Anatharam2019tit,Delgosha--Anatharam2020}. They focus on the \emph{sparse} graph regime, where the number of edges is in the same order as the number of vertices $n$.
They employ the framework of local weak convergence, which provides a technique to view a sequence of graphs as a sequence of distributions on neighbourhood structures. Built on this framework, they propose an algorithm that compresses graphs by describing the local neighbourhood structures. Moreover, they introduce a universality/optimality criterion through a notion of entropy for graph sequences under the local weak convergence framework, known as the \emph{Bordenave--Caputo (BC) entropy}~\cite{Bordenave_2014}. 
  In~\cite{Delgosha--Anatharam2021,Delgosha--Anantharam2021b}, a \emph{sparse graphon framework} is proposed for graphs with edge numbers in $\Theta(n^\alpha)$ for $1 < \alpha \le 2$. A new universality criterion is introduced and the corresponding universal compressor is proposed. 
The two universality criteria in~\cite{Delgosha--Anatharam2019tit,Delgosha--Anatharam2020, Delgosha--Anatharam2021,Delgosha--Anantharam2021b} are both stronger than the one used in this paper. They require the asymptotic length of the compressor to match the constants in both first and second-order terms in Shannon entropy, whereas the universality criterion we use only requires matching the first-order term. 
As a consequence of the stronger criteria, the compressors in~\cite{Delgosha--Anatharam2019tit,Delgosha--Anatharam2020, Delgosha--Anatharam2021,Delgosha--Anantharam2021b} are universal over a random graph family with a smaller range of edge-numbers. Moreover, the algorithm in~\cite{Delgosha--Anatharam2021,Delgosha--Anantharam2021b} is a combination of two different compressors, and has no polynomial-time implementation. The first compressor is universal over a family of graphs with $\Theta(n)$ edges that possess the so-called local weak convergence property. The second compressor is universal over the family of random graphs generated from sparse graphon with $\Theta(n^\alpha)$ edges for $1<\alpha<2$. We note that the sparse graphon framework includes stochastic block models with $f(n)=\omega(\frac{1}{n})$ and $f(n)=o(1)$ as a special case. In comparison, the compressor we propose is a single polynomial-time algorithm that applies to graphs with edge numbers in $\Theta(n^\alpha)$ for every $0< \alpha \le 2$.
 In Section~\ref{sec:BCentropy}, we evaluate the proposed compressor under the criterion in~\cite{Delgosha--Anatharam2019tit} for the family of stochastic block models. The proposed compressor achieves the same universality performance in terms of BC entropy. 

\subsection{Contribution and organization}

The main contributions of the paper are summarized as follows. Firstly, we propose a polynomial-time graph compressor and establish its universality over the corresponding family of
stochastic block models. Secondly, we present upper and lower bounds for the minimax redundancy of stochastic block models. Finally, we analyze the proposed compressor under the local weak convergence framework considered in~\cite{Delgosha--Anatharam2019tit,Bordenave_2014}. We show that the proposed compressor achieves the same performance guarantee (BC entropy) as the compressor in~\cite{Delgosha--Anatharam2019tit}.

The rest of the paper is organized as follows. In Section~\ref{sec:setup}, we defined universality over a family of graph distributions, the stochastic block models and the minimax redundancy of a family of distributions. We present our main result in Section~\ref{sec:main}, which is a polynomial-time graph compressor that is universal for a family containing most of the non-trivial stochastic block models. We describe the proposed graph compressor in Section~\ref{sec:univGraphComp}. In Section \ref{sec:maintheorems}, we state the main theorems of the paper---two theorems on 
the university of the proposed compressor
as well as a theorem on upper and lower bounds for the minimax redundancy. In Section~\ref{sec:experiments}, we implement our compressor in four benchmark graph datasets and compare its empirical performance to four competing algorithms. We illustrate key steps in establishing universality in Section~\ref{sec:theorem} and elaborate on the proof of each step in Section~\ref{sec:proposition}. In Section~\ref{sec:proveminimax}, we prove upper and lower bounds for the minimax redundancy. In Section~\ref{sec:BCentropy}, we provide the second-order analysis of the expected length of our compressor and compare it to the one in~\cite{Delgosha--Anatharam2019tit}. In Section~\ref{sec:C1discussion}, we explain why some natural attempts to compress stochastic block models are not applicable to our problem or not universal. In Section~\ref{sec:conclusion}, we conclude our results and introduce some open problems and potential future works.

\section{Main Results}
\label{sec:main}
We present our compression scheme in Section~\ref{sec:univGraphComp} and state its performance guarantee in Theorems~\ref{thm:strongUniv}~and~\ref{thm:weakUniv}. The key idea in our design is a decomposition of the adjacency matrix into submatrices, which, with a carefully chosen parameter, converts the two-dimensional correlated entries in the adjacency matrix into a sequence of \emph{almost} i.i.d.\! submatrices. We then compress the submatrices using a Krichevsky--Trofimov or Laplace compressor and generalize their length analyses from i.i.d.\! processes to \emph{arbitrarily correlated} but identically distributed processes.

\subsection{Algorithm: Universal Graph Compressor}\label{sec:univGraphComp}
In this section, we describe our universal graph compression scheme. 
For each integer  $k\le n$, the graph compressor $C_k\suchthat \Ac_n \to \{0,1\}^*$ is defined as follows.
\begin{itemize}
 \item {\bf Submatrix decomposition.}
Let $n' = \lfloor n/k \rfloor$ and $\tilde{n} = \lfloor n/k \rfloor k$. For $1 \le i,j \le n'$, let $\BB_{ij}$ be the submatrix of $A_n$ formed by the rows $(i-1)k + 1, (i-1)k + 2, \ldots, ik$ and the columns $(j-1)k + 1, (j-1)k + 2, \ldots, jk$. For example, we have
\begin{equation}\label{eq:B12}
\BB_{12} = \begin{bmatrix}
A_{1, k +1} & A_{1, k + 2} & \cdots & A_{1, 2k}\\
A_{2, k +1} & A_{2, k + 2} & \cdots & A_{2, 2k}\\
\vdots & \vdots & \ddots & \vdots\\
A_{k, k +1} & A_{k, k +2} & \cdots & A_{k, 2k}
\end{bmatrix}.
\end{equation}
We then write the top-left $\tilde{n}\times\tilde{n}$ submatrix of $A_n$ in the block-matrix form as 
\begin{equation}\label{eq:blockMatrix}
\begin{bmatrix}
\BB_{11} & \BB_{12} & \cdots & \BB_{1,n'}\\
\BB_{21} & \BB_{22} & \cdots & \BB_{2,n'}\\ 
\vdots & \vdots & \ddots & \vdots\\
\BB_{n',1} & \BB_{n',2} & \cdots & \BB_{n',n'} 
\end{bmatrix}.
\end{equation}
Denote
\begin{align}\label{eq:ButDefn}
\BB_{\mathrm{ut}} \defeq \BB_{12},\BB_{13},\BB_{23},\ldots,\BB_{1,n'},\cdots,\BB_{n'-1,n'}
\end{align}
as the sequence of off-diagonal submatrices in the upper triangle and  
\begin{align}\label{eq:BdiagDefn}
\BB_{\mathrm{d}} \defeq \BB_{11},\BB_{22},\ldots, \BB_{n',n'}
\end{align}
as the sequence of diagonal submatrices.

\item {\bf Binary to $m$-ary conversion.} Let $m := 2^{k^2}$. Each $k \times k$ submatrix with binary entries in the two submatrix sequences $\BB_{\mathrm{ut}}$ and $\BB_{\mathrm{d}}$ is converted into a symbol in $[m]$.

\item {\bf KT probability assignment.} Apply KT sequential probability assignment for the two $m$-ary sequences $\BB_\mathrm{ut}$ and $\BB_\mathrm{d}$ respectively. Given an $m$-ary sequence $x_1,x_2,\ldots,x_N$, \emph{KT sequential probability assignment} defines $N$ conditional probability distributions over $[m]$ as follows. For $j = 0,1,2,\ldots,N-1$, assign conditional probability
\begin{align}
 q_{\mathrm{KT}}(i|x^j) &\defeq q_{\mathrm{KT}}(X_{j+1} = i| X^j = x^j)\nonumber\\ &= \frac{N_i(x^j) + 1/2 }{j + m/2} \quad \text{ for each } i \in [m],\label{eq:KTProbAssgnSeq}
\end{align}
where $X^j \defeq (X_1,\ldots,X_j), x^j \defeq (x_1,x_2,\ldots,x_j)$, and $N_i(x^j) \defeq \sum_{k=1}^j \indi\{x_k = i\}$ counts the number of symbol $i$ in $x^j$.

\item {\bf Adaptive arithmetic coding.} With the KT sequential probability assignments, compress the two sequences $\BB_\mathrm{ut}$ and $\BB_\mathrm{d}$ separately using adaptive arithmetic coding~\cite{Marpe--Schwarz--Wiegand2003} (see description in Algorithm~\ref{alg:arithmetic}).  In case $k = 1$, the diagonal sequence $\BB_\mathrm{d}$ becomes an all-zero sequence since we assume the graph is simple. So we will only compress the off-diagonal sequence $\BB_\mathrm{ut}$.

\begin{algorithm}
\SetKwInOut{Input}{Input}\SetKwInOut{Output}{Output}
\Input{Data sequence $x^N$, alphabet size $m$}
\BlankLine
Initialize $\texttt{lower}=0, \texttt{upper}=1, \texttt{logprob}=0, N_1 = N_2 = \cdots = N_m = 0$\;
\For{$j = 0,1,\ldots,N-1$}{
$\texttt{range} \leftarrow \texttt{upper}-\texttt{lower}$\;
\For{$i = 1,2,\ldots,x_{j+1}$}{
Compute $q_\mathrm{KT}(i|x^j) = \frac{N_i+1/2}{j+m/2}$\;
}
$\texttt{upper} \leftarrow \texttt{lower} + \texttt{range}\cdot\sum_{i=1}^{x_{j+1}}q_{\mathrm{KT}}(i|x^j)$\;
$\texttt{lower} \leftarrow \texttt{upper} - \texttt{range}\cdot q_{\mathrm{KT}}(x_{j+1}|x^j)$\;
$N_{x_{j+1}} \leftarrow N_{x_{j+1}} + 1$\;
$\texttt{logprob} \leftarrow \texttt{logprob} + \log(q_{\mathrm{KT}}(x_{j+1}|x^j))$\;
}
\BlankLine
\Output{the binary representation of $\frac{1}{2}(\texttt{lower}+\texttt{upper})$ with $\lceil-\texttt{logprob}\rceil + 1$ bits}
 \caption{$m$-ary adaptive arithmetic encoding with KT probability assignment}\label{alg:arithmetic}
\end{algorithm}

\item {\bf Encoding the remaining bits.}  The above process compressed the top-left $\tilde{n}\times\tilde{n}$ block of the adjacency matrix. For the remaining $(n-\tilde{n})\tilde{n} + \binom{n-\tilde{n}}{2}$ entries in the upper diagonal of the adjacency matrix, we simply use $2\lceil\log n\rceil$ bits to encode the row and column number of each one.
\end{itemize}

Given the compressed graph sequence $y^L$, the number of vertices $n$ and the submatrix size $k$, the graph decompressor $D_k\suchthat \{0,1\}^* \to \Ac_n$ is defined as follows.
\begin{itemize}
\item {\bf Adaptive arithmetic decoding.} With the KT sequential probability assignments defined in~\eqref{eq:KTProbAssgnSeq}, decompress the two code sequences for $\BB_\mathrm{ut}$ and $\BB_\mathrm{d}$ separately using adaptive arithmetic decoding (see Algorithm~\ref{alg:arithmeticDec}). The length of data sequence $\BB_\mathrm{ut}$ and $\BB_\mathrm{d}$ are $n'(n'-1)/2$ and $n'$ respectively. 
\begin{algorithm}
\SetKwInOut{Input}{Input}\SetKwInOut{Output}{Output}
\Input{Binary sequence $y^L$, alphabet size $m=2^{k^2}$, length of data sequence $N$}
\BlankLine
Add `$0.$' before sequence $y^L$ and convert it into a decimal real number $Y$. Initialize $\texttt{lower}=0, \texttt{upper}=1, N_1 = N_2 = \cdots = N_m = 0$\;
\For{$j = 0,1,\ldots,N-1$}{
$\texttt{range} \leftarrow \texttt{upper}-\texttt{lower}$\;
\For{$i = 1,2,\ldots,m$}{
Compute $q_\mathrm{KT}(i|x^j) = \frac{N_i+1/2}{j+m/2}$\;
}
Find minimum $z\in[m]$ such that $\texttt{lower}+\texttt{range}\cdot \sum_{i=1}^{z} q_\mathrm{KT}(i|x^j)>Y $\;
$\texttt{upper} \leftarrow \texttt{lower} + \texttt{range}\cdot\sum_{i=1}^{z}q_{\mathrm{KT}}(i|x^j)$\;
$\texttt{lower} \leftarrow \texttt{upper} - \texttt{range}\cdot q_{\mathrm{KT}}(z|x^j)$\;
$N_{z} \leftarrow N_{z} + 1$\;
$x_{j+1} \leftarrow z$\;

}

\BlankLine
\Output{the $m$-ary data sequence $x_1,x_2,\cdots,x_N$}
 \caption{$m$-ary adaptive arithmetic decoding with KT probability assignment}\label{alg:arithmeticDec}
\end{algorithm}

 \item {\bf $m$-ary to binary conversion.} Each $m$-ary symbol in the sequence is converted to a $k^2$-bit binary number and further converted into a $k\times k$ submatrix with binary entries. 
 \item {\bf Adjacency matrix recovery.} With the submatrices in $\BB_\mathrm{ut}$ and $\BB_\mathrm{d}$, recover the top-left $\tilde{n}\times\tilde{n}$ submatrix of $A_n$ in the order described in~\eqref{eq:blockMatrix},~\eqref{eq:ButDefn}, and~\eqref{eq:BdiagDefn}.
 \item {\bf Decoding the remaining bits.} Recover the remaining $(n-\tilde{n})\tilde{n} + \binom{n-\tilde{n}}{2}$ entries in the $A_n$ using the row and column numbers of the ones.
\end{itemize}

\begin{rem}[Complexity]
The computational complexity and the space complexity of the proposed compressor are $O(\frac{2^{k^2}n^2}{k^2})$ 
and $O(n^2)$ respectively. 
For the choice of $k$ that achieves universality over $\mathscr{P}_1(\eps)$ family in Theorem~\ref{thm:strongUniv}, $O(\frac{2^{k^2}n^2}{k^2}) = O(\frac{n^{2+\delta}}{k^2})$ for $\delta < \epsilon$. For the choice of $k$ that achieves universality over $\mathscr{P}_2(\eps)$ family in Theorem~\ref{thm:weakUniv}, $O(\frac{2^{k^2}n^2}{k^2}) = O(n^2)$.

   To see this, we consider the time and space complexity of the encoder step by step. The steps of submatrix decomposition and binary to $m$-ary conversion require reading the whole adjacency matrix and computing $m$-ary value of each block. They require $O(n^2)$ time complexity and $O(\frac{n^2}{k^2})$ space complexity.
    Regarding the complexity of adaptive arithmetic coding with KT probability assignment, we consider the ideal case when we are given a computer with infinite floating point precision. In this case, the encoder runs for $O(\frac{n^2}{k^2})$ iterations, and in each iteration, it requires $O(2^{k^2})$ time complexity for computing the conditional probabilities. So the total time complexity is $O(\frac{2^{k^2}n^2}{k^2})$. The space complexity is bounded by the total number of output bits, which is $O(n^2)$. In practice, a computer cannot have infinite floating point precision, and this is a known difficulty in implementing an arithmetic encoder. In~\cite{witten1987}, the authors proposed a practical method of implementing arithmetic coding with finite floating point precision by outputting the known most significant bits of the code during the encoding process. We refer interested readers to their original paper for more details. Finally, the encoding of the remaining bits requires at most $O(nk)$ time complexity and $O(1)$ space complexity. Overall, the encoder requires $O(\frac{2^{k^2}n^2}{k^2})$ time complexity and $O(n^2)$ space complexity. The time and space complexity of the decoder can be analyzed similarly. 
\end{rem}
\begin{rem}
One can check that $C_k$ is well-defined. The submatrix decomposition and the binary to $m$-ary conversion are clearly one-to-one. It is also known that for any valid probability assignment, arithmetic coding produces a prefix code, which is also one-to-one. 
\end{rem}
\begin{rem}
The orders in $\BB_\mathrm{ut}$ and $\BB_\mathrm{d}$ do not matter in terms of establishing universality. The current orders in~\eqref{eq:ButDefn} and~\eqref{eq:BdiagDefn} together with arithmetic coding enable a \emph{horizon-free} implementation. That is, the encoder does not need to know the \emph{horizon} $n$ to start processing the data and can output partially coded bits \emph{on the fly} before receiving all the data. This leads to short encoding and decoding delays. For some real-world applications, for example, when the number of users increases in a large social network, this compressor has the advantage of not requiring to re-process existing data and re-compress the whole graph from scratch.
\end{rem}

\begin{rem}[{\bf Laplace probability assignment}]\label{rmk:Laplace}
 As an alternative to the KT sequential probability assignment, one can also use the Laplace sequential probability assignment. Given an $m$-ary sequence $x_1,x_2,\ldots,x_N$, \emph{Laplace sequential probability assignment} defines $N$ conditional probability distributions over $[m]$ as follows. For $j = 0,1,2,\ldots,N-1$, we assign conditional probability
\begin{equation}\label{eq:LaplaceProbAssgnSeq}
 q_{\Lap}(X_{j+1} = i| X^j = x^j) = \frac{N_i(x^j) + 1 }{j + m} \quad \text{ for each } i \in [m].
\end{equation}
Both methods can be shown to be universal, while Laplace probability assignment has a much cleaner derivation. However, KT probability assignment produces a better empirical performance. For this reason, we keep both in the paper.
\end{rem}

\subsection{Theoretical performance}\label{sec:maintheorems}
We now state the main result of this paper: the proposed compressor $C_k$, for a carefully chosen $k$, is universal over the classes  $\mathscr{P}_1(\eps)$ and  $\mathscr{P}_2(\eps)$ respectively for every $0 < \eps < 1$. 

\begin{thm}[Universality over $\mathscr{P}_1$]\label{thm:strongUniv}
For every $0 < \eps < 1$, the graph compressor $C_k$  is universal over the family $\mathscr{P}_1(\eps)$ provided that
 \[
 0 < \delta < \e, \quad k \le \sqrt{\d \log n}, \quad \text{ and } \quad k = \omega(1).
 \]
\end{thm}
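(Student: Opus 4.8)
\quad Since the lower bound $\liminf_n \E[\ell(C_k(A_n))]/H(A_n)\ge 1$ holds for every lossless compressor by the discussion around \eqref{eq:losslessLB}, the plan is to prove the matching upper bound $\E[\ell(C_k(A_n))]\le (1+o(1))H(A_n)$ for every $\P\in\mathscr{P}_1(\eps)$. Recall that $C_k$ first applies the block decomposition of Section~\ref{sec:univGraphComp}: split $[n]$ into $\lceil n/k\rceil$ consecutive groups of size at most $k$ and form, for each ordered pair of groups $(a,b)$ with $a\le b$, the block $B_{ab}$ collecting the entries $A_{ij}$ with $i$ in group $a$ and $j$ in group $b$. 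This is a bijection between the upper triangle of $A_n$ and the block sequence $\Bc$; off-diagonal blocks take values in $\{0,1\}^{k^2}$ and diagonal blocks in $\{0,1\}^{\binom{k}{2}}$, and --- crucially --- because $X_1,\dots,X_n$ are i.i.d.\ and the edges are conditionally independent given the community labels, every off-diagonal block has one and the same marginal law $\mu_{\mathrm{off}}$ and every diagonal block one and the same marginal law $\mu_{\mathrm{diag}}$. The compressor then feeds the two subsequences to two Krichevsky--Trofimov codes and prepends an $O(\log n)$-bit header recording $n$ (hence $k$) and the two stream lengths.

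First I would bound the two KT codelengths using the generalized length analysis of Section~\ref{sec:proposition}: for a length-$N$ sequence of \emph{identically distributed} but otherwise arbitrarily dependent symbols over an alphabet of size $M$ with common law $\mu$, the expected KT codelength is at most $N H(\mu)+r(M,N)$, where $r(M,N)=\tfrac{M-1}{2}\log N+O(M\log M)$ is the usual redundancy term. (The point is that the pointwise inequality ``codelength $\le$ empirical entropy $+\,r(M,N)$'' combines with Jensen's inequality for the convex map $x\mapsto x\log x$ applied to the symbol counts --- whose expectations are $N\mu(\cdot)$ by identical marginals --- so the dependence among symbols plays no role in the \emph{expected} length.) This gives
\[
\E[\ell(C_k(A_n))]\ \le\ \sum_{B\in\Bc}H(B)\ +\ r\!\left(2^{k^2},\tbinom{\lceil n/k\rceil}{2}\right)\ +\ r\!\left(2^{\binom{k}{2}},\lceil n/k\rceil\right)\ +\ O(\log n),
\]
since $\binom{\lceil n/k\rceil}{2}H(\mu_{\mathrm{off}})+\lceil n/k\rceil H(\mu_{\mathrm{diag}})=\sum_{B\in\Bc}H(B)$. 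It then remains to check that the redundancy terms are $o(H(A_n))$ and that $\sum_{B}H(B)=H(A_n)+o(H(A_n))$.

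For the redundancy, $k\le\sqrt{\d\log n}$ forces $2^{k^2}\le n^{\d}$ and the number of blocks is at most $n^2$, so the redundancy is $O(n^{\d}\log n)$; on the other hand $H(A_n)\ge H(A_n\mid X^n)=\binom{n}{2}\,\E[h(f(n)Q_{X_iX_j})]=\Omega\!\big(n^2 f(n)\log\tfrac1{f(n)}\big)$ (and $\Omega(n^2)$ when $f(n)=\Theta(1)$), which over $\mathscr{P}_1(\eps)$ is $\Omega(n^{\eps}\log n)$, the minimum being attained near $f(n)\asymp n^{-(2-\eps)}$; since $\d<\eps$, the redundancy and the header are $o(H(A_n))$. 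For the main term, write $\sum_{B}H(B)=H(A_n)+T$ with $T\ge0$ by subadditivity. By the chain rule and the conditional independence of the blocks given $X^n=(X_1,\dots,X_n)$, $T=\sum_{B}I(B;\text{earlier blocks})\le\sum_{B}I(B;X^n)$; and since $B_{ab}$ depends on $X^n$ only through the at most $2k$ labels of groups $a$ and $b$, $I(B_{ab};X^n)\le 2k\log L$, so $T=O(n^2/k)$, which, as $k=\omega(1)$, is $o(H(A_n))$ in the dense regime $f(n)=\Theta(1)$. In the sparse regime I would instead bound the same quantity by $I(B_{ab};X^n)=H(B_{ab})-\sum_{i,j}\E[h(W_{X_iX_j})]\le k^2\big(h(f(n)\,\pv^T\Qv\pv)-\sum_{u,v}p_up_v\,h(f(n)Q_{uv})\big)$ and note, by Taylor expansion, that the leading $\Theta\!\big(f(n)\log\tfrac1{f(n)}\big)$ terms cancel, leaving $O(f(n))$; hence $T=O(n^2 f(n))=o\!\big(n^2 f(n)\log\tfrac1{f(n)}\big)=o(H(A_n))$. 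These two per-block mutual-information estimates are exactly the content of Theorem~\ref{prop:blkIndep}. Together they yield $T=o(H(A_n))$, and hence $\E[\ell(C_k(A_n))]\le H(A_n)+o(H(A_n))$, i.e.\ $\limsup_n \E[\ell(C_k(A_n))]/H(A_n)\le1$, which with the lower bound is the claimed universality.

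The main obstacle is the estimate $T=o(H(A_n))$: it must hold uniformly across the whole range of $f(n)$, and it requires two genuinely different arguments. In the dense regime a $k\times k$ block can carry almost all of the entropy of the $2k$ community labels it touches, so the only saving is that it touches $\Theta(k)$ rather than $\Theta(k^2)$ labels --- this is precisely where $k=\omega(1)$ enters. In the sparse regime that saving is useless, and one instead relies on the cancellation of the $f\log\tfrac1{f}$ terms in the per-block mutual information, which needs $f(n)\to0$. By comparison, matching the KT redundancy against $H(A_n)$ is routine, and is exactly what forces the hypotheses $k\le\sqrt{\d\log n}$ and $\d<\eps$.
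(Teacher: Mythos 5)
Your proposal is correct and follows essentially the same route as the paper: the generalized KT/Laplace length bound for identically distributed but arbitrarily dependent blocks, the redundancy comparison $2^{k^2}\le n^{\delta}$ against $H(A_n)=\Omega(n^{\eps}\log n)$ with $\delta<\eps$, and the two-regime analysis of block near-independence (dense via $k=\omega(1)$ and the $O(k\log L)$ label entropy per block, sparse via cancellation of the leading $f\log\tfrac1f$ terms). The only difference is bookkeeping: you control $\sum_B H(B)-H(A_n)=\sum_B I(B;\text{earlier blocks})\le\sum_B I(B;X^n)$ blockwise and treat both streams at once, whereas the paper proves the equivalent statement $H(\BB_{\mathrm{ut}})\sim\binom{n'}{2}H(\BB_{12})$ (Theorem~\ref{prop:blkIndep}) by sandwiching entropies and handles the diagonal stream separately.
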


\begin{rem}
Recall that $\mathscr{P}_1(\eps)$ is the family of SBMs with edge probability in the regime $\Omega(\frac{1}{n^{2-\eps}})$ and $O(1)$. Moreover, $\frac{1}{n^2}$ is the threshold for a random graph to contain an edge with high probability~\cite{Frieze--Karonski2015}. Thus, the family $\mathscr{P}_1(\e)$ covers most non-trivial SBM graphs.
\end{rem}

It turns out that for $k=1$, the compressor $C_1$ is universal over the family $\mathscr{P}_2(\eps)$, which we state formally below.

\begin{thm}[Universality over $\mathscr{P}_2$]\label{thm:weakUniv}
For every $0 < \eps < 1$, the graph compressor $C_1$ is universal over the family $\mathscr{P}_2(\eps)$.
\end{thm}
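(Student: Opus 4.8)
The plan is to view $C_1$ as the degenerate $k=1$ instance of the block compressor, in which each ``block'' is a single upper-triangular entry $A_{ij}$; in particular the block-independence statement Theorem~\ref{prop:blkIndep} is not needed here. I would lean on two features peculiar to $\mathscr{P}_2(\e)$ in \eqref{eq:P2family}: the entries $\{A_{ij}\}_{i<j}$ are identically distributed, each $\Bern(\bar p)$ with $\bar p := f(n)\,\bar q$ and $\bar q := \pv^T\Qv\pv = \Theta(1)$, although they are correlated through the shared labels $\Xv$ (exactly the ``identically distributed, arbitrarily correlated, fixed binary alphabet'' regime handled by the generalized Krichevsky--Trofimov length analysis); and $f(n)=o(1)$, which forces $\log(1/f(n))\to\infty$ and will render the inter-entry correlation negligible to first order in the entropy.

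First I would bound the code length. Running the KT compressor on the length-$\binom n2$ binary string of upper-triangular entries of $A_n$ (in any fixed order), the standard pointwise KT redundancy bound gives $\ell(C_1(A_n)) \le \binom n2 h(\hat p) + \tfrac12\log\binom n2 + O(1)$, where $h$ is the binary entropy and $\hat p$ the empirical edge density. Taking expectations and applying concavity of $h$ together with $\E[\hat p]=\bar p$ (valid precisely because the entries are identically distributed) gives
\[
\E[\ell(C_1(A_n))] \le \binom n2 h(\bar p) + \tfrac12\log\binom n2 + O(1);
\]
note that only Jensen is used here, no concentration of $\hat p$.

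Next I would show $H(A_n)\sim \binom n2 h(\bar p)$. Subadditivity of entropy gives the upper bound $H(A_n)\le\sum_{i<j}H(A_{ij})=\binom n2 h(\bar p)$; conditioning together with the conditional independence of the edges given $\Xv$ gives the lower bound $H(A_n)\ge H(A_n\mid\Xv)=\binom n2\sum_{a,b}p_ap_b\,h(f(n)Q_{ab})$. Since $f(n)=o(1)$ and each $Q_{ab}=\Theta(1)$, the elementary expansion $h(f(n)q)=f(n)q\log\tfrac1{f(n)}\,(1+o(1))$ holds uniformly over the finitely many $q=Q_{ab}$, so both $\binom n2 h(\bar p)$ and $H(A_n\mid\Xv)$ equal $\binom n2 f(n)\bar q\log\tfrac1{f(n)}\,(1+o(1))$. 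Squeezing $H(A_n)$ between these two asymptotically equal quantities yields $H(A_n)\sim\binom n2 h(\bar p)=\Theta\!\big(n^2 f(n)\log\tfrac1{f(n)}\big)$. Finally, since $f(n)=\Omega(n^{-(2-\e)})$, this entropy is $\Omega(n^\e)=\omega(\log n)$, so $H(A_n)\to\infty$ and the redundancy $\tfrac12\log\binom n2+O(1)=O(\log n)$ is $o(H(A_n))$; combined with the code-length bound this gives $\limsup_n\E[\ell(C_1(A_n))]/H(A_n)\le1$, while \eqref{eq:losslessLB} supplies the matching $\liminf\ge1$, establishing universality.

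The hard part will be the squeeze in the entropy computation: one must verify that the per-entry Jensen gap $h(\bar p)-\sum_{a,b}p_ap_b\,h(f(n)Q_{ab})=O(f(n))$ is of strictly smaller order than $h(\bar p)=\Theta\!\big(f(n)\log\tfrac1{f(n)}\big)$. This is exactly where the hypothesis $f(n)=o(1)$ is used, and it also explains why $C_1$ is \emph{not} universal over the constant-density part of $\mathscr{P}_1(\e)$ --- there the gap is $\Theta(1)$ per entry, \ie $\Theta(n^2)$ in total, comparable to $H(A_n)$ --- which is what necessitates the growing block size $k=\omega(1)$ of Theorem~\ref{thm:strongUniv}.
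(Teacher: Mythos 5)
Your proposal is correct and takes essentially the same route as the paper: compress the $\binom{n}{2}$ identically distributed but correlated upper-triangular bits with a KT-type code, absorb the $O(\log n)$ redundancy, and compare against $H(A_n)\ge H(A_n\mid X^n)$ using $h\bigl(f(n)\pv^T\Qv\pv\bigr)\sim \pv^T h\bigl(f(n)\Qv\bigr)\pv$ in the $f(n)=o(1)$ regime, with the lower bound \eqref{eq:losslessLB} supplying the matching direction. Your pointwise KT redundancy bound plus Jensen is just a binary-alphabet repackaging of the paper's expected-length results (Theorems~\ref{prop:NonIIDcomp} and~\ref{prop:KTlength}), which are themselves proved by a pointwise comparison to an i.i.d.\ benchmark followed by taking expectations using only the identical marginals, so the two arguments coincide in substance.
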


\begin{rem}
Any compressor universal over the class $ \mathscr{P}_1(\eps)$ is also universal over the class  $\mathscr{P}_2(\eps)$, but our compressor designed specifically for the class  $\mathscr{P}_2(\eps)$ has a lower computational complexity. 
\end{rem}

The above results establish the universality of the proposed compressor over $\mathscr{P}_1$ and $\mathscr{P}_2$. The minimax redundancy represents the lowest achievable redundancy of any lossless compressor for the family of SBMs. We bound the minimax redundancy of the family of SBMs $\mathscr{P}_3$ with fixed regime $f(n)$ and fixed largest entry $Q_\mathrm{max}$ in $\mathbf{Q}$ in the following Theorem.
\begin{restatable}[Minimax redundancy]{thm}{Minimax}
\label{thm:minimaxredun}
Let $f(n) = o(1)$ and $f(n) = \Omega(1/n^{2-\eps})$ for some $0 < \eps < 1$. 
The minimax redundancy of the family $\mathscr{P}_3(f(n),Q_\mathrm{max})$ is bounded as
\begin{align*}
    \frac12\log\left(\tfrac{\binom{n}{2}f(n)Q_\mathrm{max}}{\pi e}\right)&\le{R}^*(\mathscr{P}_3(f(n),Q_\mathrm{max}))\\
    &\le \frac{1}{e}(\log e)Q_\mathrm{max}\binom{n}{2}f(n).
\end{align*}
\end{restatable}

\begin{rem}
From Theorem~\ref{thm:minimaxredun}, we can see that the lower bound scales as $\log(n^2f(n))$, while the upper bound scales as $n^2f(n)$. Closing the gap between the upper and lower bounds is an interesting future research question. Recall that the minimax redundancy is the optimal second-order term in the expected length of a universal compressor---we will show in Lemma~\ref{lem:graphEntropy} that $H(A_n)$, the first order term in the length, is $\Theta(n^2 f(n) \log(1/f(n))$. 
\end{rem}

In Section~\ref{sec:BCentropy}, we analyze the performance of the proposed compressor under the local weak convergence framework considered in~\cite{Delgosha--Anatharam2019tit}. It turns out that our proposed compressor achieves the optimal compression rate under the stronger universality criterion considered in that framework. This result requires a lot more definitions to introduce and we defer it to Theorem~\ref{thm:Achieve_BC} in Section~\ref{sec:BCentropy}.

\subsection{Empirical performance}
\label{sec:experiments}
We implement the proposed universal graph compressor (UGC) in four widely used benchmark graph datasets: protein-to-protein interaction network (PPI)~\cite{Grover--Leskovec2016}, LiveJournal friendship network (Blogcatalog)~\cite{Tang--Liu2009}, Flickr user network (Flickr)~\cite{Tang--Liu2009}, and YouTube user network (YouTube)~\cite{Nandanwar--Murty2016}. The submatrix decomposition size $k$ is chosen to be $1,2,3,4$ and we present in Table~\ref{tab:UGC} the compression ratios (the ratio between output length and input length of the encoder) of UGC for different choices of $k$. We present in Table~\ref{tab:simulation} the compression ratios of four competing algorithms\footnote{Note that CSR and Ligra+ are designed to enable fast computation, such as adjacency query or vertex degree query, in addition to compressing the matrix. Our proposed compressor does not possess such functionality and is designed solely for compression purposes.}. In Fig.~\ref{fig:log-scale-simulation}, we combine the comparisons of the compression ratios in Tables~\ref{tab:UGC}~and~\ref{tab:simulation} in the logarithmic scale.

\begin{figure*}[htbp]
  \centering
    \includegraphics[scale=.45]{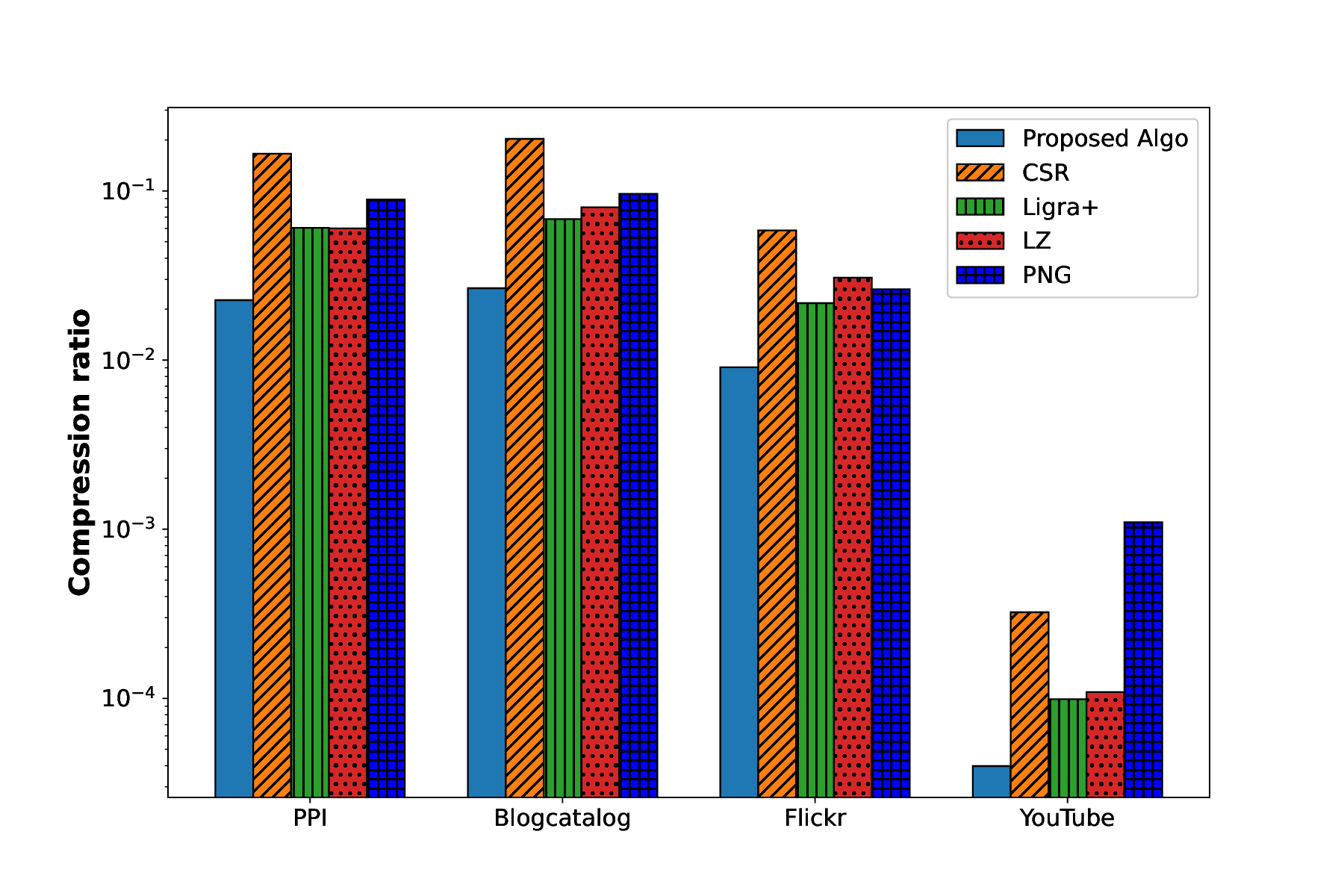}
    \caption{Log-scale comparisons of the compression ratios for the proposed universal graph compressor with other competing compressors. See Tables~\ref{tab:UGC} and~\ref{tab:simulation} for the exact compression ratios. The column numbers of highlighted entries in Table~\ref{tab:UGC} indicate the optimal $k$ value for UGC we used in this plot.  }
    \label{fig:log-scale-simulation}
\end{figure*}

\begin{itemize}
 \item CSR: Compressed sparse row is a widely used sparse matrix representation format. In the experiment, we further optimize its default compressor exploiting the fact that the graph is simple and its adjacency matrix is symmetric with binary entries. 
  \item Ligra+: This is another powerful sparse matrix representation format~\cite{Shun--Blelloch2013,Shun--Dhulipala--Blelloch2015}, which improves upon CSR using byte codes with run-length coding.
 \item LZ: This is an implementation of the algorithm proposed in~\cite{Lempel--Ziv86}, which first transforms the two-dimensional adjacency matrix into a one-dimensional sequence using the Peano--Hilbert space-filling curve and then compresses the sequence using Lempel--Ziv 78 algorithm~\cite{Lempel--Ziv78}.
 \item PNG: The adjacency matrix of the graph is treated as a grey-scaled image and the PNG lossless image compressor is applied.
\end{itemize}
The compression ratios of the five algorithms implemented on four datasets are given as follows. The proposed UGC outperforms all competing algorithms in all datasets. The compression ratios from competing algorithms are 2.4 to 27 times that of the universal graph compressor. 

\begin{table*}[ht]
\caption{\\Compression ratio of UGC under different $k$ values.}\label{tab:UGC}
\centering
            \begin{tabular}{||c |c | c | c | c ||} 
            \hline
            & $k=1$  & $k=2$ & $k=3$ & $k=4$  \\ 
            \hline
            PPI & 0.0229 & $\mathbf{0.0226}$ & 0.0227 & 0.0343 \\
            \hline
            Blogcatalog & 0.0275 & 0.0270 & $\mathbf{0.0266}$ & 0.0287\\
            \hline
            Flickr & 0.00960 & 0.00934 & 0.00914 & $\mathbf{0.00906}$ \\
            \hline
            YouTube & $4.51 \times 10^{-5}$ & $4.11 \times 10^{-5}$ & $\mathbf{3.98 \times 10^{-5}}$ & $4.00 \times 10^{-5}$\\
            \hline
            \end{tabular}

\end{table*}
\begin{table*}[ht]
\caption{\\Compression ratios of competing algorithms.}\label{tab:simulation}
\centering
            \begin{tabular}{||c  | c | c | c | c||} 
            \hline
            & CSR & Ligra+ & Lempel--Ziv & PNG  \\ 
            \hline
            PPI  & 0.166 & 0.0605 & {0.06} & 0.089  \\
            \hline
            Blogcatalog & 0.203 & {0.0682} & 0.080 & 0.096 \\
            \hline
            Flickr & 0.0584 & 0.0217 & 0.0307 & 0.0262 \\
            \hline
            YouTube & $3.23\times 10^{-4}$ & $9.90\times 10^{-5}$& $1.09\times 10^{-4}$ & $1.10\times 10^{-3}$ \\
            \hline
            \end{tabular}

\end{table*}

\section{Main Ideas in Establishing Universality}\label{sec:theorem}
In this section, we establish Theorems~\ref{thm:strongUniv} and~\ref{thm:weakUniv} in Section~\ref{sec:maintheorems} based on four intermediate propositions. We defer the proofs of the propositions to Section~\ref{sec:proposition}.
\subsection{Graph Entropy}
We first calculate the entropy of the (random) graph $A_n$, which, recall, is the fundamental lower bound on the expected compression length for any compression scheme. 
Since we want to bound the redundancy of a universal compressor, we will be concerned with both the first and the second-order terms in $H(A_n)$. Recall, for $p \in [0,1], h(p)$ denotes the binary entropy function and for a matrix $W$ with entries in $[0,1]$, $h(W)$ denotes a matrix of the same dimension whose $(i,j)$ entry is $h(W_{ij})$.

\begin{restatable}[Graph entropy]{prop}{graphEntropy}
\label{prop:graphEntropy}
 Let $A_n \sim \mathrm{SBM}(n,L,\pv,f(n)\Qv)$ with $f(n) = O(1), f(n) = \Omega\left(\frac{1}{n^2}\right)$, and $L = \Theta(1)$. Recall that $X_i\in[L]$ denotes the community label of vertex $i$. 
 Then
 {\allowdisplaybreaks
 \begin{align}
  \label{eq:graphEntropy}
  H(A_n) &= {n \choose 2} H(A_{12}|X_1,X_2) (1+o(1))\\
  &= {n \choose 2}\pv^T h\bigl(f(n)\Qv\bigr) \pv + o\left(n^2h\bigl(f(n)\bigr)\right).\label{eq:graphEntropy2}
 \end{align}}%
 In particular, when $f(n) = \Omega\left(\frac{1}{n^2}\right)$ and $f(n) = o(1)$, expression~\eqref{eq:graphEntropy2} can be further simplified as
\begin{equation}
 \label{eq:graphEntropyo(1)}
  H(A_n) =  {n \choose 2} f(n)\log\left(\frac{1}{f(n)}\right)(\pv^T\Qv\pv + o(1)).
\end{equation}
\end{restatable}

\begin{rem}
 In the regime $f(n) = \Omega\left(\frac{1}{n}\right)$ and $f(n) = O(1)$, equation~\eqref{eq:graphEntropy2} has been established in~\cite{Abbe16}. We extend the analysis to the regime $f(n) = \Omega(\frac{1}{n^2})$ and $f(n) = o\left(\frac{1}{n}\right)$. 
\end{rem}

\begin{rem}
 Proposition~\ref{prop:graphEntropy} can be used to calculate the entropy of the graph for certain important regimes of $f(n)$, in which the SBM displays characteristic behavior. For $f(n) = 1$, we have $H(A_n) = {n \choose 2}(\pv^T h\left(\Qv \right)\pv + o(1))$; for $f(n) = \frac{\log n}{n}$ (the regime where the phase transition for exact recovery of the community assignments occurs~\cite{Abbe--Bandeira--Hall15, Abbe--Sandon15}), we have $H(A_n) = \frac{ n \log^2  n}{2}(\pv^T \Qv \pv+ o(1))$; when $f(n) = \frac{1}{n}$ (the regime where the phase transition  for detection between SBM and the \erdos--\renyi model occurs~\cite{Mossel--Neeman--Sly15}), we have $H(A_n) = \frac{n \log n}{2}(\pv^T \Qv \pv  + o(1))$; when $f(n) = \frac{1}{n^2}$ (the regime where the phase transition for the existence of an edge occurs), we have $H(A_n) = \log n(\pv^T \Qv \pv  + o(1))$.
\end{rem}

\subsection{Asymptotic i.i.d. via Submatrix Decomposition}
To compress the matrix $A_n$, we wish to decompose it into a large number of components that have little correlation between them. This leads to the idea of submatrix decomposition described previously. Since the sequence of submatrices is used to compress $A_n$, the next proposition claims these submatrices are identically distributed and asymptotically independent in a precise sense described as follows.

\begin{restatable}[Submatrix decomposition]{prop}{decomp}
\label{prop:blkIndep}
 Let $A_n \sim \mathrm{SBM}(n,L,\pv,f(n)\Qv)\in\mathscr{P}_1(\eps)$ for some $0<\eps<1$.
Let $k$ be an integer and $n' = \lfloor n/k\rfloor$. Consider the $k\times k$ submatrix decomposition in~\eqref{eq:blockMatrix}. We have all the off-diagonal submatrices share the same joint distribution; all the diagonal submatrices share the same joint distribution. In other words, for any $1 \le i_1,i_2,j_1,j_2 \le n'$ with $i_1 \neq j_1, i_2 \neq j_2$ and $1 \le l_1,l_2 \le n'$, we have 
\begin{align*}
    \BB_{i_1,j_1} &\stackrel{d}{=} \BB_{i_2,j_2}, \\
    \BB_{l_1,l_1} &\stackrel{d}{=} \BB_{l_2,l_2}.
\end{align*}
In addition, if $k = \omega(1)$ and $k = o(n)$, the off-diagonal submatrices are asymptotically i.i.d.\! in the sense that 
\begin{equation}\label{eq:AsympIndepBlks}
    \lim_{n\to \infty}\frac{H(\BB_{\mathrm{ut}})}{{n' \choose 2}H(\BB_{12})} = 1.
\end{equation}
\end{restatable}

\begin{rem}
In Proposition~\ref{prop:blkIndep}, we provide a more general result than what we need to prove Theorem~\ref{thm:strongUniv}.
     In Theorem~\ref{thm:strongUniv}, the range of $k$ for the compressor to achieve universality over the family $\mathscr{P}_1(\epsilon)$ is given by
     \[
 0 < \delta < \e, \quad k \le \sqrt{\d \log n}, \quad \text{ and } \quad k = \omega(1),
 \]
    while in Proposition~\ref{prop:blkIndep}, we show the asymptotic i.i.d property for a wider range  $k=\omega(1)$ and $k=o(n)$. 
\end{rem}

\begin{rem}
Proposition~\ref{prop:blkIndep} implies that any probability assignment that is universal for i.i.d. sequences may be used to compress the blocks, provided that the compression length analysis can be expended to asymptotically i.i.d. sequences as in Propositions~\ref{prop:NonIIDcomp} and~\ref{prop:KTlength}. The reason we choose to use the KT/Laplace probability assignment is that it is known to achieve the minimax redundancy and lead to better finite-length performance as compared to, for example, Lempel--Ziv coding which while still being universal incurs larger redundancy. Moreover, the KT-based approach leads to better empirical performance, as observed in Table~\ref{tab:simulation}.
\end{rem}
\subsection{Length Analysis for Correlated Sequences}
Thanks to the asymptotic i.i.d. property of the submatrix decomposition, we hope to compress these submatrices as if they are independent using a Laplace probability assignment (which, recall, is universal for the class of all $m$-ary i.i.d. processes). However, since these submatrices are still correlated (albeit weakly), we will need a result on the performance of Laplace probability assignment on correlated sequences with identical marginals, which we give next.

\begin{restatable}[Laplace probability assignment for correlated sequence]{prop}{laplace}
\label{prop:NonIIDcomp}
Consider arbitrarily correlated $Z_1, Z_2,$ $\ldots, Z_N$, where each $Z_i$ is identically distributed over an alphabet of size $m \geq 2$. Let $\ell_{\Lap}(z^N) = \left\lceil\log \frac{1}{q_{\Lap}(z^N)}\right\rceil+1$, where 
\begin{align}\label{eq:LaplaceProbAssgn}
    q_{\Lap}(z^N) \defeq \frac{N_1!N_2!\cdots N_m!}{N!}\cdot \frac{1}{{N+m-1 \choose m-1}}
\end{align}
with $N_i \triangleq N_i(z^N) = \sum_{j=1}^N \indi\{z_j = i\}$ is the joint distribution induced by the sequential Laplace probability assignment in~\eqref{eq:LaplaceProbAssgnSeq}.
We then have
\begin{equation}
\E[\ell_{\Lap}(Z^N)] \le m\log(2eN) + N H(Z_1)+2. \label{eq:LenLaplaceGen}
\end{equation}
\end{restatable}

We provide a similar result for the KT probability assignment.

\begin{restatable}[KT probability assignment for correlated sequence]{prop}{kt}
\label{prop:KTlength}
 Consider arbitrarily correlated $Z_1, Z_2, \ldots,$ $Z_N$, where each $Z_i$ is identically distributed over an alphabet of size $m \geq 2$. Let $\ell_{\mathrm{KT}} (z^N)=\left\lceil\log \frac{1}{q_{\mathrm{KT}}(z^N)}\right\rceil+1$, where 
\begin{align}\label{eq:KTProbAssgn}
    q_{\mathrm{KT}}(z^N) = \frac{(2N_1 -1)!!(2N_2 -1)!!\cdots (2N_m -1)!!}{m(m+2)\cdots (m+2N-2)}
\end{align}
with $(-1)!! \triangleq 1$ and $N_i \triangleq N_i(z^N) = \sum_{j=1}^N \indi\{z_j = i\}$ is the joint distribution induced by KT probability assignment in~\eqref{eq:KTProbAssgnSeq}.
We then have
  \begin{equation}
\E[\ell_{\mathrm{KT}}(Z^N)]\le \tfrac{m}{2}\log\left(e\bigl(1+\tfrac{2N}{m}\bigr)\right)+\tfrac 12\log(\pi N)+NH(Z_1)+2.\label{eq:LenKT}
\end{equation}
\end{restatable}

\subsection{Proof of Theorem~\ref{thm:strongUniv}}
To prove Theorem~\ref{thm:strongUniv}, we first state a technical lemma, the proof of which is deferred to Section~\ref{sec:graphEntropy}.
\begin{lem}\label{lem:technical}
    Let $A_n \sim \mathrm{SBM}(n,L,\pv,f(n)\Qv)$. Recall that in the stochastic block model, $X_i$ denotes the community label of vertex $i$. 
    We have
    \begin{equation}
    \label{eq:A12-entropy}
        H(A_{12})=h(f(n)\PQP), 
    \end{equation}
    and 
    \begin{equation}
    \label{eq:A12-cond-entropy}
        H(A_{12}|X_1,X_2)=\pv^T h\bigl(f(n)\Qv\bigr)\pv.
    \end{equation}
    Moreover, if $f(n)=O(1)$, then 
    \begin{equation}
    \label{eq:entropy-theta}
        h(f(n)\PQP)=\Theta(\pv^T h\bigl(f(n)\Qv\bigr)\pv), 
    \end{equation}
    and if $f(n)=o(1)$, we further have 
    \begin{equation}
    \label{eq:entropy-1+o1}
        h(f(n)\PQP)=(1+o(1))\pv^T h\bigl(f(n)\Qv\bigr)\pv.
    \end{equation}
\end{lem}
We are now ready to prove Theorem~\ref{thm:strongUniv}.

\begin{proof}[{\bf Proof of Theorem~\ref{thm:strongUniv}}]
We will prove the universality of $C_k$ for both the KT probability assignment and the Laplace probability assignment. Note that the upper bound on the expected length of KT in~\eqref{eq:LenKT} is upper bounded by the upper bound on the length of Laplace in~\eqref{eq:LenLaplaceGen}. So it suffices to show Laplace probability assignment is universal.

 We use the bound in Proposition~\ref{prop:NonIIDcomp} to establish the upper bound on the length of the code. Recall that here we compress the diagonal submatrices $\BB_{\mathrm{d}}$ ($m = 2^{k^2}-$sized alphabet, $N = n'$ submatrices) and the off-diagonal submatrices $\BB_{\mathrm{ut}}$  ($m = 2^{k^2}-$sized alphabet, $N = {n' \choose 2}$ submatrices) separately, and encode the position of each $1$ in the remaining $(n-\tilde{n})\tilde{n} + \binom{n-\tilde{n}}{2}$ entries using $2\lceil \log n \rceil$ bits. Let $N_r$ denote the number of ones in the remaining $(n-\tilde{n})\tilde{n} + \binom{n-\tilde{n}}{2}$ entries. We have, 
 
 \begin{align}
  &\frac{\E(\ell(C_k(A_n)))}{H(A_n)}\nonumber\\ &= \frac{\E(\ell_{\Lap}(\BB_{\mathrm{ut}})) + \E(\ell_{\Lap}(\BB_{\mathrm{d}}))+\E(2N_r\lceil\log n\rceil)}{H(A_n)}\nonumber\\
   &\le \frac{{n' \choose 2}H(\BB_{12}) + 2^{k^2}\log\left(2e{n' \choose 2}\right) + n'H(\BB_{11})}{H(A_n)} \nonumber\\&\;\;\;+ \frac{2^{k^2}\log(2en')+4+\E(2N_r\lceil\log n\rceil)}{H(A_n)}\nonumber \\
   &\stackrel{(a)}{\le}  \frac{{n' \choose 2}H(\BB_{12}) + 2^{k^2}\log\left(en^2\right) + n H(\BB_{11}) }{H(A_n)} \nonumber\\&\;\;\;+ \frac{2^{k^2}\log(2en)+4+\E(2N_r\lceil\log n\rceil)}{H(A_n)}\nonumber \\
   &\stackrel{(b)}{\le} \frac{{n' \choose 2}H(\BB_{12}) + 2^{k^2}\log\left(2e^2n^3\right) + nk^2 H(A_{12})}{H(A_n)} \nonumber\\&\;\;\;+ \frac{4+\E(2N_r\lceil\log n\rceil)}{H(A_n)}\nonumber \\ 
   &= \frac{{n' \choose 2}H(\BB_{12})}{H(A_n)} + \frac{2^{k^2}\log\left(2e^2n^3\right)+4}{H(A_n)} \nonumber\\&\;\;\;+ \frac{nk^2 H(A_{12})}{H(A_n)}+\frac{\E(2N_r\lceil\log n\rceil)}{H(A_n)}\label{eqn:compress-length},
 \end{align}%
 where in (a) we bound ${n' \choose 2} \le n^2$ and $n' \le n$, and in (b) we note that $H(\BB_{11}) \le k^2 H(A_{12})$ since there are $k^2 - k$ elements of the matrix (all apart from the diagonal elements) are distributed identically as $A_{12}$. We will now  analyze each of these four terms separately. Firstly, using Proposition~\ref{prop:blkIndep} yields that $\frac{{n' \choose 2} H(\BB_{12})}{H(A_n)} \to 1$. Next, since $f(n) = \Omega\left(\frac{1}{n^{2-\e}}\right)$, we have $H(A_n) = \Omega(n^{\e} \log n)$ and subsequently substituting $k \le \sqrt{\d \log n}$, we have  
 \begin{equation}
 \frac{2^{k^2} \log (2en^3)+4}{H(A_n)} =  O\left(\frac{n^{\d} \log n}{n^{\e} \log n}\right)  = O\left(n^{\d - \e}\right) = o(1)\label{eqn:small-term-1}
 \end{equation}
 since $\d < \e$. Moreover, we have 
  \begin{align}
 \frac{n k^2 H(A_{12})}{H(A_n)} &\le \frac{n k^2 H(A_{12})}{H(A_n|X^n)}\nonumber\\ & = \frac{n k^2 H(A_{12})}{{n \choose 2}H(A_{12}|X_1,X_2)} \nonumber\\ &= O\left(\frac{k^2}{n}\right) = o(1),\label{eqn:small-term-2}
 \end{align}
 where the penultimate equality used the fact that $H(A_{12}) \sim H(A_{12}|X_1,X_2)$ by Lemma~\ref{lem:technical}.

 For the last term in~\eqref{eqn:compress-length}, we have
  \begin{align}
     \frac{\E(2N_r\lceil\log n\rceil)}{H(A_n)}&=\frac{f(n)\PQP\left((n-\tilde{n})\tilde{n} + \binom{n-\tilde{n}}{2}\right)2\lceil\log n\rceil}{H(A_n)}\nonumber \\ &\stackrel{(c)}{=}O\left(\frac{knf(n)\log n}{\binom{n}{2}f(n)\log \frac{1}{f(n)}}\right)=o(1),\label{eqn:small-term-3}
 \end{align}
 where (c) follows because $n-\tilde{n}=n-n\lfloor n/k\rfloor <k$, $\tilde{n}\le n$ and $k=O(\sqrt{\log n})$.
 We have then established that 
 \begin{align*}
  \frac{\E(\ell(C_k(A_n)))}{H(A_n)} &\le \frac{{n' \choose 2}H(\BB_{12})}{H(A_n)} + \frac{2^{k^2}\log\left(2en^3\right)+4}{H(A_n)}\\  &+ \frac{nk^2 H(A_{12})}{H(A_n)}+\frac{\E(2N_r\lceil\log n\rceil)}{H(A_n)} \\
  &= 1 + o(1),
 \end{align*}
 which finishes the proof.
\end{proof}
\subsection{Proof of Theorem~\ref{thm:weakUniv}}
\begin{proof}
To establish the universality of $C_1$ over the family $\mathscr{P}_2$, we follow a similar argument as in the proof of Theorem~\ref{thm:strongUniv}. In the case when $k=1$, we do not need to compress the diagonal submatrix sequence since they are all zeros and we do not have any remaining bits to compress separately. By Proposition~\ref{prop:NonIIDcomp} with $N = \binom{n}{2}$, we have
\begin{align*}
    &\frac{\E(\ell(C_1(A_n)))}{H(A_n)}\\ &\le  \frac{2\log(2eN) + N H\left(A_{12}\right)+2}{H(A_n)} \\
    &\le \frac{2\log(2eN) + N H\left(A_{12}\right)+2}{H(A_n|X_1^n)} \\
    &= \left(\frac{2\log(2eN) + N H\left(A_{12}\right)+2}{N H(A_{12})} \right) \frac{H(A_{12})}{H(A_{12}|X_1,X_2)} \\
    &= \left(1 + \frac{2\log(2eN)+2}{Nh(f(n)\pv^T \Qv \pv)}  \right)\frac{h(f(n)\pv^T \Qv \pv)}{\pv^T h(f(n)\Qv) \pv}\\
    &\stackrel{(a)}{=} 1 + o(1).
\end{align*}
Here, (a) is justified by noting that $\frac{2\log(2eN)+2}{Nh(f(n)\pv^T \Qv \pv)} \le \frac{2\log(2en^2)+2}{{n \choose 2}h(n^{-(2-\e)}\pv^T \Qv \pv)} \frac{h(n^{-(2-\e)}\pv^T \Qv \pv)}{h(f(n)\pv^T \Qv \pv)}$, and then noting that  $\frac{2\log(2en^2)+2}{{n \choose 2}h(n^{-(2-\e)}\pv^T \Qv \pv)} = o(1)$ and $\frac{h(n^{-(2-\e)}\pv^T \Qv \pv)}{h(f(n)\pv^T \Qv \pv)} = O(1)$ when $f(n) = \Omega\left(\frac{1}{n^{2-\e}}\right)$ and that $h(f(n)\pv^T \Qv \pv) =(1+o(1)) \pv^T h(f(n)\Qv) \pv$ when $f(n)=o(1)$ by Lemma~\ref{lem:technical}. 
\end{proof}

\begin{rem}
When $f(n) = 1$, the compressor $C_1$ is strictly suboptimal. This is because the length achieved by $C_1$ is ${n \choose 2}h\left(f(n) \pv^T \Qv \pv \right) (1 + o(1))$, whereas the first order term in the entropy is ${n \choose 2} \pv^T  h\left(f(n)\Qv\right)  \pv$. When $f(n)$ is $o(1)$, these two have the same first-order term. However, when $f(n)$ is constant, $\pv^T  h\left(f(n)\Qv\right)  \pv$ is strictly smaller than $h\left(f(n) \pv^T \Qv \pv \right) $ by concavity of entropy.
\end{rem}
\section{Proof of Propositions~\ref{prop:graphEntropy}--\ref{prop:KTlength}}\label{sec:proposition}
\subsection{Graph Entropy}
\label{sec:graphEntropy}
In this section, we establish Proposition~\ref{prop:graphEntropy}. Towards that goal, we state and prove a stronger result on the graph entropy in Lemma~\ref{lem:graphEntropy}, which will also be useful for proof of Theorem~\ref{thm:minimaxredun} later. Proposition~\ref{prop:graphEntropy} will follow easily from Lemma~\ref{lem:graphEntropy}. The proofs of Lemmas~\ref{lem:technical} and~\ref{lem:graphEntropy} are provided after the proof of Proposition~\ref{prop:graphEntropy}.
\begin{lem}[Graph entropy]\label{lem:graphEntropy}
 Let $A_n \sim \mathrm{SBM}(n,L,\pv,f(n)\Qv)$ with $f(n) = O(1), f(n) = \Omega\left(\frac{1}{n^2}\right)$, and $L = \Theta(1)$. 
 Then
 {\allowdisplaybreaks
 \begin{align}
  \label{eq:graphEntropyl}
  H(A_n) &= {n \choose 2} H(A_{12}|X_1,X_2) (1+o(1))\\
  &= {n \choose 2}\pv^T h\bigl(f(n)\Qv\bigr) \pv + o\left(n^2h\bigl(f(n)\bigr)\right).\label{eq:graphEntropyl2}
 \end{align}}%

 In particular, when $f(n) = \omega\left(\frac{1}{n}\right)$ and $f(n) = o(1)$, expression~\eqref{eq:graphEntropyl2} can be simplified as\footnote{In the concurrent work~\cite{Delgosha--Anantharam2021b}, the authors derive the first and second order terms in the entropy of sparse graphons. Proposition 2 in~\cite{Delgosha--Anantharam2021b} could specialize to the case of stochastic block models with $f(n) = \omega\left(\frac{1}{n}\right)$ and $f(n) = o(1)$, and yield the same result as in equation~\eqref{eq:Entropy2ndOrder}.}
 \begin{align}
     H(A_n)=&\binom{n}{2}f(n)\bigg(\log\left(\frac{1}{f(n)}\right)\PQP+\PQP\log e\nonumber\\ +&\PsQP+o(1)\bigg),\label{eq:Entropy2ndOrder}
 \end{align}
 where $\mathbf{Q^*}$ denotes an $L\times L$ matrix whose $(i,j)$ entry is $Q_{ij}\log (\frac{1}{Q_{ij}})$ when $Q_{ij}\neq 0$ and $0$ when $Q_{ij}=0$.
 
 When $f(n) = \Omega\left(\frac{1}{n^2}\right)$ and $f(n) = O\left(\frac{1}{n}\right)$, the entropy $H(A_n)$ can be upper bounded as
\begin{align}
  H(A_n)\le &\binom{n}{2}f(n)\bigg(\log\left(\frac{1}{f(n)}\right)\PQP+\PQP\log e\nonumber\\  &+\PQP\log\frac{1}{\PQP}+o(1) \bigg),\label{eq:secondorderupper}
\end{align}
and lower bounded as
\begin{align}
  H(A_n)\ge &\binom{n}{2}f(n)\bigg(\log\left(\frac{1}{f(n)}\right)\PQP+\PQP\log e\nonumber\\ &+\PsQP+o(1) \bigg).\label{eq:secondorderlower}
\end{align}

\end{lem}

Now, we are ready to prove Proposition~\ref{prop:graphEntropy}, which is restated as follows.
\graphEntropy*

\begin{proof}[\bf Proof of Proposition~\ref{prop:graphEntropy}]
Given Lemma~\ref{lem:graphEntropy}, it suffices to prove equation~\eqref{eq:graphEntropyo(1)} in Proposition~\ref{prop:graphEntropy}. We consider two different cases. Firstly, assume that $f(n)=\omega\left(\frac1n\right)$ and $f(n)=o(1)$. Recall that the case of $f(n) = o(1)$ is particularly interesting as the simple compressor $C_1$ attains universality. In this case, by equation~\eqref{eq:Entropy2ndOrder}, we have
\begin{align*}
    H(A_n)&=\binom{n}{2}f(n)\bigg(\log\left(\frac{1}{f(n)}\right)\PQP+\PQP\log e\\ &\;\;\;+\PsQP+o(1)\bigg)\\
    &={n \choose 2} f(n)\log\left(\frac{1}{f(n)}\right)(\pv^T\Qv\pv + o(1)).
\end{align*}
because $\PQP=\Theta(1)$, $\PsQP=\Theta(1)$ and $\log\left(\frac{1}{f(n)}\right)=\omega(1)$. Secondly, assume that $f(n)=O(\frac{1}{n})$ and $f(n)=\Omega(\frac{1}{n^2})$. In this case, the right-hand side of equations~\eqref{eq:secondorderupper} and~\eqref{eq:secondorderlower} can both be written as ${n \choose 2} f(n)\log\left(\frac{1}{f(n)}\right)(\pv^T\Qv\pv + o(1))$ because $\PQP=\Theta(1)$, $\PsQP=\Theta(1)$ and $\log\left(\frac{1}{f(n)}\right)=\omega(1)$. Therefore, equation~\eqref{eq:graphEntropyo(1)} follows.
\end{proof}

Finally, we establish Lemmas~\ref{lem:technical} and~\ref{lem:graphEntropy}.

\begin{proof}[\bf Proof of Lemma~\ref{lem:technical}]
    We note that
    \begin{align*}
        &\P(A_{12}=1)\\&=\sum_{1\le i,j\le L}\P(A_{12}=1,X_1=i,X_2=j)\\
        &=\sum_{1\le i,j\le L}\P(A_{12}=1|X_1=i,X_2=j)P(X_1=i,X_2=j)\\
        &=\sum_{1\le i,j\le L}f(n)Q_{ij}p_ip_j=f(n)\PQP.
    \end{align*}
    Therefore, $A_{12}\sim \bern(f(n)\PQP)$, which implies that $H(A_{12})=h(f(n)\PQP)$.
    For $H(A_{12}|X_1,X_2)$, by the definition of conditional entropy, we have
    \begin{align*}
        H(A_{12}|X_1,X_2)&=\sum_{i,j} H(A_{12}|X_1 = i,X_2 = j)p_i p_j
        \\
        &\stackrel{(a)}{=}\sum_{i,j} h(f(n)Q_{ij})p_i p_j\\
        &=\pv^T h\bigl(f(n)\Qv\bigr)\pv,
    \end{align*}
    where (a) follows because $A_{12}|X_1=i,X_2=j\sim\bern(f(n)Q_{ij})$.
    
    To see that $h(f(n)\PQP)=\Theta(\pv^T h\bigl(f(n)\Qv\bigr)\pv)$ when $f(n)=\Theta(1)$, we notice that $\PQP=\Theta(1)$ because all entries in $\pv$ and $\Qv$ are constant. Moreover, $\PQP<1$ because $\max_{i,j}Q_{ij}<1$ and all entries in $\pv$ are non-zero. As a result, we have $h(f(n)\PQP)=\Theta(1)$. For $\pv^T h\bigl(f(n)\Qv\bigr)\pv$, similarly, we observe that $\max_{i,j}Q_{ij}=\Theta(1)$ and $\max_{i,j}Q_{ij}<1$, so $\pv^T h\bigl(f(n)\Qv\bigr)\pv=\Theta(1)$ since all entries in $\pv$ are constants. This completes the proof of~\eqref{eq:entropy-theta}.

    Now, consider the case of $f(n)=o(1)$. By the property of binary entropy function (see for example~\cite{Courtade2012}), if $f(n)=o(1)$, we can expand $h(f(n)\PQP)$ and $\pv^T h\bigl(f(n)\Qv\bigr)\pv$ as
    \begin{align}
        h(f(n)\PQP)=&f(n)\bigg(\log \left(\frac{1}{f(n)}\right)\PQP+\PQP\log e\nonumber\\ &+\mathbf{p}^T\mathbf{Qp}\log\frac{1}{\PQP}+o(1)\bigg),\label{eq:hpqp-expand}
    \end{align}
    and
    \begin{align}
        \pv^T h\bigl(f(n)\Qv\bigr)\pv= &f(n)\bigg(\log \left(\frac{1}{f(n)}\right)\PQP+\PQP\log e\nonumber\\ &+\mathbf{p}^T\mathbf{Q^*p}+o(1)\bigg), \label{eq:phqp-expand}
    \end{align}
    where $\mathbf{Q^*}$ denotes an $L\times L$ matrix whose $(i,j)$ entry is $Q_{ij}\log (\frac{1}{Q_{ij}})$ when $Q_{ij}\neq 0$ and $0$ when $Q_{ij}=0$. 
    To see~\eqref{eq:entropy-1+o1}, we take~\eqref{eq:hpqp-expand} minus~\eqref{eq:phqp-expand} and get
    \begin{align*}
        &h(f(n)\PQP)-\pv^T h\bigl(f(n)\Qv\bigr)\pv\\
        &=f(n)\left(\mathbf{p}^T\mathbf{Qp}\log\frac{1}{\PQP}-\mathbf{p}^T\mathbf{Q^*p}+o(1)\right)\\
        &\stackrel{(b)}{=}o(h(f(n)\PQP)),
    \end{align*}
    where (b) follows because $\log \left(\frac{1}{f(n)}\right)=\omega(1)$ and $\mathbf{p}^T\mathbf{Qp}\log\frac{1}{\PQP}=\Theta(1)$ and $\mathbf{p}^T\mathbf{Q^*p}=\Theta(1)$. This completes the proof of~\eqref{eq:entropy-1+o1}.
\end{proof}

\begin{proof}[\bf Proof of Lemma~\ref{lem:graphEntropy}]
 Note that
 {\allowdisplaybreaks
\begin{align}
    H(A_n) &= H(A_n|X^n) + I(X^n;A_n)\nonumber \\
    &= {n \choose 2}H(A_{12}|X_1,X_2) + I(X^n;A_n) \label{eq:condindepAij}\\
    &= {n \choose 2}\pv^T h\bigl(f(n)\Qv\bigr)\pv  + I(X^n;A_n), \label{eq:upperboundEntropyRateCommLabels}
\end{align}}%
where the last equality follows by Lemma~\ref{lem:technical} and the fact that all the ${n \choose 2}$ edges are identically distributed and also independent given $X^n$.
Also, notice that
\begin{align*}
    h(f(n))&=f(n)\log\frac{1}{f(n)}+(1-f(n))\log\frac{1}{1-f(n)}\\
    &=\Theta(1)=\Theta\left(f(n)\log\frac{1}{f(n)}\right)
\end{align*}
when $f(n)=\Theta(1)$, and 
\begin{align*}
    h(f(n))&=f(n)\log\frac{1}{f(n)}+(1-f(n))\log\frac{1}{1-f(n)}\\
    &=f(n)\log\frac{1}{f(n)}(1-f(n))\frac{f(n)}{1-f(n)}\\
    &=f(n)\log\frac{1}{f(n)}(1+o(1))
\end{align*}
when $f(n)=o(1)$. As a result, to prove~\eqref{eq:graphEntropyl2}, if suffices to prove 
\[
H(A_n)=  {n \choose 2}\pv^T h\bigl(f(n)\Qv\bigr) \pv + o\left(n^2f(n)\log\frac{1}{f(n)}\right).
\]
In the following, we first establish statement~\eqref{eq:graphEntropyl2} for $f(n) = \Theta(1)$ and then prove the simplified expression~\eqref{eq:Entropy2ndOrder} for $f(n) = \omega\left(\frac{1}{n}\right)$ and $f(n) = o(1)$. Finally, we prove the refined upper and lower bounds on $H(A_n)$ in~\eqref{eq:secondorderupper} and~\eqref{eq:secondorderlower} for $f(n) = \Omega\left(\frac{1}{n^2}\right)$ and $f(n) = O\left(\frac{1}{n}\right)$, which implies statement~\eqref{eq:graphEntropyl2} for $f(n)$ in the same regime.

For $f(n)= \Theta(1)$, 
we have
\begin{align}
    0 \le I(X^n;A_n) &\le H(X^n) = n H(X_1)\nonumber\\ &\le n \log L=o\left(n^2f(n)\log \frac{1}{f(n)}\right),\label{eq:MI_order}
\end{align}
which shows that 
\begin{align*}
H(A_n) &= {n \choose 2}\pv^T h\bigl(f(n)\Qv\bigr)\pv + o\left(n^2 \right)\\&={n \choose 2}\pv^T h\bigl(f(n)\Qv\bigr) \pv + o\left(n^2f(n)\log\frac{1}{f(n)}\right).
\end{align*}

Next, consider the case when $f(n) = \omega\left(\frac{1}{n}\right)$ and $f(n) = o(1)$. By equation~\eqref{eq:phqp-expand}, we have
\begin{align}
    &\pv^T h\bigl(f(n)\Qv\bigr)\pv\nonumber\\&= f(n)\bigg(\log \left(\frac{1}{f(n)}\right)\PQP+\PQP\log e\nonumber\\ &+\mathbf{p}^T\mathbf{Q^*p}+o(1)\bigg)\nonumber\\
    &\stackrel{(a)}{=} f(n)\left(\log \left(\frac{1}{f(n)}\right)\PQP\right)(1+o(1)),\label{eq:phqp-first-order}
\end{align}
where (a) follows since $\log \left(\frac{1}{f(n)}\right)=\omega(1)$ and $\mathbf{p}^T\mathbf{Qp}\log\frac{1}{\PQP}=\Theta(1)$ and $\mathbf{p}^T\mathbf{Q^*p}=\Theta(1)$.

Since $f(n)=\omega\left(\frac{1}{n}\right)$, we have 
\begin{align}
    I(X^n;A_n) &\le H(X^n) = n H(X_1)\nonumber\\ &\le n \log L=o\left(f(n)\log\frac{1}{f(n)}n^2\right),\label{eq:case2MI}
\end{align}
which proves~\eqref{eq:graphEntropyl2}.
Substituting~\eqref{eq:phqp-first-order} and~\eqref{eq:case2MI} into~\eqref{eq:upperboundEntropyRateCommLabels} yields 
\begin{align*}
H(A_n)=&\binom{n}{2}f(n)\bigg(\log\left(\frac{1}{f(n)}\right)\PQP+\PQP\log e\\ &+\PsQP+o(1)\bigg).
\end{align*}

Finally, consider the case when $f(n) = \Omega\left(\frac{1}{n^2}\right)$ and $f(n) = O\left(\frac{1}{n}\right)$. By properties of the entropy, we have
\begin{align}\label{eq:EntropyBdSadwich}
  H(A_n|X^n) \le  H(A_n) \le {n \choose 2} H(A_{12}).
\end{align}
By Lemma~\ref{lem:technical}, we know that
\begin{align}\label{eq:EntropyTightBds}
  {n \choose 2} \pv^T h(f(n)\Qv) \pv \le  H(A_n) \le {n \choose 2} h\left(f(n)\pv^T\Qv\pv\right).
\end{align}
By~\eqref{eq:hpqp-expand} and~\eqref{eq:phqp-expand}, we can rewrite the upper and lower bounds as
\begin{align}
    &{n \choose 2} h\left(f(n)\pv^T\Qv\pv\right)\nonumber\\&=\binom{n}{2}f(n)\bigg(\log \left(\frac{1}{f(n)}\right)\PQP+\PQP\log e\nonumber\\ &+\mathbf{p}^T\mathbf{Qp}\log\frac{1}{\PQP}+o(1)\bigg).\label{eq:case3upper}
\end{align}
and
\begin{align}
    &\binom{n}{2}\mathbf{p^T}h(f(n)\mathbf{Q})\mathbf{p}\nonumber\\ &=\binom{n}{2}f(n)\bigg(\log \left(\frac{1}{f(n)}\right)\PQP+\PQP\log e\nonumber\\&+\mathbf{p}^T\mathbf{Q^*p}+o(1)\bigg),\label{eq:case3lower}
\end{align}
By substituting~\eqref{eq:case3lower} and~\eqref{eq:case3upper} into~\eqref{eq:EntropyTightBds}, we get bounds~\eqref{eq:secondorderupper} and~\eqref{eq:secondorderlower} in this regime. Next, we are going to verify~\eqref{eq:graphEntropyl2} in this regime. 
To see that, we take~\eqref{eq:case3upper} minus~\eqref{eq:case3lower} and get
\begin{align*}
    &{n \choose 2} h\left(f(n)\pv^T\Qv\pv\right)-{n \choose 2} \pv^T h(f(n)\Qv) \pv\\&= {n \choose 2} f(n)\left(\PQP\log\frac{1}{\PQP}-\PsQP+o(1)\right)\\
    &\stackrel{(a)}{=}o\left(n^2 f(n)\log\frac{1}{f(n)}\right),
\end{align*}
where (a) follows because $f(n)=o(1)$ and $\left(\PQP\log\frac{1}{\PQP}-\PsQP\right)=\Theta(1)$.
\end{proof}

\subsection{Asymptotic i.i.d.\! via Submatrix Decomposition}
We first invoke a known property of stochastic block models (see, for example,~\cite{Lauritzen--Rinaldo--Sadeghi2017, Orbanz--Roy2014}). We include the proof here for completeness.
\begin{lem}[Exchangeability of SBM]\label{lem:exchangeability}
 Let $A_n \sim \mathrm{SBM}(n,L,\pv,\Wv)$. For a permutation $\pi: [n] \to [n]$, let $\pi(A_n)$ be an $n \times n$ matrix whose $(i,j)$ entry is given by $A_{\pi(i),\pi(j)}$. Then, for any permutation $\pi: [n] \to [n]$, the joint distribution of $A_n$ is the same as the joint distribution of $\pi(A_n)$, i.e.,
 \begin{equation}
  A_n \stackrel{d}{=} \pi(A_n).
 \end{equation}

\end{lem}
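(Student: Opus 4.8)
The plan is to write down the probability mass function of $A_n$ explicitly, by conditioning on the community assignments, and then check directly that it is invariant under relabeling the vertices by $\pi$. Concretely, for a fixed labeled simple graph $A$ on $[n]$, the independence structure of the SBM gives
\begin{equation*}
\P(A_n = A) = \sum_{x \in [L]^n} \Bigl(\prod_{i=1}^n p_{x_i}\Bigr)\prod_{1 \le i < j \le n} W_{x_i,x_j}^{A_{ij}}\,(1-W_{x_i,x_j})^{1-A_{ij}}.
\end{equation*}

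First I would rewrite the target event: $\{\pi(A_n) = A\}$ means $(A_n)_{\pi(i),\pi(j)} = A_{ij}$ for all $i,j$, i.e.\ $(A_n)_{ij} = A_{\pi^{-1}(i),\pi^{-1}(j)}$ for all $i,j$. Hence $\P(\pi(A_n) = A) = \P(A_n = B)$ with $B_{ij} \defeq A_{\pi^{-1}(i),\pi^{-1}(j)}$. Next I would change the summation variable in the formula for $\P(A_n = B)$ by setting $y_i \defeq x_{\pi(i)}$: this is a bijection of $[L]^n$, it leaves $\prod_i p_{x_i} = \prod_i p_{y_i}$ unchanged, and — using that $\Wv$ and the adjacency matrices are symmetric, so the product over $i<j$ is really a product over unordered pairs — the bijection $\{i,j\}\mapsto\{\pi^{-1}(i),\pi^{-1}(j)\}$ on unordered pairs carries the factor for $\{i,j\}$ in $\P(A_n=B)$ (with labels $x$) onto the factor for $\{\pi^{-1}(i),\pi^{-1}(j)\}$ in $\P(A_n=A)$ (with labels $y$). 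Summing, $\P(A_n=B) = \P(A_n=A)$, which is the claim.

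An equivalent and slightly cleaner route avoids the computation: realize $A_n$ by drawing $X_1,\dots,X_n$ i.i.d.\ $\sim\pv$ and independent uniforms $\{U_{\{i,j\}}\}_{i<j}$ on $[0,1]$ and putting $A_{ij} = \indi\{U_{\{i,j\}}\le W_{X_i,X_j}\}$. Setting $X'_i \defeq X_{\pi(i)}$ and $U'_{\{i,j\}} \defeq U_{\{\pi(i),\pi(j)\}}$ yields a family with the same joint law (i.i.d.\ community labels, independent uniform edge seeds, mutually independent) whose adjacency matrix is exactly $\pi(A_n)$; hence $\pi(A_n)\stackrel{d}{=}A_n$.

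Either way, there is no genuine obstacle here: the only point needing care is that the edge product ranges over unordered pairs, so that the symmetry of $\Wv$ (and of the adjacency matrix) is what legitimizes the reindexing $\{i,j\}\mapsto\{\pi^{-1}(i),\pi^{-1}(j)\}$ without having to track edge orientation. Everything else is just the standard fact that an i.i.d.\ sequence is exchangeable, pushed through the conditional-independence structure of the model.
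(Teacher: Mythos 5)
Your first argument is correct and is essentially the paper's own proof: both condition on the community assignments $X^n$, write the conditional law as a product of Bernoulli factors $W_{x_i,x_j}^{a_{ij}}(1-W_{x_i,x_j})^{1-a_{ij}}$ over unordered pairs, and reindex the sum and the pair-product through $\pi$, using the i.i.d.\ labels and the symmetry of $\Wv$ and of the adjacency matrix. Your second, coupling-style realization via independent uniforms $U_{\{i,j\}}$ is a valid (and arguably cleaner) alternative, but it is not needed beyond the first argument.
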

\begin{proof}Let $a_n$ be a realization of the random matrix $A_n$ and $\pi(X^n)$ be the permuted vector $(X_{\pi(1)}, \ldots, X_{\pi(n)})$. For any symmetric binary matrix $a_n$ with zero diagonal entries, we have
{\allowdisplaybreaks
\begin{align*}
 &\P(A_n = a_n) \\&= \sum_{x^n \in [L]^n} \P(A_n = a_n,X^n = x^n)\\
 &= \sum_{x^n \in [L]^n} \P(A_n = a_n | X^n = x^n) \prod_{i=1}^n \P(X_i = x_i)\\
 &\stackrel{(a)}{=} \sum_{x^n \in [L]^n} \prod_{\substack{i,j\\ 1 \le i < j \le n}} \P(A_{ij} = a_{ij}|X_i = x_i, X_j = x_j) \\&\hspace{5em}\cdot\prod_{i=1}^n \P(X_{\pi(i)} = x_i)\\
 &\stackrel{(b)}{=} \sum_{x^n \in [L]^n}\prod_{\substack{i,j\\ 1 \le i < j \le n}} (W_{x_i,x_j})^{a_{ij}}(1-W_{x_i,x_j})^{1-a_{ij}}
 \\&\hspace{5em}\cdot\prod_{i=1}^n \P(X_{\pi(i)} = x_i)\\
 &\stackrel{(c)}{=} \sum_{x^n \in [L]^n} \prod_{\substack{i,j\\ 1 \le i < j \le n}} \P(A_{\pi(i),\pi(j)} = a_{ij}|X_{\pi(i)} = x_i, X_{\pi(j)} = x_j) 
 \\&\hspace{5em}\cdot\prod_{i=1}^n \P(X_{\pi(i)} = x_i) \\
 &{=} \sum_{x^n \in [L]^n} \P(\pi(A_n) = a_n, \pi(X^n) = x^n)\\
 &= \P(\pi(A_n) = a_n),
\end{align*}}%
where $(a)$ follows since $X^n$ are i.i.d. and thus $\P(X_i = x_i) = \P(X_{\pi(i)} = x_i)$ and $(b)$ follows since $A_{ij} \sim \Bern(W_{X_i,X_j})$, and thus 
\begin{align}
 &\P(A_{ij} = a_{ij}|X_i = x_i, X_j = x_j)\nonumber\\ &= \begin{cases}
                                             W_{x_i,x_j} & \text{ if } a_{ij} = 1\\
                                             1-W_{x_i,x_j} & \text{ if } a_{ij} = 0
                                            \end{cases}\label{eq:condProb1}\\
           &= (W_{x_i,x_j})^{a_{ij}}(1-W_{x_i,x_j})^{1-a_{ij}}. \label{eq:condProb2}
\end{align}
The step in $(c)$ follows since $A_{\pi(i),\pi(j)} \sim \Bern(W_{X_{\pi(i)},X_{\pi(j)}})$ and the conditional probability has the same expression as in~\eqref{eq:condProb2}.
\end{proof}

Now we are ready to establish Proposition~\ref{prop:blkIndep}, which is restated as follows.
\decomp*
\begin{proof}[\bf Proof of Proposition~\ref{prop:blkIndep}]
For any $i_1 \neq j_1$ and $i_2 \neq j_2$, $1 \le i_1,j_1,i_2,j_2 \le n'$,  consider a permutation $\pi_1: [n] \to [n]$ that has 
\[
   \pi_1(x) =\Big\{ \begin{array}{lr}
       x + (i_2-i_1)k & \text{for } (i_1-1)k + 1 \leq x \leq i_1k\\
       x + (j_2-j_1)k & \text{for } (j_1-1)k + 1 \leq x \leq j_1k 
        \end{array}
  \]
  and the remaining $n-2k$ arguments are mapped to the $n-2k$ values in $[n]\setminus \{(i_2-1)k + 1 ,\ldots, i_2k, (j_2-1)k ,\ldots ,j_2k\}$ in any order. Lemma~\ref{lem:exchangeability} implies that $\BB_{i_1,j_1}$, which is the submatrix formed by the rows 
  $(i_1-1)k + 1, \ldots, i_1k$ and the columns $(j_1-1)k + 1,\ldots, j_1k$ has the same distribution  as the submatrix formed by the rows $\pi_1((i_1-1)k + 1),\ldots, \pi_1( i_1k)$ and the columns $\pi_1((j_1-1)k + 1),\ldots, \pi_1(j_1k)$. From the definition of $\pi_1$, we see that the latter submatrix is $\BB_{i_2,j_2}$ and we establish that $\BB_{i_1,j_1} \stackrel{d}{=} \BB_{i_2,j_2}$. Similarly, defining a permutation $\pi_2: [n] \to [n]$ which has 
  \[
   \pi_2(x) = \begin{array}{lr}
       x + (l_2-l_1)k & \text{for } (l_1-1)k + 1 \leq x \leq l_1k
        \end{array}
  \]
  and invoking Lemma~\ref{lem:exchangeability} establishes $ \BB_{l_1,l_1} \stackrel{d}{=} \BB_{l_2,l_2}$.
 
 Now, clearly $H(\BB_{\mathrm{ut}}) \le {n' \choose 2}H(\BB_{12})$, and therefore we have 
\begin{equation}\label{eq:limsupBut}
    \limsup_{n \to \infty} \frac{H(\BB_{\mathrm{ut}})}{{n' \choose 2}H(\BB_{12})} \le 1.
\end{equation}
Moreover, we have $H(A_n) = H(\BB_{\mathrm{ut}}, \BB_{\mathrm{d}}) \le H(\BB_{\mathrm{ut}}) + H(\BB_{\mathrm{d}}) \le H(\BB_{\mathrm{ut}}) + n'H(\BB_{\mathrm{11}}) \le  H(\BB_{\mathrm{ut}}) + n'k^2 H(A_{12})$ where the last inequality follows by noting that except for the diagonal elements of $\BB_{\mathrm{d}}$ (which are zero and thus have zero entropy), all other elements have the same distribution as $A_{12}$. We therefore obtain
$H(\BB_{\mathrm{ut}}) \ge H(A_n) - n'k^2H(A_{12}) \ge H(A_n) - nk H(A_{12}) \ge H(A_n|X_1^n) - nk H(A_{12}) = {n \choose 2}\pv^T h(f(n)\Qv)\pv - nk h(f(n)\pv^T \Qv\pv)$, where the last equality follows by Lemma~\ref{lem:technical}. Consequently,
\begin{align}\label{eq:lbBut}
   \frac{H(\BB_{\mathrm{ut}})}{{n' \choose 2}H(\BB_{12})} \ge \frac{{n \choose 2} \left(\pv^T h(f(n)\Qv) \pv - \frac{2 k h(f(n)\pv^T \Qv\pv) }{n-1}\right)}{{n' \choose 2}H(\BB_{12})}.
\end{align} 
We will now analyze the right-hand side of~\eqref{eq:lbBut} in two parameter regimes. 
{\allowdisplaybreaks
\begin{itemize}
    \item \underline{$f(n) = \Theta(1)$ :} We have 
\begin{align}
    H(\BB_{12}) &\stackrel{(a)}{\le} H(\BB_{12}|X_1^{2k}) + H(X_1^{2k}) \nonumber \\
    &= H(\BB_{12}|X_1^{2k}) + 2k H(\pv) \nonumber\\
    &\stackrel{(b)}{=} k^2 H(A_{1,k+1}|X_1,X_{k+1}) + 2k H(\pv) \nonumber\\
    &\stackrel{(c)}{\le} k^2\left(\pv^T h(\Qv)\pv + 2 \frac{\log L}{k}\right), \label{eq:BlockEntropyUB}
\end{align}
where (a) follows from the chain rule, (b) follows since all elements of the matrix $\BB_{12}$ are independent given $X_1,\cdots,X_{2k}$ and (c) follows by Lemma~\ref{lem:technical}.
    Plugging this into the right-hand side of~\eqref{eq:lbBut} we obtain 
    \begin{align}
        \frac{H(\BB_{\mathrm{ut}})}{{n' \choose 2}H(\BB_{12})} 
         \ge\frac{{n \choose 2}\left(\pv^T h(\Qv)\pv - \frac{2k h(\pv^T \Qv\pv)}{n-1}\right)}{{n' \choose 2}k^2\left(\pv^T h(\Qv)\pv + 2 \frac{\log L}{k}\right)}. \label{eq:liminfBut}
    \end{align}
    Since $k = o(n), k = \omega(1)$ and ${n' \choose 2}k^2 \sim {n \choose 2}$, we have from~\eqref{eq:liminfBut} 
    \begin{align}\label{eq:liminfone}
          \liminf_{n \to \infty} \frac{H(\BB_{\mathrm{ut}})}{{n' \choose 2}H(\BB_{12})} \ge 1,
    \end{align}
    which together with~\eqref{eq:limsupBut} yields the required result.

    \item \underline{$f(n) = \Omega\left(\frac{1}{n^2}\right), f(n) = o(1):$} Since $\BB_{12}$ is a matrix of $k^2$ identically distributed Bernoulli random variables, we have 
    \begin{align}\label{eq:vanishingPblkEntropy}
    H(\BB_{12}) \le k^2 h(A_{1,k+1}) = k^2 h\left(f(n)\pv^T \Qv \pv \right).
    \end{align}
    Plugging this into the RHS of~\eqref{eq:lbBut} then yields 
    \begin{align}
        &\frac{H(\BB_{\mathrm{ut}})}{{n' \choose 2}H(\BB_{12})} \ge \frac{{n \choose 2}\left(\pv^T h(f(n)\Qv)\pv - \frac{2k h(f(n)\pv^T \Qv\pv)}{n-1}\right)}{{n' \choose 2}k^2 h\left(f(n)\pv^T \Qv\pv\right)}. \label{eq:bbutVanishP}
    \end{align}
    We first observe that in this parameter range, since $f(n) = o(1)$, 
    by Lemma~\ref{lem:technical}, we have 
    \begin{align}\label{eq:entropyGapVanishes}
     \pv^T h(f(n)\Qv)\pv   =(1+o(1))  h\left(f(n)\pv^T \Qv\pv\right).
    \end{align}
     Finally using that $k = o(n)$ and ${n' \choose 2}k^2 \sim {n \choose 2}$, we establish
     \begin{align}\label{eq:liminfoneVanishing}
          \liminf_{n \to \infty} \frac{H(\BB_{\mathrm{ut}})}{{n' \choose 2}H(\BB_{12})} \ge 1,
    \end{align}
    which together with~\eqref{eq:limsupBut} yields the required result. 
\end{itemize}}%
\end{proof}

\subsection{Length of the Laplace Probability Assignment}
In this section, we prove Proposition~\ref{prop:NonIIDcomp}, which is restated as follows.
\laplace*
\begin{proof}[\bf Proof of Proposition~\ref{prop:NonIIDcomp}]
The joint probability of Laplace probability assignment given in equation~\eqref{eq:LaplaceProbAssgn} is a classic result. Nonetheless, we provide the derivation of~\eqref{eq:LaplaceProbAssgn} in Appendix~\ref{sec:joint-dist-lap-kt}.

Now, we move on to prove the expected length upper bound ~\eqref{eq:LenLaplaceGen}. Let us first elaborate on the relation between probability assignment and compression length. 
In Algorithm~\ref{alg:arithmetic}, the terms $\log(q(x_{j+1}|x^j))$ are added up, which leads to the marginal probability implied by the sequential probability assignment
\begin{equation}\label{eq:marginal}
 \sum_{j=0}^{N-1}\log(q(x_{j+1}|x^j)) = \log \left(\prod_{j=0}^{N-1} q(x_{j+1}|x^j)\right) = \log(q(x^N)).
\end{equation}
The compression output length of Algorithm~\ref{alg:arithmetic} is $\left\lceil\log\frac{1}{q(x^N)}\right\rceil + 1$. 

Now we analyze the compression length of the Laplace compressor for the sequence $Z_1,Z_2,\ldots,Z_N$. 
Define $\theta_i \defeq \P(Z_1 = i), N_i \defeq \sum_{k=1}^N \indi \{Z_k = i\}, i \in [m]$. 
We have 
{\allowdisplaybreaks
\begin{align}
      &\log \frac{1}{q_{\Lap}(z^N)}\nonumber\\&= \log \frac{\theta_1^{N_1}\theta_2^{N_2}\cdots\theta_m^{N_m}}{q_{\Lap}(z^N)} + \log \frac{1}{\theta_1^{N_1}\theta_2^{N_2}\cdots\theta_m^{N_m}} \nonumber \\
      &= \log {N+m-1 \choose m-1} + \log \left(\tfrac{N!}{N_1!N_2!\cdots N_m!}\theta_1^{N_1}\theta_2^{N_2}\cdots\theta_m^{N_m}\right) \nonumber\\ &\;\;\;+  \log \frac{1}{\theta_1^{N_1}\theta_2^{N_2}\cdots\theta_m^{N_m}} \nonumber \\
      &\stackrel{(a)}{\le} \log {N+m-1 \choose m-1}  + \log \frac{1}{\theta_1^{N_1}\theta_2^{N_2}\cdots\theta_m^{N_m}} \nonumber  \\
      &\stackrel{(b)}{\le} (m-1)\log\left(e\left(\frac{N}{m-1}+1\right)\right) + \log \frac{1}{\theta_1^{N_1}\theta_2^{N_2}\cdots\theta_m^{N_m}} \nonumber \\
      &\le m \log (2eN) + \sum_{i=1}^m N_i \log\frac{1}{\theta_i}, \label{eq:LenLaplaceRV} 
\end{align}}%
where (a) follows since $\frac{N!}{N_1!N_2!\cdots N_m!}\theta_1^{N_1}\theta_2^{N_2}\cdots\theta_m^{N_m}$ is a multinomial probability which is always upper bounded by 1, and (b) follows since ${n \choose k} \le \left(\frac{en}{k}\right)^k$. Taking expectation on both sides of~\eqref{eq:LenLaplaceRV}, we obtain 
{\allowdisplaybreaks
\begin{align}
    &\E\left[ \log \frac{1}{q_{\Lap}(Z^N)} \right]\nonumber \\ &\le  m \log (2e N) + \sum_{i=1}^m \E[N_i] \log\frac{1}{\theta_i} \nonumber \\
    &= m \log (2e N) + \sum_{i=1}^m \E\left[\sum_{k=1}^N\indi\{Z_k = i\}\right] \log\frac{1}{\theta_i}\\
    &\stackrel{(c)}{=} m \log (2e N) + \sum_{i=1}^m \left(\sum_{k=1}^N\E\left[\indi\{Z_k = i\}\right]\right)\log\frac{1}{\theta_i}\\
    &\stackrel{(d)}{=} m \log (2e N) + \sum_{i=1}^m N \P(Z_1 = i)\log\frac{1}{\theta_i}\\
    &=  m \log (2eN) + \sum_{i=1}^m N \theta_i \log\frac{1}{\theta_i} \nonumber \\
    &=  m \log (2eN) + N H(Z_1), \nonumber
\end{align}}%
where (c) follows by the linearity of expectation and (d) follows because all $Z_i$ are identically distributed.
Finally, we have 
$$\E[\ell_{\Lap}(Z^N)]\le \E\left[ \log \frac{1}{q_{\Lap}(Z^N)} \right]+2 \le m\log(2eN) + N H(Z_1)+2$$
as required.
\end{proof}

\subsection{Length of the KT probability assignment}
To prove Proposition~\ref{prop:KTlength}, we first state a technical lemma.
\begin{lem}\label{lem:KTlength}
 For any integer $m>0$, $N_1, N_2, \cdots N_m \in \mathbb{N}$ and probability distribution $(\theta_1, \cdots \theta_m)$, $$\frac{\binom{N}{N_1,N_2\cdots N_m}\theta_1^{N_1}\cdots\theta_m^{N_m}}{\binom{2N}{2N_1,2N_2\cdots 2N_m}\theta_1^{2N_1}\cdots\theta_m^{2N_m}}\ge 1,$$ where $N=\sum_{i=1}^m N_i$.
\end{lem}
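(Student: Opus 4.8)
The plan is to compare the two multinomial-type expressions directly by writing the ratio as a product over $i \in [m]$ of terms that are each at least $1$. Concretely, I would first cancel the powers of $\theta_i$: the ratio equals
\[
\frac{\binom{N}{N_1,\ldots,N_m}}{\binom{2N}{2N_1,\ldots,2N_m}} \cdot \prod_{i=1}^m \theta_i^{-N_i}.
\]
Writing the multinomial coefficients as ratios of factorials, the combinatorial part is $\frac{N!\,\prod_i (2N_i)!}{(2N)!\,\prod_i (N_i)!}$. This is exactly (up to the factor $\prod \theta_i^{-N_i}$) something I recognize: if we set $\theta_i = N_i/N$, the whole expression should reduce to a clean inequality. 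So the natural strategy is to show the inequality holds for the \emph{worst case} choice of $(\theta_1,\ldots,\theta_m)$, and argue that worst case is $\theta_i = N_i/N$.

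**Key steps, in order.** (1) Fix $N_1,\ldots,N_m$ and view the left side as a function $g(\theta) = c \cdot \prod_i \theta_i^{-N_i}$ where $c = \binom{N}{N_1,\ldots,N_m}/\binom{2N}{2N_1,\ldots,2N_m}$ is a constant. Since $\prod_i \theta_i^{-N_i}$ is minimized over the probability simplex at $\theta_i = N_i/N$ (a standard Lagrange-multiplier / weighted-AM-GM computation — the function $\sum_i N_i \log(1/\theta_i)$ is minimized there), it suffices to prove the inequality at $\theta_i = N_i/N$. (2) Substitute $\theta_i = N_i/N$: the claim becomes
\[
\frac{N!}{\prod_i N_i!}\cdot\prod_i\Bigl(\frac{N_i}{N}\Bigr)^{N_i} \;\ge\; \frac{(2N)!}{\prod_i (2N_i)!}\cdot\prod_i\Bigl(\frac{N_i}{N}\Bigr)^{2N_i},
\]
i.e. after rearranging, $\frac{N!\prod_i(2N_i)!}{(2N)!\prod_i N_i!} \ge \prod_i (N_i/N)^{N_i}$. (3) Recognize both sides probabilistically: the right side is the probability that a $\mathrm{Multinomial}(N;\theta)$ with $\theta_i=N_i/N$ hits the mode $(N_1,\ldots,N_m)$ divided by its multinomial coefficient — actually the cleanest route is to observe $\frac{N!\prod(2N_i)!}{(2N)!\prod N_i!} = \binom{2N}{2N_1,\ldots,2N_m}^{-1}\binom{N}{N_1,\ldots,N_m}$ and that $\binom{2N}{2N_1,\ldots}\le \binom{2N}{N}\cdots$ — hmm, better: use that $\binom{2N}{2N_1,\ldots,2N_m}\,\prod_i (N_i/N)^{2N_i}$ is the probability a $\mathrm{Multinomial}(2N;\theta)$ lands at its mode, hence $\le 1$, which gives half; but we need the sharper comparison. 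The surgical way is (3$'$): prove $\binom{2N}{2N_1,\ldots,2N_m} \le \binom{2N}{N_1,\ldots,N_m,\,?}$... rather, use the elementary bound $\binom{2a}{2b}\le 4^?\binom{a}{b}^2$ isn't tight enough either.

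**The main obstacle** and how I expect it resolves: the crux is the purely combinatorial inequality $\frac{N!\prod_i (2N_i)!}{(2N)!\prod_i N_i!}\ge \prod_i (N_i/N)^{N_i}$, equivalently $\log\binom{2N}{2N_1,\ldots,2N_m} - \log\binom{N}{N_1,\ldots,N_m} \le \sum_i N_i\log(N/N_i) = N\,H(N_i/N)$ (entropy in nats/bits as appropriate). I expect the right tool is a generating-function / convexity argument: by the local CLT or by direct Stirling estimates, $\binom{2N}{2N_1,\ldots,2N_m}\,\prod(N_i/N)^{2N_i}$ and $\binom{N}{N_1,\ldots,N_m}\prod(N_i/N)^{N_i}$ are both probabilities of the mode of a multinomial, and the mode probability of $\mathrm{Multinomial}(2N;\theta)$ is \emph{smaller} than that of $\mathrm{Multinomial}(N;\theta)$ when $\theta$ is the empirical distribution — intuitively, spreading over more trials flattens the peak. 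Making this rigorous for all $N$ (not just asymptotically) is the delicate part; I would do it via the identity that $\mathrm{Multinomial}(2N;\theta)$ is the convolution of two independent $\mathrm{Multinomial}(N;\theta)$'s, so its value at the mode $2\mathbf{N}$ is $\sum_{\mathbf{j}} \Pr[\mathbf{M}_1=\mathbf{j}]\Pr[\mathbf{M}_2 = 2\mathbf{N}-\mathbf{j}] \le \Pr[\mathbf{M}=\mathbf{N}]\sum_{\mathbf j}\Pr[\mathbf M_1 = \mathbf j] = \Pr[\mathbf M = \mathbf N]$, using that $\mathbf N$ is the mode so $\Pr[\mathbf M_2 = 2\mathbf N - \mathbf j]\le \Pr[\mathbf M = \mathbf N]$ for every $\mathbf j$. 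Dividing through by the appropriate powers of $\theta$ and coefficients yields exactly the claimed inequality. This convolution-and-mode argument is what I would expect to be the heart of the proof.
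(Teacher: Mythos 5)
Your approach is correct, and it is genuinely different from the paper's. The paper proves the lemma by direct computation: two-sided Stirling estimates with explicit correction terms for both multinomial coefficients, after which the $\theta$-dependence collapses into a factor $2^{N D_{\mathrm{KL}}(p\|\theta)}\ge 1$ (the same ``worst case at the empirical distribution'' idea you use, just not phrased as a reduction), and the proof ends by checking numerically that $2^{(m-1)/2}$ dominates the accumulated Stirling corrections, with separate treatment of the boundary cases $N_i=N$ and $N_i=0$. Your route --- first reduce to $\theta_i=N_i/N$ by nonnegativity of $D_{\mathrm{KL}}(\hat\theta\|\theta)$, then read the two sides as $\P[\mathrm{Mult}(N;\hat\theta)=\mathbf N]$ and $\P[\mathrm{Mult}(2N;\hat\theta)=2\mathbf N]$ and compare them through the convolution identity $\mathrm{Mult}(2N;\hat\theta)\stackrel{d}{=}\mathbf M_1+\mathbf M_2$ plus the bound $\P[\mathbf M_2=2\mathbf N-\mathbf j]\le\P[\mathbf M_2=\mathbf N]$ --- is exact for every $N$ and $m$, uses no asymptotics or Stirling constants, needs no case analysis, and in fact explains the paper's own ``urn'' remark rather than merely verifying it; the abandoned alternatives in your step (3) are superseded by this final argument and can be dropped. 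The one ingredient you assert without proof is that $\mathbf N$ is a mode of $\mathrm{Mult}(N;\hat\theta)$ when $\hat\theta_i=N_i/N$; this is true and takes one line: for any $\mathbf j$ with $\sum_i j_i=N$,
\begin{equation*}
\P[\mathbf M=\mathbf j]=\frac{N!}{N^N}\prod_{i=1}^m\frac{N_i^{j_i}}{j_i!}\;\le\;\frac{N!}{N^N}\prod_{i=1}^m\frac{N_i^{N_i}}{N_i!}=\P[\mathbf M=\mathbf N],
\end{equation*}
because $j\mapsto N_i^{j}/j!$ is unimodal with maximum at $j=N_i$ (consecutive ratio $N_i/(j+1)$). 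Do note that this mode property is precisely why the reduction to $\hat\theta$ is indispensable: for general $\theta$ the point $\mathbf N$ need not be a mode of $\mathrm{Mult}(N;\theta)$, so the convolution bound alone would not close the argument.
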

We defer the proof of Lemma~\ref{lem:KTlength} to Appendix~\ref{sec:lem-kt-proof}.
\begin{rem}
 Equivalently, consider an urn containing a known number of balls with $m$ different colours. The lemma claims that the probability of getting $N_1$ balls of colour 1, $N_2$ of balls of colour 2, $\cdots$ $N_m$ balls of colour m out of $N$ draws with replacement is always greater than the probability of getting $2N_1$ balls of colour 1, $2N_2$ of balls of colour 2, $\cdots$ $2N_m$ balls of colour m out of $2N$ draws with replacement.
\end{rem}
Now, we are ready to Proposition~\ref{prop:KTlength}, which is restated as follows.
\kt*

\begin{proof}[\bf Proof of Proposition~\ref{prop:KTlength}]
The joint probability of KT probability assignment given in equation~\eqref{eq:KTProbAssgn} is a classic result. Nonetheless, we provide the derivation of~\eqref{eq:KTProbAssgn} in Appendix~\ref{sec:joint-dist-lap-kt}.

Now, we move on to prove the expected length upper bound~\eqref{eq:LenKT}. In this proof, we define a generalized form of the factorial function. Let $x$ be a positive integer, $(x+\frac{1}{2})!=\frac{1}{2} \frac{3}{2} \cdots (x+\frac{1}{2})$. Since $(2N_1-1)!!=\frac{(2N_1)!}{2^{N_1}(N_1)!}$, we have
{\allowdisplaybreaks
\begin{align*} 
&m(m+2)\cdots (m+2N-2)\\&=2^N \left(\frac{m}{2}\right) \left(\frac{m+2}{2}\right)\cdots \left(\frac{m+2N-2}{2}\right)\\&=2^N\frac{\left(\frac{m}{2}+N-1\right)!}{\left(\frac{m}{2}-1\right)!}.
\end{align*}}%
Therefore we can rewrite the KT probability assignment in~\eqref{eq:KTProbAssgn} as
{\allowdisplaybreaks
\begin{align*} 
&q_\mathrm{KT}(z^N)\\&=\frac{(\frac{m}{2}-1)!}{2^N(\frac{m}{2}+N-1)!}\frac{\binom{2N}{N}}{\binom{2N}{N}}\prod_{i=1}^m \frac{(2N_i)!}{N_i ! \vspace{.2em}2^{N_i}}\\
&=\frac{(\frac{m}{2}-1)!}{2^N(\frac{m}{2}+N-1)!} \binom{2N}{N}N! \frac{N!}{(2N)!}\prod_{i=1}^m \frac{(2N_i)!}{N_i !2^{N_i}} \\
&\stackrel{(a)}{\ge}\frac{\left(\frac{m}{2}-1\right)!\binom{2N}{N}}{4^N (N+\frac{m}{2}-\frac{1}{2})^{\frac{m-1}{2}}}\frac{N!}{(2N)!}\prod_{i=1}^m \frac{(2N_i)!}{N_i !}\\
&\stackrel{(b)}{=}\frac{\theta_1^{N_1}\cdots\theta_m^{N_m}(\frac{m}{2}-1)!\binom{2N}{N}}{4^N (N+\frac{m}{2}-\frac{1}{2})^{\frac{m-1}{2}}}\frac{\binom{N}{N_1,N_2\cdots N_m}\theta_1^{N_1}\cdots\theta_m^{N_m}}{\binom{2N}{2N_1,2N_2\cdots 2N_m}\theta_1^{2N_1}\cdots\theta_m^{2N_m}},
\end{align*}}%
where (a) follows that when m is even, $\frac{N!}{(\frac{m}{2}+N-1)!}=\frac{1}{(N+1)\cdots (\frac{m}{2}+N-1)}\ge \frac{1}{(N+\frac{m}{2}-\frac{1}{2})^{\frac{m-1}{2}}}$ and when m is odd, $\frac{N!}{(\frac{m}{2}+N-1)!}\ge\frac{N!}{(\frac{m}{2}+N-\frac{1}{2})!}=\frac{1}{(N+1)\cdots (\frac{m}{2}+N-\frac{1}{2})}\ge\frac{1}{(N+\frac{m}{2}-\frac{1}{2})^{\frac{m-1}{2}}}$, (b) follows that $\binom{N}{N_1,N_2\cdots N_m}=\frac{N!}{\prod_{i=1}^m N_i !}$ and $\theta_i\triangleq\mathbb{P}(Z_1=i)$. By lemma~\ref{lem:KTlength}, we have $q_\mathrm{KT}(z^N)\ge\frac{\theta_1^{N_1}\cdots\theta_m^{N_m}(\frac{m}{2}-1)!\binom{2N}{N}}{4^N (N+\frac{m}{2}-\frac{1}{2})^{\frac{m-1}{2}}}$. Thus, 
{\allowdisplaybreaks
\begin{align*} 
&\log \frac{1}{q_\mathrm{KT}(z^N)}\\&\le \log{\frac{1}{\theta_1^{N_1}\cdots\theta_m^{N_m}}}+\log{\frac{4^N (N+\frac{m}{2}-\frac{1}{2})^{\frac{m-1}{2}}}{(\frac{m}{2}-1)!\binom{2N}{N}}}\\
&=\log{\frac{1}{\theta_1^{N_1}\cdots\theta_m^{N_m}}}+\frac{m-1}{2}\log{\left(N+\frac{m-1}{2}\right)}\\&\;\;\;+\log{\frac{4^N}{\binom{2N}{N}}}-\log{\left(\frac{m}{2}-1\right)!}\\
&\stackrel{(a)}{\le}\log{\frac{1}{\theta_1^{N_1}\cdots\theta_m^{N_m}}}+\frac{m-1}{2}\log{\left(N+\frac{m-1}{2}\right)}\\&\;\;\;+\log{\frac{4^N}{\binom{2N}{N}}}-\left(\frac{m}{2}-1\right)\log{(\frac{\frac{m}{2}-1}{e})}\\
&\stackrel{(b)}{\sim}\log{\frac{1}{\theta_1^{N_1}\cdots\theta_m^{N_m}}}+\frac{m-1}{2}\log{\left(N+\frac{m-1}{2}\right)}\\&\;\;\;+\log{\sqrt{\pi N}}-\left(\frac{m}{2}-1\right)\log{\left(\frac{\frac{m}{2}-1}{e}\right)}\\
&\sim \frac{m}{2}\log{\frac{e(\frac{m}{2}+N)}{m/2}}+\log{\sqrt{\pi N}}+\log{\frac{1}{\theta_1^{N_1}\cdots\theta_m^{N_m}}}\\
&=\frac{m}{2}\log\left(e\bigl(1+\tfrac{2N}{m}\bigr)\right)+\frac 12\log(\pi N)+\sum_{i=1}^m N_i \log{\frac{1}{\theta_i}},
\end{align*}}%
where $(a)$ follows Stirling's approximation $k!\ge \sqrt{2\pi k}(\frac{k}{e})^{k}e^{\frac{1}{12k+1}}$ and (b) follows Stirling's approximation for binomial coefficient, i.e., $\binom{2N}{N}\sim \frac{4^N}{\sqrt{\pi N}}$. Therefore, we have
\begin{align*}
&\E\left[\log \frac{1}{q_\mathrm{KT}(z^N)}\right]\\&\le \frac{1}{2}m\log\left(e\bigl(1+\tfrac{2N}{m}\bigr)\right)+\frac 12\log(\pi N)+NH(Z_1).
\end{align*}
Finally, it follows that
\begin{align*}
&\E[\ell_{\mathrm{KT}}(Z^N)]\le\E\left[\log \frac{1}{q_\mathrm{KT}(z^N)}\right]+2\\&\le \tfrac{m}{2}\log\left(e\bigl(1+\tfrac{2N}{m}\bigr)\right)+\tfrac 12\log(\pi N)+NH(Z_1)+2.
\end{align*}
\end{proof}

\section{Minimax Redundancy Analysis}\label{sec:proveminimax}
So far, we have proved the universality of our proposed algorithms. For this, we showed that the expected length of the proposed compressor matches the first-order term in the graph entropy. In this section, we study the redundancy of the proposed compressor $C_k$ and prove the minimax redundancy result Theorem~\ref{thm:minimaxredun}:
\Minimax*
Towards that goal, we first state a proposition that upper bounds the redundancy of compressor $C_k$.

\begin{prop} 
\label{prop:redundancyStrong}
For every $0 < \eps < 1$, let $A_n \sim \mathrm{SBM}(n,L,\mathbf{p},f(n)\mathbf{Q}) \in \mathscr{P}_1(\eps)$. 
Let $k=\omega(1)$ and $k\le \sqrt{\delta\log n}$ for some $0<\delta<\epsilon$.
\begin{itemize}
    \item
If $f(n) = o(1)$ and $f(n) = \Omega\left(\frac{1}{n^{2-\eps}}\right)$,
then the expected length of compressor $C_k$ defined in Section~\ref{sec:univGraphComp} is upper bounded as
\begin{align}
    &\E[\ell(C_k(A_n))]-H(A_n)\nonumber\\&\le 
     \binom{n}{2}f(n)\left(\PQP\log\left( \frac{1}{\PQP}\right)-\mathbf{p}^T\mathbf{Q^*p}\right)\nonumber\\&\;\;\;+o(n^2f(n))\label{eq:redundancy-vanishing}\\
    &= o(H(A_n)),\nonumber
\end{align}
where $\mathbf{Q^*}$ denotes an $L\times L$ matrix whose $(i,j)$ entry is $Q_{ij}\log (\frac{1}{Q_{ij}})$ when $Q_{ij}\neq 0$ and $0$ when $Q_{ij}=0$.

\item If $f(n) = \Theta(1)$, then the expected length of compressor $C_k$ defined in Section~\ref{sec:univGraphComp} is upper bounded as
\begin{align}
\label{eq:redundancy-constant}
\E(\ell(C_k(A_n)))-H(A_n) &\le H(\mathbf{p})\frac{n^2}{k}+o\left(\frac{n^2}{k}\right)\\
&= o(H(A_n)),\nonumber
\end{align}
where $H(\mathbf{p}) = \sum_{i=1}^Lp_i \log \left(\frac{1}{p_i}\right)$.
\end{itemize}
\end{prop}

Because $C_k$ is a valid compressor for $\mathscr{P}_3$, we can get an upper bound for the minimax redundancy by maximizing the right-hand side of equation~\eqref{eq:redundancy-vanishing}. For the lower bound, we will apply the \emph{redundancy-capacity theorem} in~\cite{Merhav--Feder1995}.
\begin{proof}[\bf Proof of Theorem~\ref{thm:minimaxredun}]
Firstly, we prove the lower bound for the minimax redundancy. The proof essentially follows from the redundancy-capacity theorem~\cite{Merhav--Feder1995}. To start with, we will define a vector $\Thetav$ of parameters $L,\pv$ and $\Qv$ for SBM. Consider stochastic block model SBM$(n,L,\pv,f(n)\Qv)$. Recall that $L$ is a scalar. Vector $\pv=(p_1,\ldots,p_L)^T$ represents a distribution over $[L]$, so it can be written as $\pv=(p_1,\ldots,p_{L-1},1-\sum_{i=1}^{L-1}p_i)^T$. For the $L$-by-$L$ symmetric matrix $\Qv$, we can write the upper-triangle entries into a vector $(Q_{11},\ldots,Q_{1L},Q_{22},\ldots,Q_{2L},\ldots,Q_{LL})$. We define the parameter vector
\begin{align*}
&\Thetav(\mathrm{SBM}(n,L,\pv,f(n)\Qv))\\&=(L,p_1,\ldots,p_{L-1},Q_{11},\ldots,Q_{1L},Q_{22},\ldots,Q_{2L},\ldots,Q_{LL}).
\end{align*}
Notice that by our definition, $\Thetav$ is a variable-length vector. Its length $L+\frac{L(L+1)}{2}$ is a function of its first entry $L$. For some fixed $f(n)$ and $Q_\mathrm{max}$, let 
\begin{align*}
\mathcal{W}:=&\{\Thetav(\mathrm{SBM}(n,L,\pv,f(n)\Qv)):\\&\mathrm{SBM}(n,L,\pv,f(n)\Qv)\in\mathscr{P}_3(f(n),Q_\mathrm{max})\}
\end{align*}
denote the set of all parameter vectors for the family of SBMs in $\mathscr{P}_3(f(n),Q_\mathrm{max})$. Let $\mu(\cdot)$ denote a probability distribution supported on $\mathcal{W}$. Suppose $A_n\sim \mathrm{SBM}(n,L,\pv,f(n)\Qv)$ with random SBM parameters following prior $\mu$, by redundancy-capacity theorem (see, for example,~\cite{Merhav--Feder1995}), we have
$$R^*_n(\mathscr{P}_3(f(n),Q_\mathrm{max}))=\sup_\mu I_\mu(\Thetav;A_n),$$
where $I_\mu(\Thetav;A_n)$ denote the mutual information between $\Thetav$ and $A_n$ under the prior $\mu$ and the supremum is taken over all possible probability distributions supported on $\mathcal{W}$. To lower bound the quantity $\sup_\mu I_\mu(\Thetav;A_n)$, we consider a particular distribution $\nu\in\mathcal{W}$ such that $L$ is fixed to $1$ and $Q_{11}\sim\mathrm{Unif}(0,Q_\mathrm{max})$, i.e, the \erdos--\renyi model with edge probability being uniform from $(0,f(n)Q_\mathrm{max}]$. Since $\nu$ is supported on $\mathcal{W}$, we have $\sup_\mu I_\mu(\Thetav;A_n)\ge I_\nu(\Thetav;A_n)$. 

We now move on to evaluate the mutual information $I_\nu(\Thetav;A_n)$. For an adjacency matrix $A_n$, we define $\hat{Q}_{11}:=\frac{K}{\binom{n}{2}f(n)}$ where $K=\sum_{i=1}^{n-1}\sum_{j=i+1}^n A_{ij}$ denotes the number of ones in upper-triangle of $A_n$. We have
\begin{align*}
    I_\nu(\Thetav;A_n)&=I_\nu(Q_{11};A_n)\\
    &=h(Q_{11})-h(Q_{11}|A_n)\\
    &\stackrel{(a)}{\ge}\log Q_\mathrm{max}-h(Q_{11}|\hat{Q}_{11})\\
    &=\log Q_\mathrm{max}-h(Q_{11}-\hat{Q}_{11}|\hat{Q}_{11})\\
    &\ge \log Q_\mathrm{max}-h(Q_{11}-\hat{Q}_{11})\\
    &\stackrel{(b)}{\ge}\log Q_\mathrm{max}-\frac12\log\left(2\pi e\mathrm{Var}(Q_{11}-\hat{Q}_{11})\right),
\end{align*}
where (a) follows since $\hat{Q}_{11}$ is a function of $A_n$ and (b) follows since Gaussian distribution maximizes differential entropy for a given variance. To evaluate the variance $\mathrm{Var}(Q_{11}-\hat{Q}_{11})$, we consider the conditional variance 
{\allowdisplaybreaks
\begin{align*}
    &\mathrm{Var}(Q_{11}-\hat{Q}_{11}|Q_{11}=q)\\&=\E\left[\left(\frac{K}{\binom{n}{2}f(n)}-Q_{11}\right)^2\Biggr|Q_{11}=q\right]\\
    &=\frac{1}{\binom{n}{2}^2(f(n))^2}\E\left[\left(K-\binom{n}{2}f(n)q\right)^2\Biggr|Q_{11}=q\right]\\
    &\stackrel{(c)}{=}\frac{\binom{n}{2}f(n)q(1-f(n)q)}{\binom{n}{2}^2(f(n))^2}\\
    &=\frac{q(1-f(n)q)}{\binom{n}{2}f(n)},
\end{align*}}%
where (c) follows since $K|Q_{11}=q\sim \mathrm{Binom}(\binom{n}{2},f(n)q)$. By law of total variance and since $\E[\hat{Q}_{11}-Q_{11}|Q_{11}=q]=0$, we have
\begin{align*}
 \mathrm{Var}(Q_{11}-\hat{Q}_{11})&=\E\left[\frac{Q_{11}(1-f(n)Q_{11})}{\binom{n}{2}f(n)}\right]\\&\le \frac{\E[Q_{11}]}{\binom{n}{2}f(n)}=\frac{Q_\mathrm{max}}{2\binom{n}{2}f(n)}.
 \end{align*}
Finally, we have 
\begin{align*}
    I_\nu(\Thetav;A_n)&\ge\log Q_\mathrm{max}-\frac12\log\left(2\pi e\mathrm{Var}(Q_{11}-\hat{Q}_{11})\right)\\
    &\ge \log Q_\mathrm{max}-\frac12\log\left(\frac{Q_\mathrm{max}\pi e}{\binom{n}{2}f(n)}\right)\\
    &=\frac12\log\left(\frac{\binom{n}{2}f(n)Q_\mathrm{max}}{\pi e}\right),
\end{align*}
which proves the lower bound.

Now, we prove the upper bound for the minimax redundancy. Let $\mathcal{S}(f(n),Q_\mathrm{max})$ denote the set of all triples $(L,\mathbf{p},\mathbf{Q})$ such that $\mathrm{SBM}(n,L,\pv,f(n)\Qv)\in \mathscr{P}_3(f(n),Q_\mathrm{max})$. Recall that the minimax redundancy is defined as $R^*_n(\mathscr{P}_3(f(n),Q_\mathrm{max}))=\inf\limits_C\sup\limits_{(L,\mathbf{p},\mathbf{Q})\in \mathcal{S}(f(n),Q_\mathrm{max})}(\E[\ell(C(A_n))]-H(A_n))$. Since our proposed compressor $C_k$ is a valid lossless compressor for $\mathscr{P}_3$, we can upper bound the minimax redundancy with the maximum redundancy of $C_k$. By Proposition~\ref{prop:redundancyStrong}, we have
\begin{align*}
    &R^*_n(\mathscr{P}_3(f(n),Q_\mathrm{max}))\\&=\inf\limits_C\sup\limits_{(L,\mathbf{p},\mathbf{Q})\in \mathcal{S}(f(n),Q_\mathrm{max})}(\E[\ell(C(A_n))]-H(A_n))\\
    &\le \sup\limits_{(L,\mathbf{p},\mathbf{Q})\in \mathcal{S}(f(n),Q_\mathrm{max})}(\E[\ell(C_k(A_n))]-H(A_n))\\
    &\stackrel{(d)}{\le} \sup\limits_{(L,\mathbf{p},\mathbf{Q})\in \mathcal{S}(f(n),Q_\mathrm{max})}\binom{n}{2}f(n)\bigg(\PQP\log \frac{1}{\PQP}\\&\hspace{9em}-\mathbf{p}^T\mathbf{Q^*p}\bigg)+o(n^2f(n)),
\end{align*}
where (d) follows since $\mathscr{P}_3$ is a sub-family of $\mathscr{P}_2$, so $f(n)=o(1)$.
Therefore, it suffices to maximize the constant term $\left(\PQP\log \frac{1}{\PQP}-\mathbf{p}^T\mathbf{Q^*p}\right)$ over all valid choices of $L,\mathbf{p},\mathbf{Q}$. We can rewrite the term as
{\allowdisplaybreaks
\begin{align*}
    &\PQP\log \frac{1}{\PQP}-\mathbf{p}^T\mathbf{Q^*p}\\
    &=\sum_{i,j\in[L]}p_ip_jQ_{ij}\log\frac{1}{\sum_{k,l\in[L]}p_kp_lQ_{kl}}\\&\;\;\;-\sum_{i,j\in [L]}p_ip_jQ_{ij}\log\frac{1}{Q_{ij}}\\
    &=Q_\mathrm{max}\Bigg(\sum_{i,j\in[L]}p_ip_j\frac{Q_{ij}}{Q_\mathrm{max}}\log\frac{1}{\sum_{k,l\in[L]}p_kp_lQ_{kl}}\\&\;\;\;-\sum_{i,j\in [L]}p_ip_j\frac{Q_{ij}}{Q_\mathrm{max}}\log\frac{1}{Q_{ij}}\Bigg)\\
    &=Q_\mathrm{max}\Bigg(\sum_{i,j\in[L]}p_ip_j\frac{Q_{ij}}{Q_\mathrm{max}}\log\left(\frac{1}{\sum_{k,l\in[L]}p_kp_l\frac{Q_{kl}}{Q_\mathrm{max}}}\frac{1}{Q_\mathrm{max}}\right)\\&\;\;\;-\sum_{i,j\in [L]}p_ip_j\frac{Q_{ij}}{Q_\mathrm{max}}\log\left(\frac{Q_\mathrm{max}}{Q_{ij}}\frac{1}{Q_\mathrm{max}}\right)\Bigg)\\
    &=Q_\mathrm{max}\Bigg(\sum_{i,j\in[L]}p_ip_j\frac{Q_{ij}}{Q_\mathrm{max}}\log\frac{1}{\sum_{k,l\in[L]}p_kp_l\frac{Q_{kl}}{Q_\mathrm{max}}}\\&\;\;\;-\sum_{i,j\in [L]}p_ip_j\frac{Q_{ij}}{Q_\mathrm{max}}\log\frac{Q_\mathrm{max}}{Q_{ij}}\Bigg)\\
    &=Q_\mathrm{max}\left(\mathbf{p}^T\mathbf{Q'p}\log\frac{1}{\mathbf{p}^T\mathbf{Q'p}}-\mathbf{p}^T\mathbf{(Q')^*p}\right),
\end{align*}}%
where $\mathbf{Q}'$ denotes the matrix $\mathbf{Q}/Q_\mathrm{max}$ and $(\mathbf{Q}')^*$ denotes an $L\times L$ matrix whose $(i,j)$ entry is $Q'_{ij}\log(\frac{1}{Q'_{ij}})$. By our assumptions, the entries in $\mathbf{Q}'$ are all non-negative and the largest entry is $1$. Since $\mathbf{p}$ is a probability distribution, we have $0<\mathbf{p}^T\mathbf{Q'p}\le 1$. Now, let us consider the function $g(x)=x\log \frac{1}{x}$. When $0\le x\le 1$, its supremum is obtained at $x=1/e$ and its infimum is obtained at $x=0$ or $x=1$. With these properties, we can bound our constant term
$$Q_\mathrm{max}\left(\mathbf{p}^T\mathbf{Q'p}\log\frac{1}{\mathbf{p}^T\mathbf{Q'p}}-\mathbf{p}^T\mathbf{(Q')^*p}\right)\le Q_\mathrm{max}\frac{1}{e}\log e.$$
In particular, the equality can be achieved when all the entries in $\mathbf{Q}'$ are either zero or one and $\mathbf{p}$ is carefully chosen such that $\mathbf{p}^T\mathbf{Q'p}=1/e$. This completes the proof for the upper bound.
\end{proof}
\begin{proof}[\bf Proof of Proposition~\ref{prop:redundancyStrong}]

We discuss the redundancy in two cases: (1) $f(n) = o(1)$ and $f(n) = \Omega(1/n^{2-\eps})$ and (2) $f(n) = \Theta(1)$. 

First assume $f(n) = o(1)$ and $f(n) = \Omega(1/n^{2-\eps})$. By Lemma~\ref{lem:graphEntropy}, we can lower bound the graph entropy
\begin{align}
    H(A_n)\ge&\binom{n}{2}f(n)\bigg(\log\left(\frac{1}{f(n)}\right)\PQP+\PQP\log e\nonumber\\ &+\PsQP+o(1) \bigg).\label{eq:entropylowerbound}
\end{align}


 Now, we will upper bound the redundancy of $C_k$ for both the KT probability assignment and the Laplace probability assignment. By equation~\eqref{eqn:compress-length}, we have 
 \begin{align}
     \E[(\ell(C_k(A_n))]\le &{n' \choose 2}H(\BB_{12}) + 2^{k^2}\log\left(2e^2n^3\right)\nonumber\\ &+ (nk^2 H(A_{12})+4)+\E(2N_r\lceil\log n\rceil).\label{eq:expected_length_upper}
 \end{align}
 We will now analyze each of these four terms separately. 
 Firstly, we have
 \begin{align}
    &\binom{n'}{2}H(\mathbf{B}_{12})\nonumber\le \binom{n'}{2}k^2H(A_{1,k+1})\nonumber\\
    &=\binom{n'}{2}k^2h(f(n)\PQP)\nonumber\\
    &\stackrel{(a)}{=}\binom{n}{2}f(n)\bigg(\left(\log \frac{1}{f(n)}\right)\PQP+\PQP\log e\nonumber\\ &\;\;\;+\PQP\log \frac{1}{\PQP}+o(1)\bigg),\label{eq:o1term1}
 \end{align}
 where (a) follows from equation~\eqref{eq:hpqp-expand}.
Next, since $k\le \sqrt{\delta\log n}$ and $\delta<\epsilon$, we have 
\begin{equation}
    2^{k^2}\log\left(2e^2n^3\right)\le n^\delta\log(2e^2n^3)=o(n^2f(n)).\label{eq:o1term2}
\end{equation}
 Moreover, we have
 \begin{align}
     &nk^2H(A_{12})+4\le(n\delta\log n)H(A_{12})+4\nonumber\\ &=(n\delta\log n)O\left(f(n)\log\frac{1}{f(n)}\right)+4=o(n^2f(n)).\label{eq:o1term3}
 \end{align}
 Finally, by equation~\eqref{eqn:small-term-3}, we have
 \begin{equation}
     \E(2N_r\lceil\log n\rceil)=O(knf(n)\log n)=o(n^2f(n)).\label{eq:o1term4}
 \end{equation}
 Substituting~\eqref{eq:o1term1},~\eqref{eq:o1term2},~\eqref{eq:o1term3},~\eqref{eq:o1term4} into the upper bound~\eqref{eq:expected_length_upper} and combining with the lower bound for the entropy~\eqref{eq:entropylowerbound} gives
 \begin{align}
     &\E(\ell(C_k(A_n)))-H(A_n)\nonumber\\&\le \binom{n}{2}f(n)\bigg(\left(\log \frac{1}{f(n)}\right)\PQP+\PQP\log e\nonumber\\&\;\;\;+\PQP\log \frac{1}{\PQP}+o(1)\bigg)+o(n^2f(n)) \nonumber\\
     &\;\;\;-\binom{n}{2}f(n)\bigg(\log\left(\frac{1}{f(n)}\right)\PQP+\PQP\log e\nonumber\\ &\;\;\;+\PsQP+o(1) \bigg)\nonumber\\
     &=\binom{n}{2}f(n)\left(\PQP\log\left( \frac{1}{\PQP}\right)-\mathbf{p}^T\mathbf{Q^*p}\right)+o(n^2f(n)) \nonumber\\
     &=o(H(A_n)) \label{eq:GraphEntLimits}
 \end{align}
where~\eqref{eq:GraphEntLimits} follows since $H(A_n) = \Theta(n^2 f(n)\log(1/f(n)))$ by Lemma~\ref{lem:graphEntropy}.
 
 Now assume $f(n)=\Theta(1)$. Then $H(A_n)$ can be lower bounded as
 \begin{align}
     H(A_n)\ge H(A_n|X^n)&= \binom{n}{2}H(A_{12}|X_1,X_2)\nonumber\\ &=\binom{n}{2}\mathbf{p}^Th(f(n)\mathbf{Q})\mathbf{p}.\label{eq:theta1bound1}
\end{align}
Still, we can upper bound the expected length of the compressor $C_k$ using~\eqref{eq:expected_length_upper} and we will bound these four terms separately. We have 
\begin{align}
    &\binom{n'}{2}H(\mathbf{B}_{12})\nonumber\\&=\binom{n'}{2}(H(\mathbf{B}_{12}|X_1^{2k})+I(X_1^{2k};\mathbf{B}_{12}))\nonumber\\
    &=\binom{n'}{2}(k^2H(A_{1,k+1}|X_1,X_{k+1})+I(X_1^{2k};\mathbf{B}_{12}))\nonumber\\
    &\stackrel{(b)}{\le} \binom{n'}{2}(k^2\PhQP+H(X_1^{2k}))\nonumber\\
    &=\binom{n'}{2}(k^2\PhQP+2kH(\mathbf{p}))\nonumber\\
    &\le\frac{n(n-k)}{2}\PhQP+n(n'-1)H(\mathbf{p}),\label{eq:theta1bound2}
\end{align}
where (b) follows by Lemma~\ref{lem:technical},
\begin{equation}
    2^{k^2}\log (2e^2n^3)\le n^\delta\log (2e^2n^3)=o\left(\frac{n^2}{k}\right),\label{eq:theta1bound3}
\end{equation}

\begin{equation}
    nk^2H(A_{12})+4\le n\delta\log n H(A_{12})+4=o\left(\frac{n^2}{k}\right).\label{eq:theta1bound4}
\end{equation}
and
 \begin{align}
     &\E(2N_r\lceil\log n\rceil)\nonumber\\&=f(n)\PQP\left((n-\tilde{n})\tilde{n} + \binom{n-\tilde{n}}{2}\right)2\lceil\log n\rceil\nonumber\\&=O(kn\log n)=o\left(\frac{n^2}{k}\right).\label{eq:theta1bound5}
 \end{align}
 Combining the bounds~\eqref{eq:theta1bound1},~\eqref{eq:theta1bound2},~\eqref{eq:theta1bound3},~\eqref{eq:theta1bound4},~\eqref{eq:theta1bound5} gives
  \begin{align*}
    &\E(\ell(C_k(A_n))) - H(A_n)\\&\le \frac{n(n-k)}{2}\PhQP+n(n'-1)H(\mathbf{p})\\&\;\;\;-\frac{n(n-1)}{2}\PhQP+o\left(\frac{n^2}{k}\right)\\
    &=n(n'-1)H(\mathbf{p})+\frac{n(1-k)}{2}\PhQP+o\left(\frac{n^2}{k}\right)\\
    &\le H(\mathbf{p})\frac{n^2}{k}+o\left(\frac{n^2}{k}\right)\\
    &= o(H(A_n)),
\end{align*}
where the last line follows since $k = \omega(1)$ and $H(A_n) = \Theta(n^2)$ when $f(n) = \Theta(1)$.
\end{proof}

\section{Performance under the local weak convergence framework}\label{sec:BCentropy}
In this section, we analyze the performance of the proposed compressor under the local weak convergence framework introduced in~\cite{Bordenave_2014,Delgosha--Anatharam2019tit}. We focus on a subfamily of SBM in the $f(n) = 1/n$ regime, whose expected degree is identical for vertices in all communities, i.e., $\mathbf{Qp} = (\lambda,\ldots,\lambda)^T$.  We will show that the proposed compressor achieves the same performance in terms of BC entropy as the compressor proposed in~\cite{Delgosha--Anatharam2019tit}. 

We first introduce some basic definitions of rooted graphs in Subsection~\ref{subsec:definitions}. Then, we define the local weak convergence of graphs and derive the local weak convergence limit of the subfamily of stochastic block model in Subsection~\ref{subsec:LWC}. Finally, we review the definition of BC entropy in Subsection~\ref{subsec:BCentropy} and state the performance guarantee of our compression algorithm in Subsection~\ref{subsec:Achieve_BC}.

\subsection{Basic definitions on  rooted graphs}\label{subsec:definitions}
Let $G=(V,E)$ be a simple graph (undirected, unweighted, no self-loop), with $V$ a countable set of vertices and $E$ a countable set of edges. Let $u\stackrel{G}{\sim}v$ denote the connectivity of vertices $u$ and $v$ in $G$. $G$ is said to be \emph{locally finite} if, for all $v\in V$, the degree of $v$ in $G$ is finite. A rooted graph $(G,o)$ is a locally finite and connected graph $G=(V,E,o)$ with a distinguished vertex $o\in V$, called the root. Two rooted graphs $(G_1,o_1)=(V_1,E_1,o_1)$ and $(G_2,o_2)=(V_2,E_2,o_2)$ are \emph{isomorphic}, denoted as $(G_1,o_1)\simeq(G_2,o_2)$, if there exists a bijection $\pi:V_1\rightarrow V_2$ such that $\pi(o_1)=o_2$ and $u\stackrel{G_1}{\sim}v$ if and only if $ \pi(u)\stackrel{G_2}{\sim}\pi(v)$ for all $u,v \in V_1$. One can verify that this notion of isomorphism defines an equivalence relation on rooted graphs. Let $[G,o]$ denote the equivalence class corresponding to $(G,o)$. Let $\mathcal{G}^*$ denote the set of all locally finite and connected rooted graphs. For $(G,o)\in\mathcal{G}^*$ and $h\in \mathbb{N}$, we write $(G,o)_h$ for the truncated graph at depth $h$ of the graph $(G,o)$, in other words, the induced subgraph on the vertices such that their distance from the root is less than or equal to $h$. The equivalence classes $[G,o]_h$ follow a similar definition. Let $\mathcal{G}^*_h$ denote the set of all $[G,o]_h$. Now, we define a metric $d^*$ on $\mathcal{G}^*$. For any $[G_1,o_1]$ and $[G_2,o_2]$, let
\begin{align*}
\hat{h}:=&\sup\{h\in \mathbb{Z}^+:(G_1,o_1)_h\simeq(G_2,o_2)_h \\&\text{ for some } (G_1,o_1) \in [G_1,o_1], (G_2,o_2) \in [G_2,o_2]\}
\end{align*}
and define the metric $d^*$ as $$d^*([G_1,o_1],[G_2,o_2]):=\frac{1}{1+\hat{h}}.$$ As shown in~\cite{Aldous2007}, equipped with the metric defined above, $\mathcal{G}^*$ is a Polish space, i.e., a complete separable metric space. For this Polish space, let $\mathcal{P}(\mathcal{G}^*)$ denote the Borel probability measures on it. We say that a sequence of measures $\mu_n\in\mathcal{P(G^*)}$ \emph{converges weakly} to $\mu\in\mathcal{P}(\mathcal{G}^*)$, written as $\mu_n\rightsquigarrow\mu$, if for any bounded continuous function $f$ on $\mathcal{G}^*$, we have $\int fd\mu_n\rightarrow\int fd\mu$. It was shown in~\cite{billing} that $\mu_n\rightsquigarrow\mu$ if for any uniformly continuous and bounded functions $f$, we have $\int fd\mu_n\rightarrow\int fd\mu$. For $\mu\in \mathcal{P}(\mathcal{G}^*)$, $h\in\{0,1,2,\ldots\}$, and $[G,o]\in \mathcal{G}^*$, let $\mu_h$ denote the $h$-neighbourhood marginal of $\mu$
\begin{equation}
\mu_h([G,o])=\sum_{[G',o]\in \mathcal{G}^*:[G',o]_h=[G,o]}\mu([G',o]).\label{eqn:h-marginal-distn}
\end{equation}
For a locally finite graph $G=(V,E)$ and a vertex $v\in V$, let $G(v)$ denote the graph component in $G$ that is connected to $v$. By our previous definitions, $(G(v),v)$ denotes the rooted graph of the connected component of $v$ and the root is located at $v$ and $[G(v),v]$ denotes the equivalence class corresponding to $(G(v),v)$. Now, the \emph{rooted neighbourhood distribution} of $G$ is defined as the distribution of the rooted graph when the root is chosen uniformly at random over $V$
\begin{equation}
U(G):=\frac{1}{|V|}\sum_{v\in V}\delta_{[G(v),v]},\label{eqn:U(G)}
\end{equation}
where $\delta$ is the Dirac delta function.

For a sequence of graphs $\{G_n\}_{n=1}^\infty$, we say a probability distribution $\mu$ on $\mathcal{G}^*$ is the \emph{local weak limit} of $\{G_n\}_{n=1}^\infty$ if $U(G_n)\rightsquigarrow\mu$. Here we quote an interpretation of the term \emph{local} in this definition from~\cite{Delgosha--Anatharam2019tit}: An equivalent definition of the local weak limit $\mu$ is that, for any finite $h\geq 0$, if we randomly select a vertex $i$ and examine the $h$-hop local neighborhood centered at $i$, the resulting distribution on the neighborhood structure converges to $\mu$.
\subsection{Local weak convergence} \label{subsec:LWC}
In this section, we want to study the local weak limit of a sequence of graphs generated from stochastic block models. However, we notice that the definition for rooted neighbourhood distribution given in~\eqref{eqn:U(G)} is limited to deterministic graphs. Therefore, we first extend this definition to the case of random graphs. Suppose we have a random graph $G_n\sim \Lambda_n$, where $\Lambda_n$ is a probability distribution supported on graphs with $n$ vertices. Let $\mathrm{supp}(\Lambda_n)$ denote the support of $\Lambda_n$. Let $B$ denote a measurable set subset of $\mathcal{G}^*$. We define
\begin{align}
&U(G_n)(B)\nonumber\\&=\sum_{g\in\mathrm{supp}(\Lambda_n)}\P_{\Lambda_n}(G_n=g)\frac1n\sum_{i=1}^n\mathbbm{1}\{[g(i),i]\in B\},\label{eqn:U(G)-rand}
\end{align}
i.e., we first generate a random graph $G_n$ according to law $\Lambda_n$ and then pick a vertex $i$ uniformly at random independently from the graph generation process, and $U(G_n)(B)$ represents the probability that $[G_n(i),i]\in B$ from this process. By rearranging the order of the sums in equation~\eqref{eqn:U(G)-rand}, we have
\begin{align}
    &U(G_n)(B)\nonumber\\
    &=\frac1n\sum_{i=1}^n\sum_{g\in\mathrm{supp}(\Lambda_n)}\P_{\Lambda_n}(G_n=g)\mathbbm{1}\{[g(i),i]\in B\}\nonumber\\&=\frac1n\sum_{i=1}^n \P_{\Lambda_n}([G_n(i),i]\in B).\label{eqn:U(G)-rand-interpret}
\end{align}
Now let's consider the case when $\Lambda_n=\mathrm{SBM}(n,L,\pv,f(n)\Qv)$, and we have a graph $A_n\sim \Lambda_n$. By the exchangeability of stochastic block model shown in Lemma~\ref{lem:exchangeability}, we know that for any $B\subseteq\mathcal{G}^*$, $\P_{\Lambda_n}([A_n(i),i]\in B)$ is equal for all vertices $i\in [n]$. Therefore, by~\eqref{eqn:U(G)-rand-interpret}, we have 
\begin{equation}
    U(A_n)(B)=\P_{\Lambda_n}([A_n(\rho),\rho]\in B)
\end{equation}
for any vertex $\rho\in V(A_n)$, i.e., the rooted neighbourhood distribution $U(A_n)$ reduces to the neighbourhood distribution at arbitrary vertex $\rho$ in graph $A_n$ under the law $\Lambda_n$. We denote this neighbourhood distribution by $\mathrm{Nbr}(\rho)$. Next, we are going to study the limit property of $\mathrm{Nbr}_R(\rho)$, i.e. the $R$-neighbourhood marginal of $\mathrm{Nbr}(\rho)$, under certain assumptions on the parameters of the stochastic block model.
Since this distribution is identical for every vertex in $V(A_n)$, we will simply write $\mathrm{Nbr}_R$ for the $R$-neighbourhood distribution of any vertex.

To state the limiting distribution, we need to define the \emph{Galton--Watson tree} probability distribution on rooted trees $\mathrm{GWT}(\poi(\lambda))$. Let $\poi(\lambda)$ denote the Poisson distribution with mean $\lambda$. We take a vertex as the root and generate $Z^{(1)}\sim\poi(\lambda)$ as the number of children of the first generation. For the first generation, independent of $Z^{(1)}$, we generate $\xi^{(1)}_1,\ldots, \xi^{(1)}_{Z^{(1)}}$ i.i.d. according to $\poi(\lambda)$ as the number of children of each vertex in the first generation. Let $Z^{(2)}=\sum_{i=1}^{Z^{(1)}}\xi^{(1)}_i$ denote the total number of vertices in the second generation. In general, for the $j$th generation, $j = 1,2,\ldots$, generate the number of children for each vertex in the $j$th generation $\xi^{(j)}_1,\ldots,\xi^{(j)}_{Z^{(j)}}$ i.i.d. according to $\poi(\lambda)$, independent of all previous variables $\{ \xi^{(i-1)}_1, \ldots, \xi^{(i-1)}_{Z^{(i-1)}}, Z^{(i)}, \text{ for all } i \le j\}$. Let $Z^{(j+1)}= \sum_{k=1}^{Z^{(j)}}\xi^{(j)}_k$ denote the total number of vertices in the $j$th generation. In this way, we iteratively defined a measure on rooted trees.

Remember that the total variation distance between two probability measures $\mu_1$ and $\mu_2$ is defined as $\dtv(\mu_1,\mu_2):=\sup_{g:\mathcal{G}^*\rightarrow [-1,1]}\left|\int gd\mu_1-\int gd\mu_2\right|$. With the definitions above, we are ready to state the proposition that upper bounds the total variation distance between $\mathrm{Nbr}_R$ and a Galton--Watson tree.

 \begin{prop}\label{prop:neighbour_sim_GWT}
 Let $A_n \sim \mathrm{SBM}(n,L,\mathbf{p},\frac1n \mathbf{Q})$ with $\mathbf{Qp} = \lambda \mathbf{1}$ for some positive constant $\lambda$ and an all-$1$ vector $\mathbf{1}$ of dimension $L\times 1$.
 Let 
 $R=\floor*{\log\log\log n}$. Recall that $\mathrm{Nbr}_R$ and $\mathrm{GWT}(\poi(\lambda))_R$ denote the $R$-neighbourhood marginals of $\mathrm{Nbr}$ and $\mathrm{GWT}(\poi(\lambda))$ respectively, as defined in~\eqref{eqn:h-marginal-distn}. Then, the total variation distance satisfies as $n\rightarrow \infty$,
 $$d_\mathrm{TV}(\mathrm{Nbr}_R,\mathrm{GWT}(\poi(\lambda))_R)\rightarrow 0.$$
 
 \end{prop}

The proof of Proposition~\ref{prop:neighbour_sim_GWT} follows a similar idea as the proof of Proposition 2 in~\cite{Mossel--Neeman--Sly15}. The key idea essentially follows from the fact that $\mathrm{GWT}(\poi(\lambda))$ can be constructed from a sequence of Poisson random variables with mean $\lambda$, while $\mathrm{Nbr}_R$ can be constructed from a sequence of binomial random variables with approximately the same mean $\lambda$. The case when $R = 1$ follows essentially from the Poisson approximation of Binomial in the $1/n$ probability regime. This Proposition generalizes it to $R = O(\log n)$. 
 
 To upper bound the total variation distance between these two distributions, we first review the definition of coupling between two probability distributions. Let $\mu$ and $\nu$ be probability measures on the same measurable space $(S,\mathcal{S})$. A coupling of $\mu$ and $\nu$ is a probability measure $\gamma$ on the product space $(S\times S,\mathcal{S}\times \mathcal{S})$ such that the marginal of $\gamma$ coincide with $\mu$ and $\nu$, i.e.,
$$\gamma(A\times S)=\mu(A)\; and \; \gamma(S\times A)=\nu(A),\; \forall A\in\mathcal{S}.$$
It is known that for two probability distributions $\mu$ and $\nu$ on $(S,\mathcal{S})$ and a pair of random variables $(X,Y)\sim\gamma$, where $\gamma$ is a coupling of $\mu$ and $\nu$, it holds that
$$\dtv(\mu,\nu)\le \P(X\neq Y),$$
(see, e.g.,~\cite{Roch2020}).
Therefore, it suffices to construct a coupling $(G_R,T_R)$ such that the marginal distribution of $G_R$ is $\mathrm{Nbr}_R$, the marginal distribution of $T_R$ is $\mathrm{GWT}(\poi(\lambda))_R$ and $\P(G_R\neq T_R)\rightarrow 0$. To construct such a coupling, we will present an induction process in which each step holds with high probability. 

To formally state the proof of Proposition~\ref{prop:neighbour_sim_GWT}, we need a few more definitions. Let $V=V(A_n)$ denote the set of vertices in the graph $A_n$. For a vertex $\rho\in V$ and an integer $r\in[R]$, let $G_r(\rho)$ denote the (random) $r$-neighbourhood of $\rho$, i.e, $G_r(\rho)$ is the induced subgraph on the vertex set $\{v\in V:d(v,\rho)\le r\}$. We will simply write $G_r$ for $G_r(\rho)$. Let $V_r\triangleq V\setminus V(G_r)$ denote the set of vertices in $V$ that are not in $V(G_r)$. Let $\partial G_r\triangleq \{v\in V: d(v,\rho)=r\}$ denote the depth-$r$ neighbourhood of $\rho$. For a vertex $v\in\partial G_r$, let $Y_v$ denote the number of neighbours $v$ has in $V_r$. Let $T_R$ be a random tree generated from $ \mathrm{GWT}(\poi(\lambda))_R$, i.e., $T_R\sim\mathrm{GWT}(\poi(\lambda))_R.$ For a vertex $u\in V(T_R)$, let $Z_u$ denote the number of children of $u$ in $T_R$. In order to couple $G_R$ with $T_R$ such that $\P(G_R\neq T_R)\rightarrow 0$, a necessary condition is that $G_R$ is a tree with high probability. Now, we define two sets of events to guarantee that $G_R$ is a tree. For any $r\in[R]$, let $C_r$ denote the event that no vertex in $V_{r-1}$ has more than one neighbour in $G_{r-1}$ and let $D_r$ denote the event that there are no edges within $\partial G_r$. Clearly, if $C_r$ and $D_r$ holds for every $r\in[R]$, then $G_R$ is indeed a tree. With these definitions, we state a lemma that introduces the induction step.
\begin{lem}\label{lem:induction}
If
\begin{enumerate}
    \item $G_{r-1}=T_{r-1}$;
    \item $Y_u=Z_u$ for every $u\in \partial G_{r-1};$
    \item $C_r$ and $D_r$ hold
\end{enumerate}
then $G_r=T_r$.
\end{lem}

\begin{proof}[Proof of Lemma~\ref{lem:induction}]
First of all, events $C_r$ and $D_r$ guarantee that $G_r$ is a tree. Moreover, $G_{r-1}=T_{r-1}$ and $Y_u=Z_u$ for every vertex in the last generation of $G_{r-1}$ and $T_{r-1}$, so we have $G_r=T_r$.
\end{proof}
Next, we define a set of auxiliary events in order to complete the proof of our induction step. For any $0\le r\le R$, we define $E_r$ to be the event
$$E_r=\{|\partial G_s|\le 2^s Q_\mathrm{max}^s\log n,\; \forall s\le r+1\}.$$
The utility of this auxiliary event is shown in the following two lemmas.

\begin{lem}\label{lem:bound_num_vertex}
For all $r\le R$ and some constant $c>0$,
$$\P(E_r|E_{r-1})\ge 1-n^{-c}.$$
Moreover, $|G_r|=O(\log^2 n)$ when $E_{r-1}$ holds true.
\end{lem}
\begin{proof}[Proof of Lemma~\ref{lem:bound_num_vertex}]
First of all, remember that for two random variables $X_1$ and $X_2$, we say that $X_1$ stochastically dominates $X_2$ if $\P(X_1\ge x)\ge \P(X_2\ge x)$ for any $x$.
By our definition, $Y_v$ is stochastically dominated by $\mathrm{Binom}(n,Q_\mathrm{max}/n)$ for any $v$. On $E_{r-1}$, $|\partial G_r|\le 2^rQ_\mathrm{max}^r\log n$ and so $|\partial G_{r+1}|$ is stochastically dominated by
$$Z\sim\mathrm{Binom}(2^r\Qm^rn\log n,\frac{\Qm}{n}).$$
Thus,
\begin{align*}
    \P(E_r^c|E_{r-1})&=\P(|\partial G_{r+1}|>2^{r+1}\Qm^{r+1}\log n |E_{r-1})\\
    &\le\P(Z\ge 2\E Z)\stackrel{(a)}{\le}\left(\frac{e}{4}\right)^{\E Z},
\end{align*}
where (a) follows by a multiplicative version of Chernoff's inequality~\cite{Chernoff1952}
$$\P(X>(1+\delta)\E X)\le\left(\frac{e^\delta}{(1+\delta)^{1+\delta}}\right)^{\E X}$$
for binomial random variable $X$ and any $\delta>0$.
Because $R=\floor*{\log \log \log n}$, we have
$$\E Z=2^r\Qm^{r+1}\log n=\Omega( \log ^{c_1} n),$$
for some positive constant $c_1$,
which proves the first part of the Lemma.\newline
For the second part, on $E_{r-1}$
\begin{align*}
|G_r|&=\sum_{r=1}^R|\partial G_r|\\&\le\sum_{r=1}^R2^r\Qm ^r\log n\le (2\Qm)^{R+1}\log n=O(\log^2 n),
\end{align*}
since $R=\floor*{\log \log \log n} $.
\end{proof}

\begin{lem}\label{lem:is_tree}
For any $r$, 
$$\P(C_r|E_{r-1})\ge 1-O(n^{-3/4})$$
$$\P(D_r|E_{r-1})\ge 1-O(n^{-3/4}).$$
\end{lem}
\begin{proof}
For the first claim, fix $u,v\in \partial G_r$. For any $w\in V_r$, the probability that $(u,w)$ and $(v,w)$ both appear is $O(n^{-2})$. Now, $|V_r|\le n$ and Lemma~\ref{lem:bound_num_vertex} implies that $|\partial G_r|^2=O(\log^4 n)$. Hence the result follows from the union bound over all triples $u,v,w$.
For the second claim, the probability of an edge between any particular pair of vertices $u,v\in\partial G_r$ is $O(n^{-1})$. Lemma~\ref{lem:bound_num_vertex} implies that $|\partial G_r|^2=O(\log^4 n)$ and the result follows from a union bound over all the pairs $u,v$.
\end{proof}

Moreover, we state a lemma to bound the total variation distance between the Binomial distribution and the Poisson distribution. 
\begin{lem}[Lemma~5 in~\cite{Mossel--Neeman--Sly15}]
\label{lem:Binom_Pois_TV}
If $m$ and $n$ are positive integers and $c>0$ is a positive constant, then
$$d_\mathrm{TV}(\mathrm{Binom}(m,c/n),\poi(c))=O\left(\frac{\max\{1,|m-n|\}}{n}\right).$$
\end{lem}
Now, we are ready to prove Proposition~\ref{prop:neighbour_sim_GWT}.
\begin{proof}[{Proof of Proposition~\ref{prop:neighbour_sim_GWT}}]
For any random variable $X$, let $\mathrm{dist}(X)$ denote the probability distribution of $X$. For a vertex $v\in V$, let $X_v$ denote its community label. Fix $r$ and suppose $E_{r-1}$ holds and $T_{r-1}=G_{r-1}$. Let $V^{(1)},\ldots, V^{(L)}$ denote the set of vertices in $V$ with community label $1,\ldots, L$ respectively and similarly, let $V_{r-1}^{(1)},\ldots, V_{r-1}^{(L)}$ denote the set of vertices in $V_{r-1}$ with community label $1,\ldots, L$ respectively. Let us first provide a high probability bound for $|V^{(1)}|,\ldots, |V^{(L)}|$. 
For each $i\in [L]$, by Hoeffding's inequality~\cite{Hoeffding1963}, we have
$$\P\left(||V^{(i)}|-np_i|\ge n^{3/4}\right)\le 2e^{-\Omega(n^{1/2})}.$$
It follows from the union bound that
\[
\P\left(\forall i\in[L], ||V^{(i)}|-np_i| \le n^{3/4}\right) \ge 1 - 2Le^{-\Omega(n^{1/2})}.
\]
Combining with Lemma~\ref{lem:bound_num_vertex}, conditioned on event $E_{r-1}$ and the event that $\{\forall i\in[L], ||V^{(i)}|-np_i| \le n^{3/4}\}$, the number $|V_{r-1}^{(i)}|$ of vertices in $V_{r-1}$ with community label $i$ can be upper and lower bounded as
\begin{align*}
np_i+n^{3/4}&\ge |V^{(i)}|\ge |V^{(i)}_{r-1}|\ge |V^{(i)}|-|G_{r-1}|\\&\ge np_i-n^{3/4}-O(\log^2 n)
\end{align*}
for any $i\in[L]$.

Next, let us analyze the distribution of $Y_v$, i.e., the number of neighbours in $V_{r-1}$ for a vertex $v \in \partial G_{r-1}$. Let $Y_v^{(1)},\ldots, Y_v^{(L)}$ denote the number of neighbours $v$ has in $V_{r-1}$ with community label $1,\dots, L$ respectively. Then we have $Y_v=\sum_{i=1}^LY_v^{(i)}$ and $Y_v^{(1)},\ldots, Y_v^{(L)}$ are independent of each other given the community label $X_v$. We know that conditional distribution of $Y_v^{(i)}$ given $X_v=j$ is $\mathrm{Binom}(|V_{r-1}^{(i)}|,Q_{ji}/n)$. By Lemma~\ref{lem:Binom_Pois_TV}, we have $d_\mathrm{TV}(\mathrm{Binom}(|V_{r-1}^{(i)}|,Q_{ji}/n),\poi(p_iQ_{ji}))=O(n^{-1/4})$. For any $i\in[L]$, let $Z^{(i)}$ denote a Poisson random variable with mean $p_iQ_{ji}$, i.e., $Z^{(i)}\sim \poi(p_iQ_{ji})$. Therefore, conditioning on any $X_v=j$ for any $i\in[L]$ we can couple $Y_v^{(i)}$ with $Z^{(i)}$ such that $\P(Y_v^{(i)}\neq Z^{(i)})=O(n^{-1/4})$. By a union bound over $L$ communities, we can couple the $L$-tuples $(Y_v^{(1)},\ldots,Y_v^{(L)})$ and $(Z^{(1)},\ldots,Z^{(L)})$ such that $\P((Y_v^{(1)},\ldots,Y_v^{(L)})\neq(Z^{(1)},\ldots,Z^{(L)}))=O(n^{-1/4})$. By Poisson superposition, it follows that $d_\mathrm{TV}(\mathrm{dist}(Y_v|X_v=j),\poi(\sum_{i=1}^Lp_iQ_{ji}))=O(n^{-1/4})$. Recall our assumption that $\sum_{i=1}^Lp_iQ_{ji}=\lambda$ for any $j\in[L]$ and that $Z_v\sim\poi(\lambda)$. Moreover, $Y_v$ is conditionally independent of $G_{r-1}$ given its community label $X_v$. Therefore, we have
\begin{align*}
&d_\mathrm{TV}(\mathrm{dist}(Y_v|G_{r-1}=g_{r-1}),\mathrm{dist}(Z_v)) \\&= d_\mathrm{TV}\left(\mathrm{dist}(Y_v|X_v=j),\poi\left(\sum_{i=1}^Lp_iQ_{ji}\right)\right)=O(n^{-1/4})
\end{align*}
for any possible realization $g_{r-1}$.
Thus, we can couple $Y_v$ with $Z_v$ such that $\P(Y_v\neq Z_v)=O(n^{-1/4})$. Then by a union bound over all vertices in $\partial G_{r-1}$, we have
\begin{align*}
    \P(\exists v \in \partial G_{r-1}: Y_v \neq Z_v) &\le |\partial G_{r-1}|\P(Y_v \neq Z_v)\\
    &= O(\log^2 n)O(n^{-1/4})\\
    &=O(n^{-1/5}).
\end{align*}
The argument above shows that we can find a coupling such that with probability at least $1-O(n^{-1/5})$, $Y_u=Z_u$ for any $u\in\partial G_{r-1}$. Moreover, Lemmas~\ref{lem:bound_num_vertex} and~\ref{lem:is_tree} imply that $C_r$, $D_r$ and $E_r$ hold simultaneously with probability at least $1-n^{-c}-O(n^{-3/4})$. Putting these all together, we see that the hypothesis of Lemma~\ref{lem:induction} holds with probability at least $1-O(n^{-\mathrm{min}(c,1/5)})$. Thus,
$$\P(G_{r}=T_{r},E_r|G_{r-1}=T_{r-1},E_{r-1})\ge 1-O(n^{-\mathrm{min}(c,1/5)}).$$
For the base case, we have $\P(E_0)\rightarrow 1$ as $n\rightarrow \infty$ by the multiplicative Chernoff's inequality, and we can certainly couple $G_0$ with $T_0$ since they are both a single vertex. Therefore, we can apply a union bound over $r=1,\ldots,R$, 
\begin{align*}
    &\P(G_R\neq T_R)\\&=\P(\exists r\in[R]\; \mathrm{s.t.}\; G_{r}\neq T_{r}\text{ or }E_r^c|G_{r-1}=T_{r-1},E_{r-1})\\
    &\le R\cdot O(n^{-\mathrm{min}(c,1/5)})\\
    &=\Theta(\log\log\log n) O(n^{-\mathrm{min}(c,1/5)})\\ &\rightarrow0
\end{align*}
as $n\rightarrow \infty$, i.e.,
we can couple $G_R$ and $T_R$ such that $G_R=T_R$ with high probability. Therefore, we have $d_\mathrm{TV}(\mathrm{Nbr}_R,\mathrm{GWT}(\poi(\lambda))_R)\rightarrow 0$ as $n\rightarrow\infty$.
\end{proof}

With Proposition~\ref{prop:neighbour_sim_GWT}, we are ready to establish the local weak convergence of the stochastic block model.

\begin{prop}[Local weak limit of sparse SBMs]\label{LWC_SBM}
Consider a sparse stochastic block model $\mathrm{SBM}(n,L,\mathbf{p},\frac{1}{n}\mathbf{Q})$ with $\mathbf{Qp} = \lambda \mathbf{1}$ for some positive constant $\lambda$, where $\mathbf{1}$ denote the all-$1$ vector of dimension $L\times 1$. Let $\{A_n\}_{n=1}^\infty$ denote a sequence of random graphs with $A_n\sim \mathrm{SBM}(n,L,\mathbf{p},\frac{1}{n}\mathbf{Q})$. Let $U(A_n)$ be the rooted neighbourhood distribution of $A_n$.
Then, the local weak limit of the sequence of graphs $\{A_n\}_{n=1}^\infty$ is given by $\mathrm{GWT}(\poi(\lambda))$,\footnote{To the best of our knowledge, the result of Proposition~\ref{LWC_SBM} was novel when it was first presented in~\cite{Bhatt--Wang--Wang--Wang2020}. Later in~\cite{hofstad2023}, van der Hofstad independently derived a more general result which includes Proposition~\ref{LWC_SBM} as a special case. } i.e.,
$$U(A_n)\rightsquigarrow \mathrm{GWT}(\poi(\lambda)).$$
\end{prop}
\begin{rem}
When $Q_{i,j} = c$ for all $i,j \in [L]$, the stochastic block model recovers the well-known local weak convergence result on the Erd\H{o}s--R\'{e}nyi model (see, e.g.,~\cite[Theorem 3.12]{Bordenave2016}).
\end{rem}
\begin{proof}[{\bf Proof of Proposition~\ref{LWC_SBM}}]
We want to show that for any uniformly continuous and bounded function $f$, 
$$\left|\int fdU(A_n)-\int f d\mathrm{GWT}(\poi(\lambda))\right|\rightarrow 0$$
as $n\rightarrow \infty$.  
Since $f$ is a uniformly continuous function on $\mathcal{G}^*$, for every $\epsilon >0$ there exists $\delta >0$ such that, for any pair of rooted graphs $[G_1,o_1]$ and $[G_2,o_2]\in \mathcal{G^*}$ with $d^*([G_1,o_1],[G_2,o_2])<\delta$ we have $|f(G_1,o_1)-f(G_2,o_2)|< \epsilon$. Recall that $d^*([G_1,o_1],[G_2,o_2]):=\frac{1}{1+\hat{h}}$, where $\hat{h}$ denotes the maximum layers of matching between $[G_1,o_1]$ and $[G_2,o_2]$. Therefore, as long as $h>\frac{1}{\delta}-1$, we have $|f((G,o)_h)-f(G,o)|< \epsilon$. It follows that $|f([i,o])-f([g,o])|<\epsilon$, if $[i,o]_h=[g,o]$. Let $\mu\in\mathcal{P}(\mathcal{G}^*)$ and assume $h>\frac{1}{\delta}-1$. We have
{\allowdisplaybreaks
\begin{align*}
    &\left|\int fd\mu_h-\int fd\mu\right|\nonumber\\&=\left|\sum_{[g,o]\in \mathcal{G}^*_h}f([g,o])\mu_h([g,o])-\sum_{[i,o]\in \mathcal{G}^*}f([i,o])\mu([i,o])\right|\\
    &\le \sum_{[g,o]\in\mathcal{G}^*_h}\Bigg|f([g,o])\mu_h([g,o])
    \nonumber\\&\hspace{5em}-\sum_{[i,o]\in\mathcal{G}^*:[i,o]_h=[g,o]}f([i,o])\mu([i,o])\Bigg|\\
    &\stackrel{(a)}{=}\sum_{[g,o]\in\mathcal{G}^*_h}\left|\sum_{[i,o]\in\mathcal{G}^*:[i,o]_h=[g,o]}(f([g,o])-f([i,o]))\mu([i,o])\right|\\
    &\le\sum_{[g,o]\in\mathcal{G}^*_h}\sum_{[i,o]\in\mathcal{G}^*:[i,o]_h=[g,o]}\left|f([g,o])-f([i,o])\right|\mu([i,o])\\
    &\le\sum_{[g,o]\in\mathcal{G}^*_h}\sum_{[i,o]\in\mathcal{G}^*:[i,o]_h=[g,o]}\epsilon\mu([i,o])=\epsilon,
\end{align*}}%
where (a) follows since $\mu_h([g,o])=\sum_{[i,o]\in \mathcal{G}^*:[i,o]_h=[g,o]}\mu([i,o])$.
Therefore, $|\int fdU(A_n)_h-\int fdU(A_n)|<\epsilon$ and $|\int fd\mathrm{GWT}(\poi(\lambda))_h-\int fd\mathrm{GWT}(\poi(\lambda))|<\epsilon$. By Proposition~\ref{prop:neighbour_sim_GWT}, there exists $n_0$ such that if $n\ge n_0$ and 
$\floor*{\log\log\log n}\ge R$, we have $\dtv(\mathrm{GWT}(\poi(\lambda))_R,U(A_n)_R)<\epsilon$. Since $f$ is a bounded function, we have $|\int f d\mathrm{GWT}(\poi(\lambda))_R-\int fdU(A_n)_R|<\epsilon$, as long as $n$ is large enough. Therefore, if we take $n$ large enough such that 
$\floor*{\log\log\log n}>\frac{1}{\delta}-1$ and $|\int f d\mathrm{GWT}(\poi(\lambda))_h-\int fdU(A_n)_h|<\epsilon$, we have 
\begin{align*}
    &\left|\int f dU(A_n)-\int fd\mathrm{GWT}(\poi(\lambda))\right|\\&\le \left|\int fdU(A_n)_h-\int fdU(A_n)\right|\\
    &\;\;\;+\left|\int fd\mathrm{GWT}(\poi(\lambda))_h-\int fd\mathrm{GWT}(\poi(\lambda))\right|\\
    &\;\;\;+\left|\int f d\mathrm{GWT}(\poi(\lambda))_h-\int fdU(A_n)_h\right|\\
    &<3\epsilon,
\end{align*}
which completes the proof. 
\end{proof}

\subsection{BC entropy}\label{subsec:BCentropy}
In this section, we review the notion of BC entropy introduced in~\cite{Bordenave_2014}, which is shown to be the fundamental limit of universal lossless compression for certain graph family~\cite{Delgosha--Anatharam2017}. 

For a Polish space $\Omega$, let $\mathcal{P}(\Omega)$ denote the set of all Borel probability measures on $\Omega$. Let $A$ be a Borel set in $\Omega$, we define the \emph{$\epsilon$-extension of A}, denoted $A^\epsilon$, as the union of the open balls with radius $\epsilon$ centered around the points in $A$. For two probability measures $\mu$ and $\nu$ in $\mathcal{P}(\Omega)$, we define the \emph{L\'{e}vy--Prokhorov distance} $d_\mathrm{LP}(\mu,\nu):=\inf\{\epsilon>0:\mu(A)\le\nu(A^\epsilon)+\epsilon \text{ and } \nu(A)\le\mu(A^\epsilon)+\epsilon,\forall A\in\mathcal{B}(\Omega)\}$, where $\mathcal{B}(\Omega)$ denotes the Borel sigma algebra of $\Omega$. 
Let $\rho\in \mathcal{P}(\mathcal{G}^*)$. Let $d$ be the expected number of neighbours of root under the law $\rho$ and let a sequence $m=m(n)$ such that $m/n\rightarrow d/2$, as $n\rightarrow\infty$. Define
$\Gnm$ to be the set of graphs with $n$ vertices and $m$ edges. For $\epsilon>0$, define 
$$\Gnm(\rho,\epsilon)=\{G\in \Gnm:U(G)\in B(\rho,\epsilon)\},$$
where $B(\rho,\epsilon)$ denotes the open ball with radius $\epsilon$ around $\rho$ with respect to Lévy–Prokhorov metric. Now, we define the \emph{$\epsilon$-upper BC entropy} of $\rho$ as
$$\upperBC(\rho,\epsilon)=\limsup\limits_{n\rightarrow\infty}\frac{\log |\Gnm(\rho,\epsilon)|-m\log n}{n}$$
and define the \emph{upper BC entropy} of $\rho$ as
$$\upperBC(\rho)=\lim_{\epsilon\rightarrow 0}\upperBC(\rho,\epsilon).$$
Similarly we define the \emph{$\epsilon$-lower BC entropy} $\lowerBC(\rho,\epsilon)$ and \emph{lower BC entropy} $\lowerBC(\rho)$ with $\limsup$ replaced by $\liminf$ in above definitions. If $\rho$ is such that $\upperBC(\rho)=\lowerBC(\rho)$, then this common limit is called the \emph{BC entropy} of $\rho$
$$\Sigma(\rho):=\upperBC(\rho)=\lowerBC(\rho).$$

The following lemma states the BC entropy of the Galton--Waston tree distribution. 
\begin{lem}[Corollary 1.4 of\cite{Bordenave_2014}]\label{SBM_BCentropy}
The BC entropy of the Galton--Watson tree distribution $\mathrm{GWT}(\poi(\lambda))$ is given by
$$\Sigma\left(\mathrm{GWT}(\poi(\lambda))\right)=\frac{\lambda}{2}\log \frac{e}{\lambda} \quad \mathrm{bits}.$$
\end{lem}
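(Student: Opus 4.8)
To prove Lemma~\ref{SBM_BCentropy}, the plan is to reduce the BC entropy of $\rho := \mathrm{GWT}(\mathrm{P}_{\lambda})$ to counting sparse Erd\H os--R\'enyi graphs. Under $\rho$ the root has $\mathrm{P}_{\lambda}$-many neighbours, so its expected degree is $d=\lambda$, and the relevant edge counts are the sequences $m=m(n)$ with $m/n\to\lambda/2$; thus the uniform measure on $\Gnm$ is, for first-order purposes, interchangeable with the Erd\H os--R\'enyi law $G(n,\lambda/n)$ conditioned on having $m$ edges.

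The core step is to show that for every fixed $\epsilon>0$,
\[
|\Gnm(\rho,\epsilon)| \;=\; (1-o(1))\,|\Gnm|,
\]
i.e.\ almost every $n$-vertex, $m$-edge graph already has $U(G)$ within L\'evy--Prokhorov distance $\epsilon$ of $\rho$. Since $d_{\mathrm{LP}}$ metrizes weak convergence on the Polish space $\mathcal{G}^*$, it suffices to control finitely many test neighbourhoods: for each depth $h$ and each $[H,o]\in\mathcal{G}^*_h$, the fraction of vertices $v$ of a uniformly random $G\in\Gnm$ with $[G(v),v]_h=[H,o]$ converges in probability to $\mathrm{GWT}(\mathrm{P}_{\lambda})_h([H,o])$. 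Convergence of the \emph{expected} fraction is exactly Lemma~\ref{LWC_SBM} with $L=1$ and $a=b=\lambda$; upgrading it to convergence in probability is a second-moment computation, using that the $h$-neighbourhoods of two vertices at graph distance larger than $2h$ are asymptotically independent, together with a branching-process domination of the neighbourhood-exploration process (valid because $m=\Theta(n)$, so vertices have $O(1)$ degree with overwhelming probability and the mass carried by atypically large or dense components is negligible, uniformly in $n$). This yields $\P\bigl(U(G)\in B(\rho,\epsilon)\bigr)\to 1$ for $G$ uniform on $\Gnm$, hence the displayed equation.

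Granting this, $\log|\Gnm(\rho,\epsilon)| = \log|\Gnm| + o(1) = \log\binom{\binom n2}{m} + o(1)$ for every fixed $\epsilon$, so it remains to expand $\log\binom{\binom n2}{m}-m\log n$ to leading order in $n$. Writing $M=\binom n2$ and using Stirling together with $m/M\to0$ and $m^2/M=O(1)$, one gets $\log\binom Mm = m\log\tfrac Mm + m\log e + O(\log n)$; since $m/n\to\lambda/2$ forces $M/m = \tfrac n\lambda(1+o(1))$, this becomes $\log\binom Mm - m\log n = m\log\tfrac e\lambda + o(n)$, and dividing by $n$ gives $\tfrac1n\bigl(\log|\Gnm(\rho,\epsilon)|-m\log n\bigr)\to\tfrac\lambda2\log\tfrac e\lambda$. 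Because this is a genuine limit and is independent of $\epsilon$, both $\upperBC(\rho,\epsilon)$ and $\lowerBC(\rho,\epsilon)$ equal $\tfrac\lambda2\log\tfrac e\lambda$, and letting $\epsilon\to0$ gives $\Sigma\bigl(\mathrm{GWT}(\mathrm{P}_{\lambda})\bigr)=\tfrac\lambda2\log\tfrac e\lambda$ bits.

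I expect the concentration step of the second paragraph to be the main obstacle. A L\'evy--Prokhorov ball around $\rho$ is captured only if the neighbourhood statistics match \emph{simultaneously} across depths and the contribution of vertices sitting in atypically large or dense components is shown to be uniformly negligible; this is precisely where the sparsity assumption $m/n\to\lambda/2$ and the branching-process comparison do the work, and also where one must verify that the uniform model on $\Gnm$ (rather than $G(n,\lambda/n)$, for which the local limit is classical) inherits the same local behaviour. By contrast, the Stirling estimate and the passage $\epsilon\to0$ are routine.
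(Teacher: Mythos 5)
The paper itself gives no proof of this lemma: it is imported verbatim as Corollary~1.4 of Bordenave--Caputo, whose proof is a general counting/large-deviations analysis valid for arbitrary unimodular Galton--Watson limits. Your argument is a correct, essentially self-contained alternative that works precisely because $\mathrm{GWT}(\mathrm{P}_\lambda)$ is the \emph{maximum-entropy} local limit: it is the local weak limit in probability of the uniform graph on $\Gnm$ with $m/n\to\lambda/2$, so $|\Gnm(\rho,\epsilon)|=(1-o(1))|\Gnm|$ for every fixed $\epsilon$, and the BC entropy collapses to the Stirling estimate $\frac1n\bigl(\log\binom{\binom n2}{m}-m\log n\bigr)\to\frac\lambda2\log\frac e\lambda$, which you carry out correctly (the correction term $m^2/\binom n2=O(1)$ is negligible, the limit exists and is independent of $\epsilon$, so letting $\epsilon\to0$ is immediate). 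What your route buys is elementarity; what it gives up is generality: for any other target $\rho$ with the same mean degree, $B(\rho,\epsilon)$ captures an exponentially small fraction of $\Gnm$, the first-moment count no longer suffices, and one needs the full Bordenave--Caputo machinery. The one step you should not pass over too quickly is the concentration/transfer step: Lemma~\ref{LWC_SBM} with $L=1$ gives only convergence of the \emph{annealed} distribution $\E U(\cdot)$ for the binomial model $G(n,\lambda/n)$, whereas you need $U(G)\in B(\rho,\epsilon)$ for a $(1-o(1))$-fraction of graphs in the \emph{uniform} model $\Gnm$; the second-moment argument for the empirical neighbourhood fractions, combined with conditioning $G(n,\lambda/n)$ on its (concentrated) edge count, does deliver this and is standard, but it is genuinely additional work not contained anywhere in the paper.
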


\subsection{Achieving BC entropy in the sparse regime}\label{subsec:Achieve_BC}
With the Lemma above, we can give a performance guarantee of our algorithm corresponding to the BC entropy. It is a Theorem analogous to Proposition 1 in\cite{Delgosha--Anatharam2017}.
\begin{thm}\label{thm:Achieve_BC}
Let $A_n\sim\mathrm{SBM}\left(n,L,\mathbf{p},\frac{1}{n}\mathbf{Q}\right)$ with $\mathbf{Qp} = \lambda \mathbf{1}$ for some positive constant $\lambda$ and an all-$1$ vector $\mathbf{1}$ of dimension $L\times 1$. Let $m= \binom{n}{2}\frac{\lambda}{n}$ be the expected number of edges in the model. Then, our compression algorithm achieves the BC entropy of the local weak limit of stochastic block models in the sense that 
$$\limsup_{n \to \infty}\frac{\E[\ell(C_k(A_n))]-m\log n}{n}\le \Sigma\left(\mathrm{GWT}(\poi(\lambda))\right).$$
\end{thm}
\begin{proof}
By our proof of Proposition~\ref{prop:redundancyStrong}, we have
\begin{align*}
\E[\ell(C_k(A_n))]\le &\binom{n'}{2}H(\mathbf{B}_{12})+2^{k^2}\log(2en^3)\\&+nk^2_nH(A_{12})+4+\E(2N_r\ceil*{\log n}).
\end{align*}
Notice that 
{\allowdisplaybreaks
\begin{align*}
    \binom{n'}{2}H(\mathbf{B}_{12})&\le \binom{n'}{2}k^2H(A_{12})\\
    &=\binom{n'}{2}k^2h(\lambda/n)\\
    &\stackrel{(a)}{=}\binom{n'}{2}k^2\left(\frac{1}{n}\lambda\log \frac{ne}{\lambda}+o\left(\frac{1}{n}\right)\right)\\
    &\stackrel{(b)}{\sim} \binom{n}{2}\left(\frac{1}{n}\lambda\log n +\frac{1}{n}\lambda\log \frac{e}{\lambda}+o\left(\frac{1}{n}\right)\right)\\
    &=\binom{n}{2}\frac{1}{n}\lambda\log n+\frac{\lambda\log e-\lambda\log \lambda}{2}n+o(n)\\
    &\stackrel{(c)}{=}m\log n+n\Sigma\left(\mathrm{GWT}(\poi(\lambda))\right)+o(n)
\end{align*}}%
where (a) follows since $h(p)=p\log \frac{e}{p}-\frac{\log e}{2}p^2+o(p^2)$, (b) follows since $n'k=n$ and (c) follows from Lemma~\ref{SBM_BCentropy}.
Then it suffices to that the remaining terms in the upper bound of $\E[\ell(C_k(A_n))]$ are all $o(n)$. Indeed we have 
\begin{align*}
    2^{k^2}\log (2en^3)\le 2^{\delta\log n}\log (2en^3)=n^\delta \log (2en^3)=o(n)
\end{align*}
since $\delta<1$, and 
\begin{align*}
    nk^2_nH(A_{12})+4&=nk^2_nh(\lambda/n)+4\\
    &=nk^2_n\left(\frac{1}{n}\lambda\log \frac{ne}{\lambda}+o\left(\frac{1}{n}\right)\right)+4\\
    &\le n\delta \log n \left(\frac{1}{n}\lambda\log \frac{ne}{\lambda}+o\left(\frac{1}{n}\right)\right)+4\\
    &=\delta\log n \left(\lambda\log \frac{ne}{\lambda}\right)+o\left(\log n\right)\\
    &=o(n),
\end{align*}
and 
\begin{align*}
    &\E(2N_r\lceil\log n\rceil)\\&=\frac1n\PQP\left((n-\tilde{n})\tilde{n} + \binom{n-\tilde{n}}{2}\right)2\lceil\log n\rceil\\&=O(k\log n)=o(n).
\end{align*}
\end{proof}

Theorem \ref{thm:Achieve_BC} shows that in the sparse regime $f(n) = \frac{1}{n}$, our compressor achieves the BC entropy of the Galton--Watson tree, which is the local weak convergence limit of the stochastic block model.

\section{Comparison to alternative methods}\label{sec:C1discussion}
In this section, we consider three alternative compression methods which are natural attempts at designing universal graph compressors for stochastic block models. We demonstrate why they are not applicable to our problem or not universal in the same sense as the proposed algorithm.

One intuitive attempt is to convert the entries in the adjacency matrix into a one-dimensional sequence according to some ordering and then apply a universal compressor to the sequence.
Let us take a closer look at the correlation among entries in the adjacency matrix and explain why existing universal sequence compressors developed for stationary processes may not be immediately applicable for certain orderings of the entries. Compressing $A_n$ entails compressing $A_{12},\ldots,A_{1,n},A_{23},\ldots,A_{n-1,n}$,
i.e. the bits in the upper triangle of $A_n$. Clearly, these are not independent (because of the dependency through $X_1^n$) so one cannot use any of the compressors universal for the class of iid processes to compress $A_n$. 
Moreover, we can show that some of the most natural orders of listing these ${n \choose 2}$ bits result in a sequence that is non-stationary.
    For example, consider listing the bits in the upper triangle row-wise (i.e. first listing the bits in the first row, followed by the bits in the second and so on, ending with $A_{n-1,n}$) resulting in the sequence $A_{12},\ldots,A_{1,n},A_{23},\ldots,A_{2,n},\ldots,A_{n-1,n}$. This
 can be seen to be nonstationary for example by considering the case when $n = 4, L = 2, Q_{11} = Q_{12} = 1,  Q_{12} = 0$. In this case the horizontal ordering is $A_{12},A_{13},A_{14},A_{23},A_{24},A_{34}$
    and we have $P(A_{12} = 1 ,A_{13} = 0,A_{14} = 1) > 0$ but $P(A_{23}=1 ,A_{24} = 0,A_{34} = 1) = 0$. Similar counterexamples can be given for establishing the nonstationarity of column-by-column ordering and diagonal-by-diagonal ordering. Therefore, it is unclear whether there exists an ordering that converts entries in the adjacency matrix into a stationary sequence.

Another potential approach is to first recover the community labels of all vertices, which allows one to rearrange the entries in the adjacency matrix into i.i.d. groups, and then apply a universal sequence compressor to each group of entries. There are two main issues with this approach. First, even when the exact recovery of labels is possible, to the best of our knowledge, all algorithms for community detection require knowledge of the graph parameters such as the number of communities and/or the edge probabilities~\cite{Abbe17}. Second, it is known that exact recovery is information-theoretically impossible when the edge probabilities are $o\left(\frac{\log n}{n}\right)$ but our compressor is universal even in this regime. This implies that universal compression is a fundamentally easier task than community detection for stochastic block models. 

Another approach is based on run-length coding for the adjacency matrix $A_n$, which stores the run-length of one of the symbols for the entire sequence, while storing the other symbol uncoded. For example, the sequence \texttt{AAAABAABBAA}, when encoded using the run length of \texttt{A}s, would be stored as \texttt{(4,B,2,B,0,B,2)} (so that the run length of \texttt{A}s between each \texttt{B} are stored). When applied to our problem, the expected length of this method can be shown to be upper bounded by 
${\bf p}^TQ{\bf p}\binom{n}{2}f(n)[\log(1/f(n)Q_\mathrm{min}) + O(\log\log n)]$ (by suitably modifying~\cite{Schilling1990}), where $Q_\mathrm{min} = \min_{i,j}Q_{i,j}$. Thus it fails to achieve the leading term in entropy when $f(n) = \Omega(1/\log n)$, since $p_\mathrm{min} = \Theta(f(n))$ and the term with $\log\log n$ contributes to the leading term. The analysis also fails in the case when there are zero entries in $W$ (since $p_\mathrm{min} = 0$), which is a reasonable choice of parameters for the SBM. In contrast, the analysis for $C_k$ is valid even when there are zero entries in $W$ as long as $\max_{ij}Q_{ij} = \Theta(1)$.

\section{Concluding Remarks}
\label{sec:conclusion}

In this paper, we constructed a compressor  that is universal for the class of stochastic block models with connection probabilities ranging from $\Omega\left(\frac{1}{n^{2-\eps}}\right)$ to $O(1)$.  There are many intriguing open problems that remain. Firstly, even though our algorithm takes polynomial time, $O(n^2)$ may be restrictive when $n$ is too large, and a lower complexity method is desirable. Secondly, a tight characterization of the minimax redundancy in this problem is of interest. Finally, we have considered the problem of universal compression in the stochastic setting, where we assumed that our graph is generated from a random graph model. This leads naturally to the question of the individual graph setting, where the redundancy must be established relative to a class of compressors. Each of these are promising avenues for further study.

\section*{Acknowledgment}
L. Wang would like to thank Emmanuel Abbe and Tsachy Weissman for stimulating discussions in the initial phase of the work. She is grateful to Young-Han Kim and Ofer Shayevitz for their interest and encouragement in this result. C. Wang would like to thank Richard Peng for his suggestion in writing. The authors would like to thank the associate editor and the anonymous reviewers for their invaluable contributions and constructive suggestions, which have significantly enhanced the quality and clarity of this paper.
\appendix
\subsection{Proof of Lemma~\ref{lem:KTlength}}\label{sec:lem-kt-proof}
Let $p_1=N_1/N, p_2=N_2/N,\cdots,p_m=N_m/N.$ Notice that $\sum_{i=1}^m p_i=1$, so $(p_1,\cdots p_m)$ can be viewed as a probability distribution. And the entropy of this distribution is $H(p_1, \cdots p_m)=\sum_{i=1}^m -p_i\log{p_i}.$ Firstly we consider the case when $N_1,N_2\cdots N_m$ are all positive and none of them equal to N.
By Stirling's approximation for factorial $\sqrt{2\pi n}(\frac{n}{e})^n e^{1/(12n+1)}\le n!\le \sqrt{2\pi n}(\frac{n}{e})^n e^{1/12n}$, we can bound
{\allowdisplaybreaks
\begin{align*} 
&\binom{N}{N_1,N_2\cdots N_m}\\&\ge \frac{\sqrt{2\pi N}N^N \exp\left(\frac{1}{12N+1}-\frac{1}{12N_1}-\frac{1}{12N_2}-\cdots -\frac{1}{12N_m}\right)}{(2\pi)^{m/2}(N_1N_2\cdots N_m)^{1/2}N_1^{N_1}N_2^{N_2}\cdots N_m^{N_m}}\\
&=\frac{\exp\left(\frac{1}{12N+1}-\frac{1}{12N_1}-\frac{1}{12N_2}-\cdots -\frac{1}{12N_m}\right)}{(2\pi)^{\frac{m-1}{2}}(p_1p_2\cdots p_m)^{1/2}N^{\frac{m-1}{2}}2^{-NH(p_1,p_2,\cdots,p_m)}}.
\end{align*}}%
Similarly, we have
{\allowdisplaybreaks
\begin{align*}
&\binom{2N}{2N_1,2N_2\cdots 2N_m}\\&\le \frac{\exp\left(\frac{1}{24N}-\frac{1}{24N_1+1}-\frac{1}{24N_2+1}-\cdots -\frac{1}{24N_m+1}\right)}{(2\pi)^{\frac{m-1}{2}}2^{\frac{m-1}{2}}(p_1p_2\cdots p_m)^{1/2}N^{\frac{m-1}{2}}2^{-2NH(p_1\cdots p_m)}}\cdot
\end{align*}}%
Consider the function 
\begin{align*}
&f(N_1,N_2,\cdots,N_m)\\&=\tfrac{1}{12N+1}-\tfrac{1}{24N}+(\tfrac{1}{24N_1+1}-\tfrac{1}{12N_1})+(\tfrac{1}{24N_2+1}-\tfrac{1}{12N_2})\\&\;\;\;+\cdots +(\tfrac{1}{24N_m+1}-\tfrac{1}{12N_m})
\end{align*}
and the function
$$g(n)=\frac{1}{24n+1}-\frac{1}{12n},$$
where $n$ is a positive integer.
Function $g(n)$ is minimized with $n=1$ and $\min g(n)=1/25-1/12$ and we can bound function $f(N_1,N_2,\cdots,N_m)\ge \frac{1}{12N+1}-\frac{1}{24N}+(1/25-1/12)m$. 
Finally we are ready to prove the lemma. 
\begin{align*} 
&\frac{\binom{N}{N_1,N_2\cdots N_m}\theta_1^{N_1}\cdots\theta_m^{N_m}}{\binom{2N}{2N_1,2N_2\cdots 2N_m}\theta_1^{2N_1}\cdots\theta_m^{2N_m}} \\&\ge \frac{2^{\frac{m-1}{2}}\exp(f(N_1,N_2,\cdots,N_m))}{2^{NH(p_1\cdots p_m)}\theta_1^{N_1}\cdots\theta_m^{N_m}}\\
&\ge \frac{2^{\frac{m-1}{2}}\exp\left(\frac{1}{12N+1}-\frac{1}{24N}+(1/25-1/12)m\right)}{2^{-ND_{\mathrm{KL}}(p||\theta)}} \\
&=2^{\frac{m-1}{2}}2^{ND_{\mathrm{KL}}(p||\theta)}2^{\log{e}(\frac{1}{12N+1}-\frac{1}{24N}+(1/25-1/12)m)}.
\end{align*}
Notice that $\frac{1}{12N+1}-\frac{1}{24N}$ goes to zero when $N\to \infty$, $\frac{m-1}{2}>(1/25-1/12)m$ and $D_{\mathrm{KL}}(P||\theta)\ge 0$. Therefore in this case, 
$$\frac{\binom{N}{N_1,N_2\cdots N_m}\theta_1^{N_1}\cdots\theta_m^{N_m}}{\binom{2N}{2N_1,2N_2\cdots 2N_m}\theta_1^{2N_1}\cdots\theta_m^{2N_m}} \ge 1.$$
When one of $\{N_i\}_{i=1}^N$ equals to $N$, without loss of generality, we assume that $N_1=N$. We have
{\allowdisplaybreaks
$$\frac{\binom{N}{N_1,N_2\cdots N_m}\theta_1^{N_1}\cdots\theta_m^{N_m}}{\binom{2N}{2N_1,2N_2\cdots 2N_m}\theta_1^{2N_1}\cdots\theta_m^{2N_m}}=\frac{1}{\theta_1^{N_1}\cdots\theta_m^{N_m}}>1.$$}%
When there are $k$ numbers out of $N_1, N_2, \cdots, N_m$ that equal to zero, we can simply remove these values and consider the case with alphabet size $m-k$. And this will yield the same result.
\subsection{Joint distributions induce by Laplace and KT probability assignments}\label{sec:joint-dist-lap-kt}
In this section, we re-derive the well-known joint distributions of Laplace and KT probability assignments for completeness. Firstly we consider the Laplace probability assignment. Recalled that we defined 
\begin{equation*}
 q_{\Lap}(X_{j+1} = i| X^j = x^j) = \frac{N_i(x^j) + 1 }{j + m} \quad \text{ for each } i \in [m].
\end{equation*}
Consider a sequence $x^N$ with the number of symbol $i$ given by $N_i$ for each $i\in [m]$. It follows that
\begin{align}
    q_{\Lap}(x^N)&=\prod_{j=0}^{N-1} q_{\Lap}(X_{j+1}=x_{j+1}|X^j=x^{j})\nonumber\\
    &=\prod_{j=0}^{N-1} \frac{N_{x_{j+1}}(x^j) + 1 }{j + m}\nonumber\\
    &\stackrel{(a)}{=}\frac{N_1!N_2!\cdots N_m!}{m(m+1)\cdots(m+N-1)},\label{eqn:laplace-joint}
\end{align}
To see (a), we notice that there are $N_i$'s of symbol $i$ appear in the sequence $x^N$. As a result, terms $1, 2,\ldots N_i$ will appear in the numerator for each $i\in [m]$.
By dividing $N!$ from both numerator and denominator in~\eqref{eqn:laplace-joint}, we finally get 
\[
q_{\Lap}(x^N) \defeq \frac{N_1!N_2!\cdots N_m!}{N!}\cdot \frac{1}{{N+m-1 \choose m-1}}.
\]

Now, we consider KT probability assignment. Recalled that
\begin{equation*}
    q_{\mathrm{KT}}(X_{j+1} = i| X^j = x^j) = \frac{N_i(x^j) + 1/2 }{j + m/2}  \quad \text{ for each } i \in [m].
\end{equation*}
Consider a sequence $x^N$ with the number of symbol $i$ given by $N_i$ for each $i\in [m]$. It follows that
\begin{align}
    q_{\mathrm{KT}}(x^N)&=\prod_{j=0}^{N-1} q_{\mathrm{KT}}(X_{j+1}=x_{j+1}|X^j=x^{j})\nonumber\\
    &=\prod_{j=0}^{N-1} \frac{N_i(x^j) + 1/2 }{j + m/2}\nonumber\\
    &=\prod_{j=0}^{N-1} \frac{2N_i(x^j) + 1 }{2j + m}\nonumber\\
    &\stackrel{(b)}{=}\frac{(2N_1-1)!!(2N_2-1)!!\cdots (2N_n-1)!!}{m(m+2)\cdots (m+2N-2)}\label{eqn:kt-joint}.
\end{align}
To see (b), we notice that there are $N_i$'s of symbol $i$ appear in the sequence $x^N$. As a result, terms $1, 3,\ldots 2N_i-1$ appear in the numerator for each $i\in [m]$.
\bibliographystyle{IEEEtran}
\bibliography{bibliography.bib}

\end{document}